\documentclass[11pt]{article}
\usepackage[usenames,dvipsnames,svgnames,table]{xcolor} 

\usepackage{enumitem} 
\usepackage{rotfloat} 
\usepackage{graphicx}
\usepackage{epsfig}
\usepackage{amssymb, amsmath}
\usepackage{verbatim}
\usepackage{natbib}
\usepackage{authblk}
\usepackage{kotex}
\usepackage{multirow}
\usepackage[colorlinks]{hyperref} 
\usepackage[colorinlistoftodos, textsize=scriptsize]{todonotes} 
\usepackage{subcaption} 

\newtheorem{theorem}{Theorem}[section]
\newtheorem{lemma}[theorem]{Lemma}

\newenvironment{proof}[1][Proof]{\begin{trivlist}
\item[\hskip \labelsep {\bfseries #1}]}{\end{trivlist}}




\newcommand{\bea}{\begin{eqnarray*}}
\newcommand{\eea}{\end{eqnarray*}}
\newcommand{\bean}{\begin{eqnarray}}
\newcommand{\eean}{\end{eqnarray}}

\newcommand{\bfX}{{\bf X}}
\newcommand{\bfY}{{\bf Y}}
\newcommand{\bfZ}{{\bf Z}}

\newcommand{\V}{{\rm Var}}

\newcommand{\sg}{\Sigma}
\newcommand{\what}{\widehat}


\newcommand{\lra}{\longrightarrow}


\newcommand{\calT}{\mathcal{T}}
\newcommand{\calU}{\mathcal{U}}

\newcommand{\bbP}{\mathbb{P}} 
\newcommand{\bbR}{\mathbb{R}}
\newcommand{\bbE}{\mathbb{E}}


\parindent=15pt
\textheight 22cm \textwidth  16.5cm \oddsidemargin 0mm \topmargin     5mm
\headheight    0mm

\begin{document}

\title{Bayesian Optimal Two-sample Tests in High-dimension}
\author[1]{Kyoungjae Lee}
\author[2]{Kisung You}
\author[3]{Lizhen Lin}
\affil[1]{Department of Statistics, Sungkyunkwan University}
\affil[2]{Department of Genetics, Yale University}
\affil[3]{Department of Applied and Computational Mathematics and Statistics, The University of Notre Dame}

\maketitle
\begin{abstract}
	We propose optimal Bayesian two-sample tests for testing equality of high-dimensional mean vectors and covariance matrices between two populations.
	In many applications including genomics and medical imaging,  it is natural  to assume that only a few entries of two mean vectors or covariance matrices are different. Many existing tests that rely on aggregating the difference between empirical means or covariance matrices are not optimal or  yield low power under such  setups.
	Motivated by this, we develop Bayesian two-sample tests employing a divide-and-conquer idea, which is powerful especially when the difference between two populations is sparse but large.
	The proposed two-sample tests manifest closed forms of Bayes factors and allow scalable computations even in high-dimensions.
	We prove that the proposed tests are consistent under relatively mild conditions compared to existing tests in the literature.
	Furthermore,  the testable regions from the proposed tests turn out to be optimal in terms of rates.
	Simulation studies show clear advantages of the proposed tests over other state-of-the-art methods in various scenarios. Our tests are also applied to the analysis of the gene expression data of two cancer data sets. 
\end{abstract}

Key words: Bayesian hypothesis test; 
Bayes factor consistency; 
high-dimensional covariance matrix; 
optimal high-dimensional tests.

\section{Introduction}\label{sec:intro}

Consider two samples of observations from high-dimensional normal models 
\begin{equation}
	\begin{split}\label{model_mean_cov}
		X_1,\ldots, X_{n_1} \mid \mu_1, \sg_1 &\,\,\overset{i.i.d.}{\sim}\,\, N_p(\mu_1, \sg_1) , \\
		Y_1,\ldots, Y_{n_2} \mid \mu_2, \sg_2 &\,\,\overset{i.i.d.}{\sim}\,\, N_p(\mu_2,  \sg_2), 
	\end{split}
\end{equation}
where $N_p(\mu,\sg)$ is the $p$-dimensional normal distribution with mean vector $\mu\in\bbR^p$ and covariance matrix $\sg \in \bbR^{p\times p}$, and the number of variables $p$ can increase to infinity as the sample sizes ($n_1$ and $n_2$) grow.  Given two samples of such observations, there is abundant interest in testing the homogeneity between two populations through testing  the equality of high-dimensional mean vectors or covariance matrices with applications in medical imaging, genetics and biology \citep{tsai2009multivariate,shen2011shrinkage}. Although there is an emerging literature on high-dimensional hypothesis testing, most of the literature has focused on proposing frequentist testing statistics with relatively little work on developing Bayesian hypothesis tests in particular for high-dimensional problems.  Bayesian tests, which typically are based on Bayes factor with appropriate design of prior distributions for the model under the null and the alternative operate differently from their frequentist counterparts, and there is independent interest in developing Bayesian testing approaches.   We add to the limited literature by developing powerful and scalable Bayesian high-dimensional tests for testing the equality of means and covariance matrices between two populations.

Our initial focus is on the two-sample mean test, where we assume $\sg_1= \sg_2$ and test whether $\mu_1=\mu_2$ in model \eqref{model_mean_cov}.
When $\mu_1 \neq \mu_2$, we call the nonzero elements in the  mean difference vector  $\mu_1-\mu_2 \in \bbR^p$ the {\it signals}.
It is well known that the types of tests with good power are different depending on the number and magnitude of the signals. 
From a frequentist perspective, \cite{bai1996effect} and \cite{srivastava2008test} proposed high-dimensional two-sample mean tests based on estimators of $\| A (\mu_1 - \mu_2)\|_2^2$ for some positive definite matrix $A \in \bbR^{p\times p}$.
We call these tests  $L_2$-type tests because their test statistics involve the $L_2$-norm.
It is known that $L_2$-type tests tend to have good power when there are dense signals, i.e., when  a large portion of $\mu_{1}-\mu_{2}$ is nonzero.
When there are many but small signals, $L_2$-type tests  tend to show better performance over other types of tests.

In many applications, however, it is more natural to assume rare signals, where only few entries of $\mu_{1}-\mu_{2} \in \bbR^p$ are nonzero.
Under the presence of rare but significant signals, it is well known that maximum-type tests tend to outperform $L_2$-type tests.
Here, a maximum-type test refers to a class of tests whose test statistic involves the maximum-norm.
\cite{cai2014two} proposed a consistent maximum-type test for high-dimensional two-sample mean test.
They standardized the difference between sample mean vectors using an estimated precision matrix based on either the constrained $\ell_1$-minimization for inverse matrix estimation (CLIME)  \citep{cai2011constrained} or the inverse of the adaptive thresholding estimator for a covariance matrix \citep{cai2011adaptive}.
Because their test statistics depend on an estimated precision matrix,  practical performance of the tests could be impacted by performance of the estimated precision matrix.

Besides the aforementioned papers, many  other interesting studies have been conducted for the two-sample testing setup. 
\cite{gregory2015two} proposed a two-sample mean test which bypasses the needs of the estimation of precision matrix and is  robust to highly unequal covariance matrices between two populations.
\cite{xu2016adaptive} proposed an adaptive two-sample mean test that retains high power against a wide range of alternatives.
\cite{cao2018two} developed a test for compositional data based on the centered log-ratio transformation.
Recently, \cite{wang2019robust} suggested a robust version of the maximum-type test for contaminated data. 

Our second focus is the two-sample covariance test of whether $\sg_1=\sg_2$ or not in model \eqref{model_mean_cov} under the assumption $\mu_1=\mu_2=0$.
In this case, we call the nonzero entries in $\sg_1 - \sg_2 \in \bbR^{p\times p}$ the signals.
Some frequentist tests have also been suggested in the literature for two-sample covariance in high-dimensional settings.
\cite{schott2007test} and \cite{li2012two} proposed to test equality of covariance matrices based on an estimator of $\| \sg_{1} - \sg_{2}\|_F^2$, whereas \cite{srivastava2010testing} suggested a test based on a consistent estimator of $tr(\sg_{1}^2)/\{tr(\sg_{1})\}^2 - tr(\sg_{2}^2)/\{tr(\sg_{2})\}^2$.
These tests can be categorized as $L_2$-type tests.
A two-sample covariance test based on super-diagonals was proposed by \cite{he2018high} whose test turned out to be more powerful than other existing tests when $\sg_{1}$ and $\sg_{2}$ have bandable structures.
However, the aforementioned tests target dense signals, where most of components of $\sg_1-\sg_2$ are nonzero.
Thus, they might be less powerful under the rare signals setting, where only a few entries in $\sg_{1}-\sg_{2}$ are nonzero.
To obtain good power when there are rare signals, \cite{cai2013two} proposed a maximum-type test for two-sample covariance test.
Similar to two-sample mean test in \cite{cai2014two}, \cite{cai2013two} standardized the difference between sample covariances and took the maximum over the standardized sample covariances.
Recently, \cite{zheng2017testing} combined the two tests in \cite{li2012two} and \cite{cai2013two} by taking weighted average to handle both dense and sparse alternatives.

Bayesian hypothesis testing has very different characteristics from those of its frequentist counterpart,  thus it is important and of an independent interest to develop Bayesian tests for the above hypothesis testing problems. However, up to our knowledge, no theoretically supported Bayesian method has been proposed for high-dimensional two-sample tests, except a recent work of \cite{zoh2018powerful}.
They proposed a Bayesian test for high-dimensional two-sample mean test by reducing the dimension of data via random projections.
They proved  consistency of the proposed Bayesian test under the joint distribution of {\it data and prior}, where the true mean vector is a random variable from the prior distribution.

In this paper, we develop scalable Bayesian two-sample tests supported by theoretical guarantees. 
Since rare signals can be more realistic in many applications, our goal is to develop a consistent Bayesian test achieving good power when there are rare signals.
To this end, we apply the maximum pairwise Bayes factor approach suggested by \cite{lee2018maximum}, which is essentially a divide-and-conquer idea.
Rather than comparing the whole mean vectors or covariance matrices at once, we divide them into smaller pieces and reformulate the original testing problem into a multiple testing problem. 
The proposed Bayesian tests turn out to be consistent under both null and alternative hypotheses, and especially, attain good power when there are rare but significant signals.
We prove consistency of the Bayesian tests in different context where the true parameter is fixed unknown quantity, which differentiates our results from those in \cite{zoh2018powerful}.
Furthermore, the proposed tests achieve theoretical and practical improvements compared to those in the existing tests, which will be stated later in more detail.

Although we employ the general idea of modularization by \cite{lee2018maximum}, the former work however only focuses on one-sample testing of the structure of covariance matrices.     
Substantial new developments have been made in this work which differs in terms of problem setup, prior choice, theory development as well as computational approaches.
The main contributions of this paper can be summarized as follows.
The proposed Bayesian tests are scalable with simple implementations that can be readily used by practitioners.
It accelerates the computation speed by circumventing computational issues such as inversion of a large matrix.
Furthermore, up to our knowledge, these are the first results on Bayes factor consistency in high-dimensional two-sample testings.
We prove that the proposed Bayesian tests are consistent under both null and alternative under mild conditions (Theorems \ref{thm:two_mean} and \ref{thm:two_cov}).
The proposed tests have the desired property of being  much more powerful than $L_2$-type tests under rare signals settings.
Besides the development of new Bayesian methods, our proposal also improves state-of-the-art methods theoretically and empirically. We show that the derived testable regions from the proposed tests are optimal in terms of rates (Theorem \ref{thm:two_cov_lowerbound}), and the required conditions for achieving the theoretical results are much weaker than those used in existing literature.
Furthermore, although there are existing frequentist maximum-type tests \citep{cai2014two,cai2013two}, the proposed tests in this paper  outperform the contenders in various settings.

The rest of paper is organized as follows.
Sections \ref{sec:mean} and \ref{sec:cov} present the proposed Bayesian two-sample tests for mean vectors and covariance matrices, respectively.
In Section \ref{sec:numerical}, the practical performance of the proposed methods is evaluated based on numerical study.
Concluding remarks are given in Section \ref{sec:disc}, and proofs of the main results are included in the supplementary material.

\section{Two-sample mean test}\label{sec:mean}

\subsection{Notation}\label{subsec:notation}

For any given constants $a$ and $b$, we denote the maximum and minimum between the two by $a\vee b$ and $a\wedge b$.
For a vector $x=(x_1,\ldots, x_p)^T$ and a positive integer $q$, we denote the vector $\ell_q$-norm as $\|x\|_q = \big(\sum_{j=1}^p x_j^q \big)^{1/q}$.
For any positive sequences $a_n$ and $b_n$, $a_n\ll b_n$, or equivalently $a_n = o(b_n)$, means that $a_n /b_n \lra 0$ as $n\to\infty$.
We denote $a_n= O(b_n)$ if there exists a constant $C>0$ such that $a_n/b_n \le C$ for all large $n$, and $a_n\asymp b_n$ means that $a_n=O(b_n)$ and $b_n = O(a_n)$.
For a given matrix $A\in \bbR^{p\times p}$, we denote the Frobenius norm $\|A\|_F = \big( \sum_{i=1}^p \sum_{j=1}^p a_{ij}^2 \big)^{1/2}$, the matrix $\ell_1$-norm  $\|A\|_1 = \sup_{x\in \bbR^p , \|x\|_1=1 } \|Ax\|_1$, the spectral norm $\|A\| = \sup_{x\in \bbR^p , \|x\|_2=1 }\|Ax\|_2$, and the matrix maximum norm $\|A\|_{\max} = \max_{1\le i\le j\le p}|a_{ij}| $.
The maximum and minimum eigenvalues of a matrix $A$ are denoted by $\lambda_{\max}(A)$ and $\lambda_{\min}(A)$, respectively.
For given positive numbers $a$ and $b$, $IG(a,b)$ denotes the inverse-gamma distribution with shape parameter $a$ and rate parameter $b$.

Throughout the paper, we assume that $0<\epsilon_0 <1$, $C_1 = 1+\epsilon, C_2 = 2 + \epsilon$ and $C_3= 3 + \epsilon$ for some small constant $\epsilon>0$ in that $C_1, C_2$ and $C_3$ are arbitrarily close to $1,2$ and $3$, respectively.

\subsection{Maximum pairwise Bayes factor for two-sample mean test}\label{subsec:mxPBF_mean}

Suppose that we observe the data from two populations
\begin{equation}
	\begin{split}\label{two_mean}
		X_1,\ldots, X_{n_1} \mid \mu_1, \sg &\,\,\overset{i.i.d.}{\sim}\,\, N_p (\mu_1, \sg) , \\
		Y_1,\ldots, Y_{n_2} \mid \mu_2, \sg &\,\,\overset{i.i.d.}{\sim}\,\, N_p (\mu_2, \sg),
	\end{split}
\end{equation}
where $\mu_1,\mu_2 \in \bbR^p$ and $\sg$ is a $p\times p$ covariance matrix.
Let $\bfX_{n_1} = (X_1,\ldots,X_{n_1})^T \in \bbR^{n_1 \times p}$ and $\bfY_{n_2}=(Y_1,\ldots,Y_{n_2})^T\in \bbR^{n_2 \times p}$ be the data matrices for each population.
We are interested in the testing problem
\bean\label{two_mean_test}
H_0: \mu_1 = \mu_2 \quad\text{ versus }\quad H_1: \mu_1 \neq \mu_2 .
\eean
The goal is to test the homogeneity between two populations based on underlying trends.

Bayesian hypothesis tests are typically based on Bayes factors. To construct a Bayes factor for two-sample mean test, marginal likelihoods should be calculated based on priors for each hypothesis.
Using normal priors for mean vectors and the Jeffreys' prior for a covariance matrix, which corresponds to a default choice, the resulting Bayes factor can be calculated in a closed form when $1<p<n-2$.
See \cite{zoh2018powerful} for the details.
However, the Bayes factor under such priors involves the inverse of a pooled sample covariance matrix, which prevents one from using when $p \ge n-2$.
\cite{zoh2018powerful} suggested projecting the data to a lower-dimensional subspace to reduce the dimensionality.

In this paper, we apply the maximum pairwise Bayes factor (mxPBF) approach suggested by \cite{lee2018maximum}.
Specifically, we compare two mean vectors by comparing them element-by-element. For a given integer $1\le j \le p$, let $\tilde{X}_j = (X_{1j},\ldots, X_{n_1 j})^T$ and $\tilde{Y}_j = (Y_{1j},\ldots, Y_{n_2 j})^T$ be the $j$th columns of $\bfX_{n_1}$ and $\bfY_{n_2}$, respectively.
From model \eqref{two_mean}, we have the following marginal models
\bea
\tilde{X}_j \mid \mu_{1j}, \sigma_{jj} &\sim& N_{n_1}(\mu_{1j}1_{n_1}, \sigma_{jj}I_{n_1} ) , \\
\tilde{Y}_j \mid \mu_{2j}, \sigma_{jj} &\sim& N_{n_2}(\mu_{2j}1_{n_2}, \sigma_{jj}I_{n_2} ) ,
\eea
where $\mu_k= (\mu_{k1},\ldots,\mu_{kp})^T$ for $k=1,2$, $\sg = (\sigma_{ij})$ and $1_q = (1,\ldots, 1)^T \in \bbR^q$.
The hypothesis testing problem \eqref{two_mean_test} can be reformulated as
\bea
H_{0j}: \mu_{1j} = \mu_{2j} \quad\text{ versus }\quad H_{1j}: \mu_{1j} \neq \mu_{2j} ,
\eea
in the sense that $H_0$ is true if and only if $H_{0j}$ is true for all $j=1,\ldots,p$.
Thus, we will first construct Bayesian tests for each testing problem $H_{0j}$ versus $H_{1j}$ and calculate {\it pairwise Bayes factors} (PBFs) based on $(\tilde{X}_j, \tilde{Y}_j)$ for $j=1,\ldots, p$.
For a given $1\le j \le p$, we suggest the following prior $\pi_{0j}(\mu_j , \sigma_{jj})$ under $H_{0j}$,
\bea
\mu_j = \mu_{1j} = \mu_{2j} \mid \sigma_{jj} &\sim& N \Big( \bar{Z}_j, \, \frac{\sigma_{jj}}{n\gamma} \,\Big) , \\
\pi(\sigma_{jj}) &\propto& \sigma_{jj}^{-1} ,
\eea
and the following prior $\pi_{1j}(\mu_{1j}, \mu_{2j} , \sigma_{jj})$ under $H_{1j}$,
\bea
\mu_{1j} \mid \sigma_{jj} &\sim& N \Big( \bar{X}_j, \, \frac{\sigma_{jj}}{n_1\gamma} \,\Big) , \\
\mu_{2j} \mid \sigma_{jj} &\sim& N \Big( \bar{Y}_j, \, \frac{\sigma_{jj}}{n_2\gamma} \,\Big) , \\
\pi(\sigma_{jj}) &\propto& \sigma_{jj}^{-1},
\eea
where $\tilde{Z}_j = (\tilde{X}_j^T, \tilde{Y}_j^T)^T$, $\bar{Z}_j = n^{-1}\sum_{i=1}^n Z_{ij}$, $\bar{X}_j = {n_1}^{-1}\sum_{i=1}^{n_1} X_{ij}$, $\bar{Y}_j = {n_2}^{-1}\sum_{i=1}^{n_2} Y_{ij}$, $n=n_1+n_2$ and $\gamma =  (n\vee p)^{-\alpha}$.
Throughout this paper, we consider $\alpha$ as a fixed positive constant.

For any vector $v$, define the projection matrix $H_v = v (v^T v)^{-1} v^T$.
Let $\what{\sigma}^2_{Z_j} = n^{-1} \tilde{Z}_j^T (I_n -  H_{1_n}) \tilde{Z}_j$, $\what{\sigma}^2_{X_j} = {n_1}^{-1} \tilde{X}_j^T (I_{n_1} -  H_{1_{n_1}}) \tilde{X}_j$ and $\what{\sigma}^2_{Y_j} = {n_2}^{-1} \tilde{Y}_j^T (I_{n_2} -  H_{1_{n_2}}) \tilde{Y}_j$.
Then, the resulting log PBF is
\begin{equation}
	\begin{split}\label{twomean_PBF}
		\log B_{10} (\tilde{X}_j, \tilde{Y}_j) 
		\,\,&:=\,\,  \log  \frac{p(\tilde{X}_j, \tilde{Y}_j \mid H_{1j})}{p(\tilde{X}_j, \tilde{Y}_j \mid H_{0j})}    \\
		\,\,&=\,\, \frac{1}{2}\log \Big(\frac{\gamma}{1+\gamma} \Big) + \frac{n}{2} \log \Bigg( \frac{ n\what{\sigma}^2_{Z_j} }{ n_1 \what{\sigma}^2_{X_j} + n_2 \what{\sigma}^2_{Y_j} }  \Bigg),
	\end{split}	
\end{equation}
where 
\begin{align*}
	p(\tilde{X}_j, \tilde{Y}_j \mid H_{0j}) &= \iint p(\tilde{X}_j \mid \mu_{j},\sigma_{jj}, H_{0j}) p(\tilde{Y}_j \mid \mu_{j},\sigma_{jj}, H_{0j}) \pi_{0j}(\mu_{j},\sigma_{jj}) d \mu_{j} d\sigma_{jj}   , \\
	p(\tilde{X}_j, \tilde{Y}_j \mid H_{1j}) &= \iiint p(\tilde{X}_j \mid \mu_{1j},\sigma_{jj}, H_{1j}) p(\tilde{Y}_j \mid \mu_{2j},\sigma_{jj}, H_{1j}) \pi_{1j}(\mu_{1j},\mu_{2j},\sigma_{jj}) d \mu_{1j} d\mu_{2j} d\sigma_{jj} .	
\end{align*}
To aggregate PBFs for all $j=1,\ldots, p$, we define the mxPBF as
\bean\label{mxPBF_mean}
B_{\max,10}^{\mu}(\bfX_{n_1}, \bfY_{n_2}) &:=& \max_{1\le j\le p} B_{10}(\tilde{X}_j, \tilde{Y}_j) .
\eean
Then one can conduct a Bayesian test by considering the mxPBF as a usual Bayes factor: for a given threshold $C_{\rm th}>0$, we support $H_1:\mu_1 \neq \mu_2$ if $B_{\max,10}^{\mu}(\bfX_{n_1}, \bfY_{n_2})  > C_{\rm th}$.
It is easy to see that $B_{\max,10}^{\mu}(\bfX_{n_1}, \bfY_{n_2})  > C_{\rm th}$ if and only if $B_{10}(\tilde{X}_j, \tilde{Y}_j)  > C_{\rm th}$ for some $1\le j\le p$.
Thus, a Bayesian test based on the mxPBF supports $H_1:\mu_1\neq \mu_2$ if and only if there is at least one strong evidence in favor of  $H_{1j}:\mu_{1j}\neq \mu_{2j}$.

\subsection{Bayes factor consistency}\label{subsec:mxPBF_consistency_mean}

A mxPBF is said to be consistent if it (i) converges to zero under $H_0$ and (ii) diverges to infinity under $H_1$ in probability.
Let $\mu_{0,1} \in \bbR^p$ and $\mu_{0,2} \in \bbR^p$ be true mean vectors for each population, respectively, and $\Sigma_0 = (\sigma_{0,ij}) \in \bbR^{p\times p}$ be the true covariance matrix.
Theorem \ref{thm:two_mean} shows that the mxPBF is consistent under mild conditions.

\begin{theorem}\label{thm:two_mean}
	Consider model \eqref{two_mean} and the two-sample mean test $H_0: \mu_1=\mu_2$ versus $H_1:\mu_1\neq \mu_2$.
	Assume that $\log p \le n \epsilon_0$ and
	\bean
	\alpha &>&   \frac{2(1 + \epsilon_0)}{1 - 3 \sqrt{C_1 \epsilon_0}}   \label{alpha_mean}
	\eean
	for some $C_1>1$ and $0<\epsilon_0<1$.
	Then, the mxPBF \eqref{twomean_PBF} is consistent under $H_0$: for some constant $c>0$,
	\bea
	B_{\max,10}^{\mu}(\bfX_{n_1}, \bfY_{n_2})  &=&O_p\{(n\vee p)^{-c}\}  \quad \text{ under $H_0$} .
	\eea
	When $H_1$ is true, assume that there is at least one of indices $1\le j\le p$ satisfying
	\bean
	\frac{n_1 n_2(\mu_{0,1j}-\mu_{0,2j})^2}{n^2 \sigma_{0,jj}} \,\ge\,  \Big[ \sqrt{2C_1} + \sqrt{2C_1 + \alpha C_1 \{ 1+ (1+ 8 C_1)\epsilon_0 \} } \Big]^2 \frac{\log(n\vee p)}{n} . \label{betamin_mean}
	\eean
	Then, the mxPBF is also consistent under $H_1$: for some constant $c'>0$,
	\bea
	\big\{B_{\max,10}^{\mu}(\bfX_{n_1}, \bfY_{n_2})\big\}^{-1}  &=&O_p\{(n\vee p)^{-c'}\}     \quad \text{ under $H_1$}.
	\eea
\end{theorem}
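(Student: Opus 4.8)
The plan is to turn each pairwise log Bayes factor into a monotone transform of one scalar statistic, and then to control that statistic uniformly in $j$ for the $H_0$ direction and at one favourable coordinate for the $H_1$ direction, via chi-square tail bounds and a union bound. \textbf{Step 1 (reduction).} Since $H_{1_n}\tilde Z_j=\bar Z_j 1_n$, the ANOVA identity $\sum_i(X_{ij}-\bar Z_j)^2=\sum_i(X_{ij}-\bar X_j)^2+n_1(\bar X_j-\bar Z_j)^2$ (and its $Y$-analogue), together with $\bar X_j-\bar Z_j=\tfrac{n_2}{n}(\bar X_j-\bar Y_j)$ and $\bar Y_j-\bar Z_j=-\tfrac{n_1}{n}(\bar X_j-\bar Y_j)$, give $n\what\sigma^2_{Z_j}=n_1\what\sigma^2_{X_j}+n_2\what\sigma^2_{Y_j}+\tfrac{n_1 n_2}{n}(\bar X_j-\bar Y_j)^2$. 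Hence \eqref{twomean_PBF} becomes $\log B_{10}(\tilde X_j,\tilde Y_j)=\tfrac12\log\tfrac{\gamma}{1+\gamma}+\tfrac n2\log(1+T_j)$, where $T_j:=(n_1 n_2/n)(\bar X_j-\bar Y_j)^2/(n_1\what\sigma^2_{X_j}+n_2\what\sigma^2_{Y_j})$; since $\gamma=(n\vee p)^{-\alpha}\to0$ the first term equals $-\tfrac\alpha2\log(n\vee p)+o(1)$, so it suffices to control $\max_j\tfrac n2\log(1+T_j)$.

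\textbf{Step 2 ($H_0$).} Under the true model, coordinatewise, $(n_1 n_2/n)(\bar X_j-\bar Y_j)^2/\sigma_{0,jj}\sim\chi^2_1(\lambda_j)$ with $\lambda_j=(n_1 n_2/n)(\mu_{0,1j}-\mu_{0,2j})^2/\sigma_{0,jj}$, independently of $(n_1\what\sigma^2_{X_j}+n_2\what\sigma^2_{Y_j})/\sigma_{0,jj}\sim\chi^2_{n-2}$, so $T_j$ is a scaled (non-central) $F$ and $\sigma_{0,jj}$ cancels. With $\lambda_j=0$, combining the upper tail of $\chi^2_1$, the standard lower tail of $\chi^2_{n-2}$, and a union bound over $j=1,\dots,p$ --- where the assumption $\log p\le n\epsilon_0$ is exactly what makes the $\chi^2_{n-2}$ lower-deviation probability $\exp(-\Theta(nt^2))$ beat $p$ once $t\asymp\sqrt{C_1\epsilon_0}$ --- yields, with probability tending to one, $\max_{1\le j\le p}T_j\le \frac{2(1+o(1))\log(n\vee p)}{n(1-3\sqrt{C_1\epsilon_0}+o(1))}$, the $3$ being a conservative bound on the combined numerator/denominator fluctuations. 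Then $\tfrac n2\log(1+T_j)\le\tfrac n2 T_j\le(1+o(1))\log(n\vee p)/(1-3\sqrt{C_1\epsilon_0})$, and plugging into Step 1 and invoking \eqref{alpha_mean} gives $\log B_{\max,10}^{\mu}\le-(\tfrac\alpha2-\tfrac{1+\epsilon_0}{1-3\sqrt{C_1\epsilon_0}}+o(1))\log(n\vee p)$, i.e.\ $B_{\max,10}^{\mu}=O_p\{(n\vee p)^{-c}\}$.

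\textbf{Step 3 ($H_1$).} Writing $D$ for the bracketed quantity in \eqref{betamin_mean}, the inequality \eqref{betamin_mean} reads $\lambda_{j^*}\ge D^2\log(n\vee p)$ for the relevant index $j^*$, so $\lambda_{j^*}\to\infty$. A single non-central $\chi^2_1(\lambda)$ concentrates near $\lambda$ with $O_p(\sqrt\lambda)$ fluctuations and a single $\chi^2_{n-2}$ near $n-2$, so with probability tending to one $T_{j^*}\ge\lambda_{j^*}(1-o_p(1))/n\ge D^2(1-o_p(1))\log(n\vee p)/n$. It remains to show $\tfrac n2\log(1+T_{j^*})>(\tfrac\alpha2+c')\log(n\vee p)$, equivalently $(1+T_{j^*})^n\ge(n\vee p)^{\alpha+2c'}$, for some $c'>0$. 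Using $\log(1+x)\ge x-x^2/2$ on the low-signal range, $\log(1+x)\ge\log 2$ on the high-signal range, and $\log(n\vee p)\le n\epsilon_0$, this reduces to an inequality of the type $D^2(1-\tfrac12 D^2\epsilon_0)>\alpha$ (up to $o(1)$ slack); since $D^2\ge 2C_1+\alpha C_1\{1+(1+8C_1)\epsilon_0\}>\alpha C_1>\alpha$, the additive $2C_1$ and the inflation factor $C_1\{1+(1+8C_1)\epsilon_0\}$ on $\alpha$ inside the inner square root of $D$ are arranged to supply exactly this margin, whence $\{B_{\max,10}^{\mu}\}^{-1}=O_p\{(n\vee p)^{-c'}\}$.

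\textbf{Main obstacle.} Aside from Step 1, the whole argument is a union bound over routine Gaussian/chi-square tail estimates, so the real difficulty is the constant bookkeeping. For $H_0$ one must track precisely how $\epsilon_0$ propagates into the uniform bound on $T_j$ through $\log p\le n\epsilon_0$, so that the stated threshold $\alpha>2(1+\epsilon_0)/(1-3\sqrt{C_1\epsilon_0})$ is the one required. For $H_1$ the subtlety is the concavity of $x\mapsto\log(1+x)$: the Bayes factor carries an $(n\vee p)^{-\alpha}$ prior penalty, so the per-coordinate evidence $\tfrac n2\log(1+T_{j^*})$ must beat $\tfrac\alpha2\log(n\vee p)$, and because $\log(1+x)<x$ the signal-strength constant $D^2$ in \eqref{betamin_mean} must be inflated above $\alpha$ in exactly the way $[\sqrt{2C_1}+\sqrt{2C_1+\alpha C_1\{1+(1+8C_1)\epsilon_0\}}]^2$ encodes; verifying that this particular constant is enough --- especially when $\log(n\vee p)$ is comparable to $n$ --- is the crux.
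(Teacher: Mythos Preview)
Your proposal is essentially the paper's proof: the same ANOVA reduction to the statistic $T_j$, the same identification of the numerator and denominator as (non-central) $\chi^2_1$ and $\chi^2_{n-2}$, the same union bound for $H_0$ (using $\log(1+x)\le x$), and the same single-coordinate argument for $H_1$. Your constant bookkeeping under $H_0$ matches the paper's almost line for line.

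The one place where you diverge is Step~3. Under $H_1$ the paper does \emph{not} use $\log(1+x)\ge x-x^2/2$ plus a high-signal/low-signal split; instead it uses the single inequality $\log(1+x)\ge x/(1+x)$, which rewrites as
\[
\frac{n}{2}\log\Big(\frac{n\what\sigma^2_{Z_j}}{n_1\what\sigma^2_{X_j}+n_2\what\sigma^2_{Y_j}}\Big)\;\ge\;\frac{n}{2}\cdot\frac{n\what\sigma^2_{Z_j}-n_1\what\sigma^2_{X_j}-n_2\what\sigma^2_{Y_j}}{n\what\sigma^2_{Z_j}}.
\]
This is cleaner because the right-hand side is monotone in the numerator, so one needs only a \emph{lower} tail bound on $\chi^2_1(\lambda_n)$ and an \emph{upper} bound on $n\what\sigma^2_{Z_j}$ (handled by restricting to $\lambda_n\le \epsilon_0 n$, the complementary case being disposed of separately). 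The constant $K=D^2$ in \eqref{betamin_mean} is then obtained by solving $K-2\sqrt{2C_1 K}\ge \alpha C_1\{1+(1+8C_1)\epsilon_0\}$, which yields exactly $\sqrt K=\sqrt{2C_1}+\sqrt{2C_1+\alpha C_1\{1+(1+8C_1)\epsilon_0\}}$. Your route via $x-x^2/2$ would need both upper and lower control of $T_{j^*}$ and does not transparently reproduce this specific constant; the inequality you sketch, $D^2(1-\tfrac12 D^2\epsilon_0)>\alpha$, is not obviously equivalent. So your strategy is correct, but to nail the precise constant in \eqref{betamin_mean} you should switch to the paper's $x/(1+x)$ bound.
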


It is worthwhile to compare our result to those of the existing literature.
As mentioned earlier, the test statistic of \cite{cai2014two} depends on an estimated precision matrix that some conditions for consistent estimation of the precision matrix are required.
For example, it was assumed that $\Omega_0$ has bounded eigenvalues and absolute correlations of $X_i$, $Y_i$, $\Omega_0 X_i$ and $\Omega_0 Y_i$ are bounded away from 1, where $\Omega_0$ is the true precision matrix.
Furthermore, $\Omega_0$ is assumed to satisfy $\|\Omega_0\|_1^2 = o( \sqrt{n/(\log p)^3} )$ or stronger sparsity assumption, which essentially means that a large amount of entries in $\Omega_0$ is sufficiently small.
They also assumed that $\mu_{0,1}-\mu_{0,2}$ has at most $p^r$ nonzero entries, where $r \in [0,1/4)$.

On the other hand, theoretical results in Theorem \ref{thm:two_mean} do not require any condition on the true precision matrix and allow the number of nonzero entries in $\mu_{0,1}-\mu_{0,2}$ to have the same order with $p$.
Therefore, we suspect that the mxPBF would perform better than the maximum-type test in \cite{cai2014two} when these conditions are violated.
Indeed, we find empirical evidences for this conjecture in our simulation study in Section \ref{subsec:sim_mean}.

Recently, \cite{zoh2018powerful} proposed a Bayesian two-sample mean test and proved consistency of the Bayes factor in high-dimensional settings.
They used random projections to reduce the dimensionality of the data and assumed that the reduced dimension has the same order with $(n_1\wedge n_2)$.
To conduct a Bayesian test, a single random projection matrix was considered, which can lead to different results depending on the generated projection matrix.
Furthermore, under the true alternative,  no lower bound condition of $\mu_{0,1}- \mu_{0,2}$ was provided to ensure consistency, like condition \eqref{betamin_mean}. 
They assumed that $\mu_1 - \mu_2$ is a random vector under $H_1:\mu_1 \neq \mu_2$ rather than considering a fixed true value  $\mu_{0,1}- \mu_{0,2}$, which differentiates our results from those in \cite{zoh2018powerful}.

We note here that, by Theorem 3  in \cite{cai2014two}, condition \eqref{betamin_mean} is rate-optimal to guarantee the existence of a consistent test when $\log n = O( \log p )$.
Thus, the proposed mxPBF-based test provides an optimal testable region with respect to the maximum norm.
\cite{cai2014two} assumed the condition $\max_j (\mu_{0,1j}-\mu_{0,2j})^2 \ge C \log p/n$ for some constant $C>0$, which is similar to \eqref{betamin_mean}.

\section{Two-sample covariance test}\label{sec:cov}
In this section, we propose  Bayesian  two-sample tests for testing the equity of high-dimensional covariance matrices and consider their theoretical properties in terms of Bayes factor consistency and optimality of the testing regions. 

\subsection{Maximum pairwise Bayes factor for two-sample covariance test}\label{subsec:mxPBF_cov}

Suppose that we observe the data from two populations
\begin{equation}
	\begin{split}\label{two_cov}
		X_1,\ldots, X_{n_1} \mid \sg_1 &\,\,\overset{i.i.d.}{\sim}\,\, N_p (0, \sg_1) , \\
		Y_1,\ldots, Y_{n_2} \mid \sg_2 &\,\,\overset{i.i.d.}{\sim}\,\, N_p (0, \sg_2),
	\end{split}
\end{equation}
where $\sg_1 = (\sigma_{1,ij})$ and $\sg_2 = (\sigma_{2,ij})$ are $p\times p$ covariance matrices.
In this section, we consider the  testing problem
\bean\label{two_cov_test}
H_0:\sg_1=\sg_2 \quad\text{ versus }\quad H_1:\sg_1\neq \sg_2.
\eean
To apply the mxPBF approach, we need to divide the comparison of two covariance matrices into smaller problems. Among various options for that, we use the reparametrization trick used in \cite{lee2018maximum}.
Specifically, for a given pair $(i,j)$ with $1\le i\neq j \le p$, \eqref{two_cov} induces the conditional distributions
\begin{equation}
	\begin{split}\label{cond_dist_XY}
		\tilde{X}_i \mid \tilde{X}_j, a_{1,ij}, \tau_{1,ij} \,\,&\sim\,\, N_{n_1}\big( a_{1,ij}\tilde{X}_j, \tau_{1,ij}I_{n_1}  \big), \\
		\tilde{Y}_i \mid \tilde{Y}_j, a_{2,ij}, \tau_{2,ij} \,\,&\sim\,\, N_{n_2}\big( a_{2,ij}\tilde{Y}_j, \tau_{2,ij}I_{n_2}  \big),
	\end{split}	
\end{equation}
where $a_{k,ij}= \sigma_{k,ij}/\sigma_{k,jj}$,  $\tau_{k,ij} = \sigma_{k,ii}(1- \rho_{k,ij}^2)$ and $\rho_{k,ij} = \sigma_{k,ij}/ (\sigma_{k,ii} \sigma_{k,jj} )^{1/2}$  for $k=1,2$.
The hypothesis testing problem \eqref{two_cov_test} can be reformulated as 
\bean\label{two_cov_ij_pair}
H_{0,ij}: a_{1,ij}= a_{2,ij} \text{ and } \tau_{1,ij}=\tau_{2,ij} \quad\text{ versus }\quad H_{1,ij}: \text{ not } H_{0,ij},
\eean
in the sense that $H_0$ is true if and only if $H_{0,ij}$ is true for all pairs $(i,j)$, $1\le i\neq j\le p$.

To construct a Bayesian test for testing \eqref{two_cov_ij_pair}, we suggest the following prior distribution $\pi_{0,ij}(a_{ij}, \tau_{ij})$ under $H_{0,ij}$,
\bea
a_{ij}=a_{1,ij}=a_{2,ij}\mid \tau_{ij} &\sim& N\Big( \what{a}_{ij}, \frac{\tau_{ij}}{\gamma \|\tilde{Z}_j\|_2^2 } \Big), \\
\tau_{ij} = \tau_{1,ij} = \tau_{2,ij} &\sim& IG(a_0, b_{0,ij}) ,
\eea
and the prior $\pi_{1,ij}(a_{1,ij}, a_{2,ij},\tau_{1,ij}, \tau_{2,ij})$ under $H_{1,ij}$,
\bea
a_{1,ij}\mid \tau_{1,ij} &\sim& N\Big(\what{a}_{1,ij}, \frac{\tau_{1,ij}}{\gamma\|\tilde{X}_j\|_2^2} \Big), \,\quad
a_{2,ij}\mid \tau_{2,ij} \,\,\sim\,\, N\Big(\what{a}_{2,ij} , \frac{\tau_{2,ij}}{\gamma\|\tilde{Y}_j\|_2^2} \Big), \\
\tau_{1,ij} &\sim& IG(a_0, b_{01,ij}), \,\,\quad \tau_{2,ij} \,\,\sim\,\, IG(a_0, b_{02,ij}) ,
\eea
where $a_0, b_{0,ij}, b_{01,ij}$ and $b_{02,ij}$ are positive constants, $\gamma = (n\vee p)^{-\alpha}$, $\what{a}_{ij} = \tilde{Z}_i^T \tilde{Z}_j / \|\tilde{Z}_j\|_2^2$, $\what{a}_{1,ij} = \tilde{X}_i^T \tilde{X}_j / \|\tilde{X}_j\|_2^2$ and $\what{a}_{2,ij} = \tilde{Y}_i^T \tilde{Y}_j / \|\tilde{Y}_j\|_2^2$.
Let $\what{\tau}_{ij} = n^{-1}\tilde{Z}_i^T(I_n - H_{\tilde{Z}_j} )\tilde{Z}_i$, $\what{\tau}_{1,ij} = n_1^{-1}\tilde{X}_i^T(I_{n_1} -  H_{\tilde{X}_j} )\tilde{X}_i$ and  $\what{\tau}_{2,ij} = n_2^{-1}\tilde{Y}_i^T(I_{n_2} - H_{\tilde{Y}_j} )\tilde{Y}_i$.
The resulting log PBF is given by
\bean
&& \log B_{10}(\tilde{X}_i, \tilde{Y}_i, \tilde{X}_j,\tilde{Y}_j) \label{B10_two_cov} \\
&:=& \log \frac{p(\tilde{X}_i, \tilde{Y}_i \mid \tilde{X}_j, \tilde{Y}_j, H_{1,ij})}{p(\tilde{X}_i, \tilde{Y}_i \mid \tilde{X}_j, \tilde{Y}_j, H_{0,ij})} \nonumber \\
&=& \frac{1}{2}\log \Big(\frac{\gamma}{1+\gamma} \Big) + \log \Gamma\Big(\frac{n_1}{2}+a_0 \Big) + \log \Gamma\Big(\frac{n_2}{2}+a_0 \Big) - \log \Gamma\Big(\frac{n}{2}+a_0 \Big) + \log \Big(\frac{b_{01,ij}^{a_0} b_{02,ij}^{a_0}}{b_{0,ij}^{a_0}\Gamma(a_0)} \Big) \nonumber \\
&-& \Big(\frac{n_1}{2}+a_0 \Big) \log \big(b_{01,ij} +\frac{n_1}{2}\what{\tau}_{1,ij} \big) - \Big(\frac{n_2}{2}+a_0 \Big) \log \big(b_{02,ij} +\frac{n_2}{2}\what{\tau}_{2,ij} \big) \label{leading1_hat} \\
&+& \Big(\frac{n}{2}+a_0 \Big) \log \big(b_{0,ij} + \frac{n}{2}\what{\tau}_{ij} \big), \label{leading2_hat}
\eean
where 
\begin{align*}
	p(\tilde{X}_i, \tilde{Y}_i \mid \tilde{X}_j, \tilde{Y}_j, H_{0,ij}) &= \iint p(\tilde{X}_i \mid \tilde{X}_j, a_{ij},\tau_{ij}, H_{0,ij}) p(\tilde{Y}_i \mid \tilde{Y}_j, a_{ij},\tau_{ij}, H_{0,ij}) \pi_{0,ij}(a_{ij},\tau_{ij}) d a_{ij} d\tau_{ij}   , \\
	p(\tilde{X}_i, \tilde{Y}_i \mid \tilde{X}_j, \tilde{Y}_j, H_{1,ij}) &= \iiiint p(\tilde{X}_i \mid \tilde{X}_j, a_{1,ij},\tau_{1,ij}, H_{1,ij}) p(\tilde{Y}_i \mid \tilde{Y}_j, a_{2,ij},\tau_{2,ij}, H_{1,ij}) \\
	& \quad\quad\quad\quad\quad\quad \times\,\, \pi_{1,ij}(a_{1,ij},a_{2,ij},\tau_{1,ij},\tau_{2,ij}) d a_{1,ij} da_{2,ij}d\tau_{1,ij}d \tau_{2,ij} .
\end{align*}
Then, the mxPBF for two-sample covariance test is given by
\bean\label{mxPBF_two_cov}
B_{\max,10}^{\sg}(\bfX_{n_1}, \bfY_{n_2}) &:=& \max_{i\neq j} B_{10}(\tilde{X}_i, \tilde{Y}_i, \tilde{X}_j,\tilde{Y}_j).
\eean
Similar to the two-sample mean test, one can conduct a Bayesian test by supporting $H_1:\sg_1\neq \sg_2$ if $B_{\max,10}^{\sg}(\bfX_{n_1}, \bfY_{n_2}) > C_{\rm th}$ for a given threshold $C_{\rm th}>0$.

\subsection{Bayes factor consistency}\label{subsec:mxPBF_consistency_cov}

In this section, we show that the mxPBF in \eqref{mxPBF_two_cov} is consistent for high-dimensional two-sample covariance test.
We first introduce sufficient conditions that guarantee consistency of the mxPBF.
The first condition, (A1), roughly means that $p = O (\exp(n^c))$ for some $0<c<1$.
\begin{itemize}
	\item[(A1)] $\epsilon_{0k} := \log(n\vee p)/n_k = o(1)$ for $k=1,2$.
\end{itemize}

When $H_0: \sg_1= \sg_2$ is true, we denote $\sg_0$ as the true covariance matrix. 
Furthermore, we define $a_{0,ij} = \sigma_{0,ij}/\sigma_{0,jj}, \tau_{0,ij}= \sigma_{0,ii}(1- \rho_{0,ij}^2)$, and $R_0=(\rho_{0,ij})$ is a correlation matrix.
Condition (A2) is a sufficient condition for consistency under the null $H_0: \sg_1= \sg_2$.
\begin{itemize}
	\item[(A2)] $\min_{i\neq j}\tau_{0,ij} \gg \{\log (n\vee p)\}^{-1}$.
\end{itemize}
Condition (A2) is satisfied if $\min_{1\le i\le p} \sigma_{0,ii} > \epsilon$ and $\max_{i\neq j} \rho_{0,ij}^2 < 1-\epsilon$ for some small constant $\epsilon>0$.
However, in fact, condition (A2) allows more general cases where possibly $\sigma_{0,ii} \to 0$ and $\rho_{0,ij}^2 \to 1$ as $p\to \infty$ at certain rates.

When $H_1:\sg_1\neq \sg_2$ is true, we denote $\sg_{01}=(\sigma_{01,ij})$ and $\sg_{02}=(\sigma_{02,ij})$ as the true covariance matrices for each population.
Furthermore, we define $a_{0k,ij} = \sigma_{0k,ij}/\sigma_{0k,jj}, \tau_{0k,ij}= \sigma_{0k,ii}(1- \rho_{0k,ij}^2)$ and $R_{0k}=(\rho_{0k,ij})$ is a correlation matrix for $k=1,2$.
Under the alternative $H_1:\sg_1\neq \sg_2$, we assume that $(\sg_{01}, \sg_{02})$ satisfies condition (A3) or (A3$^\star$):
\begin{itemize}
	\item[(A3)] 
	There exists a pair $(i,j)$ with $i\neq j$ such that
	\bea
	\{\log (n\vee p)\}^{-1} \ll \tau_{01,ij}\wedge \tau_{02,ij} &\le& \tau_{01,ij}\vee \tau_{02,ij} \ll (n\vee p),
	\eea
	satisfying either
	\bea
	\frac{\tau_{01,ij}}{\tau_{02,ij}} &>&  \frac{1+ C_{\rm bm}\sqrt{\epsilon_{01}}}{1- 4 \sqrt{C_1(\epsilon_{01}\vee \epsilon_{02})}},
	\eea
	or 
	\bea
	\frac{\tau_{02,ij}}{\tau_{01,ij}} &>&  \frac{1+ C_{\rm bm}\sqrt{\epsilon_{02}} }{1- 4 \sqrt{C_1(\epsilon_{01}\vee \epsilon_{02})}},
	\eea 
	for some constants $C_{\rm bm}^2 > 8 (\alpha+1)$ and $C_1>1$.
	
	\item[(A3$^\star$)] There exists a pair $(i,j)$ with $i\neq j$ such that $\sigma_{01,ii} \vee \sigma_{02,ii} \ll (n\vee p)$ and
	\bean
	(a_{01,ij} - a_{02,ij})^2 &\ge& \frac{25}{2} C_1 \sum_{k=1}^2 \Big\{\frac{\tau_{0k,ij}\epsilon_{0k} }{\sigma_{0k,jj}(1-2\sqrt{C_1 \epsilon_{0k}}) }  \Big\} ,  \label{diff_aij1} \\
	(a_{01,ij} - a_{02,ij})^2 &\ge& \frac{10n}{n+2a_0} \sum_{k=1}^2 \Big\{ \frac{\epsilon_{0k}}{ \sigma_{0k,jj}  (1-2\sqrt{C_1\epsilon_{0k}}) }  \Big\} \label{diff_aij2} \\
	&&\times\, \Big[ \frac{b_{0,ij}}{\log (n\vee p)} + \Big\{ \sum_{k=1}^2 \sigma_{0k,ii}(1+4\sqrt{C_1\epsilon_{0k}}) + \frac{2 b_{0,ij}}{n} \Big\} C_{{\rm bm}, a}   \Big] \nonumber
	\eean
	for some constants $C_{{\rm bm}, a} > \alpha + a_0 +1$ and $C_1>1$.
\end{itemize}
Conditions (A3) and (A3$^\star$) may seem complicated at first glance, but it can be transformed into simpler conditions.
For given positive constants $\alpha, C_{\rm bm}$ and $C_{{\rm bm},a}$ such that $C_{\rm bm}^2>8(\alpha+1)$ and $C_{{\rm bm},a}>\alpha+1$, define a class of two covariance matrices
\bea
{H}_1(C_{\rm bm}, C_{{\rm bm},a})   &:=& \Big\{ (\sg_1,\sg_2) : (\sg_1,\sg_2) \text{ satisfies condition (A3) or (A3$^\star$)}  \Big\} .
\eea
Conditions (A3) and (A3$^\star$) specify the minimum difference condition between $\sg_{01}$ and $\sg_{02}$ to consistently detect the alternative $H_1:\sg_1\neq \sg_2$  under the reparametrization using $\{a_{0k,ij},\tau_{0k,ij}: k=1,2 \text{ and } 1\le i\neq j \le p \}$.
Suppose that
\bean\label{cond_delta}
\begin{split}
	\max_{1\le k \le 2}\max_{1\le i\neq j \le p}\rho_{0k,ij}^2 &\le 1 - c_0, \\
	\{\log(n\vee p) \}^{-1} \ll \min_{1\le k \le 2}\min_{1\le i \le p} \sigma_{0k,ii} &\le \max_{1\le k \le 2}\max_{1\le i \le p} \sigma_{0k,ii} \ll (n\vee p),
\end{split}
\eean
for some small constant $c_0>$.
If $\alpha>1$, $n_1\asymp n_2$ and
\bean\label{two_cov_testable}
\begin{split}
	\widetilde{H}_1(C_\star ,c_0) &:= \Big\{ (\sg_{1}, \sg_{2}): \max_{1\le i\le j\le p} \frac{ (\sigma_{1,ij} - \sigma_{1,ij})^2 }{ \sigma_{1,ii}\sigma_{1,jj} + \sigma_{2,ii}\sigma_{2,jj} } \ge C_\star  \, \frac{\log (n\vee p)}{n}, \\
	&\quad\quad\quad\quad\quad\quad\quad\,\, (\sg_1,\sg_2) \text{ satisfies conditions in } \eqref{cond_delta} \text{ with } c_0\,  \Big\} ,
\end{split}
\eean
then $\widetilde{H}_1(C_\star ,c_0) \subset {H}_1(C_{\rm bm}, C_{{\rm bm},a})$ for some large constant $C_\star>0$ by Lemma 3.1 in the supplementary material. 
Condition \eqref{two_cov_testable} characterizes the difference between $\sg_{01}$ and $\sg_{02}$ using the squared maximum {\it standardized difference}. 
Hence, conditions (A3) and (A3$^\star$) can essentially be understood as the squared maximum standardized difference condition given at \eqref{two_cov_testable}.
\cite{cai2013two} also used a similar difference measure between $\sg_{01}$ and $\sg_{02}$.

The following theorem shows consistency of the mxPBF \eqref{mxPBF_two_cov}. We note that the condition $\lim_{(n_1\wedge n_2)\to\infty} n_1/n = 1/2$ in Theorem \ref{thm:two_cov} can be relaxed to  $n_1\asymp n_2$, although constants in conditions (A2), (A3) and (A3$^\star$) should be changed accordingly.
\begin{theorem}\label{thm:two_cov}
	Consider model \eqref{two_cov} and the two-sample covariance test $H_0:\sg_1=\sg_2$ versus $H_1:\sg_1\neq \sg_2$.
	Assume that $\lim_{(n_1\wedge n_2)\to\infty} n_1/n = 1/2$ and condition (A1) holds.
	Then, under $H_0$, if $\alpha > 6C_2 $ and condition (A2) holds, for some constant $c>0$,
	\bea
	B_{\max,10}^{\sg}(\bfX_{n_1}, \bfY_{n_2})  &=&O_p\{(n\vee p)^{-c}\} .
	\eea
	Under $H_1$, if  $(\sg_{01}, \sg_{02}) \in H_1( C_{\rm bm} , C_{{\rm bm},a} )$, for some constant $c'>0$,
	\bea
	\{B_{\max,10}^{\sg}(\bfX_{n_1}, \bfY_{n_2})\}^{-1}  &=&O_p\{(n\vee p)^{-c}\} .
	\eea
\end{theorem}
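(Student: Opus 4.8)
The plan is to follow the template of Theorem~\ref{thm:two_mean}: since $B_{\max,10}^{\sg}(\bfX_{n_1},\bfY_{n_2})=\max_{i\neq j}B_{10}(\tilde X_i,\tilde Y_i,\tilde X_j,\tilde Y_j)$, it suffices to bound each pairwise log--Bayes factor \eqref{B10_two_cov} on one high-probability event and combine the $p(p-1)$ bounds by a union bound, which is affordable because (A1) grants a $\log(n\vee p)$ budget per pair. Conditionally on $(\tilde X_j,\tilde Y_j)$ the representation \eqref{cond_dist_XY} is a pair of Gaussian regressions through the origin, so $n_1\what\tau_{1,ij}/\tau_{01,ij}\sim\chi^2_{n_1-1}$ and $n_2\what\tau_{2,ij}/\tau_{02,ij}\sim\chi^2_{n_2-1}$ with degrees of freedom free of the conditioning, $\what a_{1,ij}\mid\tilde X_j\sim N(a_{01,ij},\,\tau_{01,ij}/\|\tilde X_j\|_2^2)$ and likewise for $\what a_{2,ij}$, and the pooled residual satisfies the exact ANOVA-type identity $n\what\tau_{ij}=n_1\what\tau_{1,ij}+n_2\what\tau_{2,ij}+D_{ij}$ with $D_{ij}=(\what a_{1,ij}-\what a_{2,ij})^2\,\|\tilde X_j\|_2^2\|\tilde Y_j\|_2^2/(\|\tilde X_j\|_2^2+\|\tilde Y_j\|_2^2)\ge 0$. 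The first step is to assemble an event $\mathcal E_n$ with $\bbP(\mathcal E_n)\to 1$ on which, simultaneously over all $i\neq j$: (i) the $\chi^2$ quantities deviate from their means by at most amounts of order $\sqrt{\epsilon_{0k}}$; (ii) $\|\tilde X_j\|_2^2$ and $\|\tilde Y_j\|_2^2$ lie within $1\pm C\sqrt{\epsilon_{0k}}$ of $n_k\sigma_{0k,jj}$; and (iii) $|\what a_{k,ij}-a_{0k,ij}|$ is controlled by a conditional Gaussian tail bound, which pins down $D_{ij}$.

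On $\mathcal E_n$ and under $H_0$ (so $\tau_{01,ij}=\tau_{02,ij}=\tau_{0,ij}$ and $a_{01,ij}=a_{02,ij}$), I split the right-hand side of \eqref{B10_two_cov} into the penalty $\tfrac{1}{2}\log\{\gamma/(1+\gamma)\}=-\tfrac{\alpha}{2}\log(n\vee p)+O(1)$, the Gamma block $\log\Gamma(\tfrac{n_1}{2}+a_0)+\log\Gamma(\tfrac{n_2}{2}+a_0)-\log\Gamma(\tfrac{n}{2}+a_0)$, and the $\what\tau$ block. Stirling's formula shows that the $\Theta(n\log n)$ and $\Theta(n)$ parts of the Gamma block exactly cancel the corresponding parts of the $\what\tau$ block when $n_1/n\to 1/2$; expanding $\log(b_{0,ij}+\tfrac{n}{2}\what\tau_{ij})=\log(\tfrac{n}{2}\tau_{0,ij})+\log\big(1+\tfrac{2b_{0,ij}}{n\tau_{0,ij}}+\tfrac{\what\tau_{ij}-\tau_{0,ij}}{\tau_{0,ij}}\big)$ together with its within-group analogues, and using $D_{ij}\lesssim\tau_{0,ij}\log(n\vee p)$ and condition (A2) (which makes $b_{0,ij}/(n\tau_{0,ij})$ and $\{\tau_{0,ij}\log(n\vee p)\}^{-1}$ uniformly negligible), bounds the net of the remaining two blocks by a fixed multiple of $C_2\log(n\vee p)$ plus lower order. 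Since $\alpha>6C_2$, the penalty dominates, giving $\log B_{10}\le -c\log(n\vee p)$ uniformly over pairs and hence $B_{\max,10}^{\sg}=O_p\{(n\vee p)^{-c}\}$.

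Under $H_1$ it suffices to exhibit one pair $(i,j)$ --- the one furnished by (A3) or (A3$^\star$) --- with $\log B_{10}(\tilde X_i,\tilde Y_i,\tilde X_j,\tilde Y_j)\to\infty$, since then the maximum diverges and no control of the other pairs is needed. After the Stirling cancellations, $\log B_{10}$ equals, up to lower order, $\tfrac{n}{2}\log\what\tau_{ij}-\tfrac{n_1}{2}\log\what\tau_{1,ij}-\tfrac{n_2}{2}\log\what\tau_{2,ij}-\tfrac{\alpha}{2}\log(n\vee p)$, into which I substitute $\what\tau_{k,ij}\approx\tau_{0k,ij}$ and the ANOVA identity on $\mathcal E_n$. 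In the (A3) regime $a_{01,ij}\approx a_{02,ij}$ makes $D_{ij}$ negligible, leaving the Jensen gap $\tfrac{n}{2}\big[\log\tfrac{n_1\tau_{01,ij}+n_2\tau_{02,ij}}{n}-\tfrac{n_1}{n}\log\tau_{01,ij}-\tfrac{n_2}{n}\log\tau_{02,ij}\big]$, which the ratio lower bound in (A3) with $C_{\rm bm}^2>8(\alpha+1)$ pushes above $\tfrac{\alpha+1}{2}\log(n\vee p)$. In the (A3$^\star$) regime $D_{ij}\asymp(a_{01,ij}-a_{02,ij})^2\,\|\tilde X_j\|_2^2\|\tilde Y_j\|_2^2/(\|\tilde X_j\|_2^2+\|\tilde Y_j\|_2^2)$ is the driving term, inflating $\log(b_{0,ij}+\tfrac{n}{2}\what\tau_{ij})$; the inequalities \eqref{diff_aij1}--\eqref{diff_aij2} are precisely the bookkeeping that keeps this contribution, after subtracting the variance-side and $a_0,b_{0,ij}$ corrections, at least $c'\log(n\vee p)$. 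Either way $\log B_{10}\ge c'\log(n\vee p)$ for that pair, so $\{B_{\max,10}^{\sg}\}^{-1}=O_p\{(n\vee p)^{-c'}\}$.

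The step I expect to be the main obstacle is the uniform bookkeeping of the $\what\tau$ block: one must track the cancellation of the $\Theta(n\log n)$ and $\Theta(n)$ terms to precision $o(\log(n\vee p))$ while the projection direction $\tilde Z_j$ is itself random, which forces the conditioning-then-union-bound structure and a sharp use of the (A1)--(A2) bounds on $\epsilon_{0k}$ and $\min_{i\neq j}\tau_{0,ij}$, so that no single pair leaks more than the $\alpha$-penalty can absorb. The concentration of the between-group term $D_{ij}$ is the other crux, since the same quantity must be shown negligible under $H_0$ and dominant under (A3$^\star$).
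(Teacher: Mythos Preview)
Your outline is correct and follows essentially the same route as the paper: the same ANOVA decomposition $n\what\tau_{ij}=n_1\what\tau_{1,ij}+n_2\what\tau_{2,ij}+D_{ij}$, the same high-probability event built from $\chi^2$ concentration (the paper's Lemmas~\ref{lem:setAij0}--\ref{lem:setAij1-8}), the same second-order expansion under $H_0$ yielding the $6C_2$ constant, and the same Jensen-gap versus $D_{ij}$ dichotomy under (A3) versus (A3$^\star$). Two small points where the paper differs from your sketch: (i) instead of Stirling it uses the non-asymptotic Ke\v{c}ki\'c--Vasi\'c bounds (Lemma~\ref{lem:Gammaf_ineq}) for the Gamma block, which avoids tracking $o(1)$ errors; (ii) under (A3) the paper drops $D_{ij}$ not by arguing $a_{01,ij}\approx a_{02,ij}$ (which (A3) does not assert) but by monotonicity---the sum \eqref{BF10_H1_0}--\eqref{BF10_H1_2} is increasing in $\what\tau_{ij}$, so $n\what\tau_{ij}\ge n_1\what\tau_{1,ij}+n_2\what\tau_{2,ij}$ gives the lower bound for free regardless of $D_{ij}$.
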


\cite{cai2013two} considered a high-dimensional setting, $(\log p)^5 = o(n_k)$, while we assume a weaker condition, $\log p = o(n_k)$ for $k=1,2$ (condition (A1)).
For given constants $C>0$ and $0<r<1$, define $s_j(C) = \text{card}\{i: |\rho_{01,ij}| \ge (\log p )^{-1-C} \text{ or } |\rho_{02,ij}| \ge (\log p )^{-1-C}  \}$ and $\Lambda(r) = \{i : |\rho_{01,ij}|  >r \text{ or } |\rho_{02,ij}|>r \text{ for some } j \neq i    \}$, where $\text{card}(A)$ means the cardinality of the set $A$.
\cite{cai2013two} assumed that there exist $\Gamma \subset \{1,\ldots, p \}$, $C>0$ and $0<r<1$ such that $\text{card}(\Gamma) =o(p)$, $\max_{j \neq \Gamma} s_j(C) = o(p^{c})$ for some constant $c>0$, and $\text{card(}\Lambda(r)) = o(p)$.
These conditions essentially restrict the number of highly correlated variables.
They are satisfied if $\lambda_{\max}(R_{01}) \vee \lambda_{\max}(R_{02}) \le C'$ for some constant $C'>0$ and $\|R_{01}\|_{\max} \vee \|R_{02}\|_{\max} \le r <1$. 
The power of their test tends to one if 
\bea
\max_{1\le i\le j\le p} \frac{ (\sigma_{01,ij} - \sigma_{01,ij})^2 }{ n_1^{-1}\theta_{01,ij} + n_2^{-1} \theta_{02,ij} },
&\ge& C  \log p  
\eea
for $C \ge 4$, where $\theta_{01,ij} = \V( X_{1i}X_{1j} )$ and $\theta_{02,ij} = \V( Y_{1i}Y_{1j} )$.
This condition is equivalent to condition \eqref{two_cov_testable} in terms of the rate.
Thus, compared with those used in \cite{cai2013two}, we obtain consistency of the mxPBF under weaker conditions for $(n,p)$ and similar conditions for true covariance matrices.

One of the interesting findings from Theorem \ref{thm:two_cov} is that the mxPBF does not require any standardization step.
\cite{cai2013two} mentioned that the standardization of the test statistic is necessary to deal with a wide range of variability and heteroscedasticity of sample covariances.

However, the mxPBF \eqref{mxPBF_two_cov} still enjoys consistency for the similar parameter space without standardization.
Although we did not mention earlier,  a similar phenomenon is observed for the two-sample mean test: the proposed mxPBF \eqref{mxPBF_mean} does not require a standardization step while having similar properties with a standardized test.

Another important finding is that condition (A3) (or (A3$^\star$)) is rate-optimal to guarantee consistency under $H_0:\sg_1=\sg_2$ as well as $H_1:\sg_1\neq \sg_2$.
Theorem \ref{thm:two_cov_lowerbound} shows that, for some small constants $C_{\rm bm}$ and $C_{{\rm bm},a} >0$, there is no consistent test having power tending to one for any true alternative satisfying $(\sg_{01},\sg_{02}) \in {H}_1(C_{\rm bm}, C_{{\rm bm},a})$.

\begin{theorem}\label{thm:two_cov_lowerbound}
	Let $\bbE_{\sg_{01},\sg_{02}}$ be the expectation corresponding to model \eqref{two_cov} with $(\sg_{01},\sg_{02})$.
	Suppose that $n_1\asymp n_2$ and $p \ge n^c$ for some constant $c>0$.
	Then, there exists small constants $C_{\rm bm}$ and $C_{{\rm bm},a} >0$ such that for any $0<\alpha_0<1$ and all large $n$ and $p$,
	\bea
	\inf_{(\sg_{01},\sg_{02}) \in {H}_1(C_{\rm bm}, C_{{\rm bm},a})} \sup_{\phi \in \calT} \bbE_{\sg_{01},\sg_{02}} \phi &\le& {\alpha_0} + o(1),
	\eea
	where $\calT$ is the set of consistent tests over the multivariate normal distributions such that $\bbE_{0} \phi \lra 0$ as $n\to\infty$ for any $\phi \in \calT$, and $\bbE_0$ is the expectation corresponding to model \eqref{two_cov} under $H_0:\sg_1=\sg_2$.
\end{theorem}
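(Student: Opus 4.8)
The plan is to prove this minimax lower bound by the classical two‑point / fuzzy‑hypothesis recipe: build a least favorable prior $\nu_1$ supported on a rich family of alternatives lying just inside $H_1(C_{\rm bm},C_{{\rm bm},a})$ and show that the induced mixture of sampling distributions is asymptotically indistinguishable, in total variation, from a null model. Concretely, I would fix a baseline covariance $\Sigma_0$ equal to $I_p$ except for one designated off‑diagonal correlation $\rho_0$ with $\rho_0^2=1-c_0$ bounded away from $0$ and $1$ (an interior correlation is needed so that a small perturbation of that entry moves the conditional‑variance parameter $\tau$ at the right scale). Draw a pair $(K,L)$ uniformly over the $\binom{p}{2}$ off‑diagonal positions and an independent Rademacher sign $\epsilon$, and set $\Sigma_{01}=\Sigma_0-\tfrac12\beta_n\epsilon(E_{KL}+E_{LK})$, $\Sigma_{02}=\Sigma_0+\tfrac12\beta_n\epsilon(E_{KL}+E_{LK})$ with $\beta_n^2\asymp C_{\rm bm}\log(n\vee p)/n$; this defines $\nu_1$. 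Take as the null $\bbP_0$ the law of the two samples when $\Sigma_1=\Sigma_2=\Sigma_0$ (if $\calT$ requires type I control over the full composite null, additionally smudge $\rho_0$ over an interval of length $\gg\beta_n$ and let the matching null prior place $\Sigma_1=\Sigma_2$ at that random baseline, so the sample‑$1$ and sample‑$2$ marginals under $\nu_1$ and under the null prior differ only by an $O(\beta_n)$ shift of a one‑dimensional parameter; this contributes only a further $o(1)$ below).

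Second, I would verify that every instance in the support of $\nu_1$ lies in $H_1(C_{\rm bm},C_{{\rm bm},a})$ for suitably small $C_{\rm bm},C_{{\rm bm},a}$. Using the reparametrization at the perturbed pair, $\tau_{01,KL}/\tau_{02,KL}=1+2\rho_0\beta_n/(1-\rho_0^2)+o(\beta_n)$, so the $\tau$‑branch of condition (A3) holds as soon as $\beta_n$ exceeds the smallest admissible level $\asymp\tfrac{1-\rho_0^2}{2\rho_0}(C_{\rm bm}+4\sqrt{C_1})\sqrt{\log(n\vee p)/n}$; the side conditions $\{\log(n\vee p)\}^{-1}\ll\tau_{0k,KL}\ll(n\vee p)$ and \eqref{cond_delta} hold because $\rho_0$ is interior and the variances are $1$, and $\Sigma_{0k}\succ0$ since $\beta_n=o(1)$. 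Equivalently, by Lemma 3.1 one only checks that the squared standardized difference $\beta_n^2/(2+O(\beta_n^2))$ is of order $\log(n\vee p)/n$ with the right constant. (The (A3$^\star$) branch is covered by the analogous construction with $\rho_0=0$, where the $a$‑difference rather than $\tau$ carries the signal.)

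Third is the divergence bound, via the second‑moment method. Gaussianity makes the per‑observation cross term between two perturbations of $\Sigma_0$ on the same pair, with signs $\epsilon,\epsilon'$, a ratio of determinants that expands as $1+\tfrac14\kappa(\rho_0)\epsilon\epsilon'\beta_n^2+O(\beta_n^4)$, where $\kappa(\rho_0)=(1+\rho_0^2)/(1-\rho_0^2)^2$ is the Fisher information of the correlation in the $2\times2$ Gaussian block; distinct pairs contribute $1+o(1)$; raising to the $n_1$ (resp.\ $n_2$) power, multiplying, and averaging over the location and the independent signs yields
\[
1+\chi^2\big(\bbP_{\nu_1}\,\big\|\,\bbP_0\big)\;\le\;1+\frac{1}{\binom{p}{2}}\Big(\cosh\!\big(\tfrac14\kappa(\rho_0)\,n\,\beta_n^2\big)-1\Big)+o(1).
\]
Since $\tfrac14\kappa(\rho_0)n\beta_n^2\asymp C_{\rm bm}\,\kappa(\rho_0)\log(n\vee p)$ while $\log\binom{p}{2}=(2+o(1))\log p\asymp\log(n\vee p)$, choosing $C_{\rm bm}$ (and likewise $C_{{\rm bm},a}$) small enough makes $\cosh(\tfrac14\kappa(\rho_0)n\beta_n^2)=o\big(\binom{p}{2}\big)$, hence $\chi^2(\bbP_{\nu_1}\,\|\,\bbP_0)=o(1)$ and $\|\bbP_{\nu_1}-\bbP_0\|_{\rm TV}=o(1)$. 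Le Cam's argument then finishes: for $\phi\in\calT$ one has $\bbE_0\phi=o(1)$, so $\bbE_{\nu_1}\phi\le\bbE_0\phi+\|\bbP_{\nu_1}-\bbP_0\|_{\rm TV}=o(1)$, and therefore $\inf_{(\Sigma_{01},\Sigma_{02})\in H_1(C_{\rm bm},C_{{\rm bm},a})}\bbE_{\Sigma_{01},\Sigma_{02}}\phi\le\int\bbE_{\Sigma_{01},\Sigma_{02}}\phi\,d\nu_1=\bbE_{\nu_1}\phi\le\alpha_0+o(1)$, which is the claimed inequality.

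The step I expect to be the main obstacle is the joint calibration in the second and third steps: $\beta_n$ and the baseline correlation $\rho_0$ — which enters (A3) through $(1-\rho_0^2)/\rho_0$ and the divergence through $\kappa(\rho_0)$ — must be large enough that the drawn instances \emph{genuinely} satisfy (A3)/(A3$^\star$), so that they lie in $H_1$ rather than merely near it, yet small enough that $\tfrac14\kappa(\rho_0)n\beta_n^2$ stays safely below $2\log p$ so the chi‑square is $o(1)$; making the determinant expansion and the constant $\kappa(\rho_0)$ sharp, and controlling the composite null through the baseline randomization without inflating the divergence, are the delicate points. The remaining items — the $O(\beta_n^4)$ cross‑term remainder contributing $n\beta_n^4=O((\log(n\vee p))^2/n)=o(1)$ under (A1), the $\chi^2$‑to‑total‑variation passage, and relaxing $\lim n_1/n=1/2$ to $n_1\asymp n_2$ — are routine.
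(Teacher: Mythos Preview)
Your plan is the standard Le Cam/second-moment route and is, in essence, what underlies the result the paper invokes. The paper's proof, however, is much shorter: it does not redo the chi-square computation but simply reduces to the lower bound already established in \cite{cai2013two} (Theorems 3--4) together with \cite{baraud2002non}, observing that the least-favorable alternatives constructed there form a subset of $\widetilde{H}_1(C_\star,c_0)$, and then uses the inclusion $\widetilde{H}_1(C_\star,c_0)\subset H_1(C_{\rm bm},C_{{\rm bm},a})$ from Lemma \ref{lem:testable_two_cov} together with $\calT\subset\calT_{\alpha_0}$ to transfer the bound. Your approach buys a self-contained argument; the paper's buys brevity by quoting an existing minimax result for precisely this problem.

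There is, however, one genuine gap in your construction. You fix a baseline $\Sigma_0$ equal to $I_p$ except for a \emph{single} designated off-diagonal correlation $\rho_0$, but then you draw the perturbation location $(K,L)$ uniformly over all $\binom{p}{2}$ pairs. These two choices are incompatible: for all but one pair the baseline correlation at $(K,L)$ is zero, so your expansion $\tau_{01,KL}/\tau_{02,KL}=1+2\rho_0\beta_n/(1-\rho_0^2)+o(\beta_n)$ is false (it collapses to $1+O(\beta_n^2)$), and the (A3) branch is not triggered. If instead you keep $(K,L)$ fixed at the designated position, you lose the $1/\binom{p}{2}$ factor in the chi-square and the bound blows up. The clean fix is the one you relegate to a parenthetical: take $\Sigma_0=I_p$, randomize $(K,L)$, and land in (A3$^\star$) via the $a$-difference --- this is exactly the construction in \cite{cai2013two} that the paper cites. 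If you insist on exercising the (A3) branch, the baseline must carry a nonzero correlation at \emph{every} candidate location (e.g., compound symmetry), at the cost of a substantially harder chi-square calculation with non-identity null. Once the construction is repaired, the calibration obstacle you flag is real but no more delicate than in \cite{cai2013two}, and your heuristics for the divergence bound are correct.
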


\section{Numerical results}\label{sec:numerical}
\subsection{Simulation study: two-sample mean test}\label{subsec:sim_mean}

In this section, we illustrate performance of the mxPBF for two-sample mean test through simulation studies. 
We generate the data as follows: $X_1,\ldots, X_{n} \overset{i.i.d.}{\sim} N_p( \mu_{01} , \sg_0)$ and $Y_1,\ldots, Y_{n} \overset{i.i.d.}{\sim} N_p( \mu_{02} , \sg_0)$ with $n = 100$ and $p \in \{100, 300\}$.
Under the null hypothesis, $H_0: \mu_{01} = \mu_{02}$, we set $\mu_{01} = \mu_{02} = 0 \in \bbR^p$.
Under the alternative hypothesis, $H_1: \mu_{01} \neq \mu_{01}$, we set $\mu_{01}=0 \in \bbR^p$ and randomly choose $n_0$ entries in $\mu_{02}$, say $\{ \mu_{02,j} : 1\le  j_1 < \cdots <  j_{n_0} \le p \}$, and set $\mu_{02, j} = \mu>0$ for all $j = j_1,\ldots, j_{n_0}$ and $\mu_{02,j}=0$ for the rest.
Thus, $n_0$ and $\mu$ are the number and magnitude of signals in the alternative, respectively.
Here, signals mean nonzero elements in $\mu_{02}- \mu_{01} \in \bbR^p$.
In our simulation study, the following scenarios for alternatives are considered:
\begin{enumerate}
	\item ($H_{1R}$: Rare signals) 
	To demonstrate a situation where only a few signals exist, we set $n_0 = 5$ and consider various magnitudes of signals \begin{equation*}
		\mu \in \{ 0.2, 0.25, 0.3, 0.35, 0.4, 0.5, 0.6, 0.8, 1.0, 1.5 \}.	
	\end{equation*}
	
	\item ($H_{1M}$: Many signals) 
	To demonstrate a situation where a lot of signals exist, 
	we set $n_0 = p/2$ and consider various magnitudes of signals \begin{equation*}
		\mu \in \{ 0.025, 0.05, 0.1, 0.15, 0.2, 0.25, 0.3, 0.4, 0.5, 0.6 \}.
	\end{equation*} 
	Note that relatively smaller signals are used compared to ``rare signals'' setting, due to the larger number of signals.
\end{enumerate}
Furthermore, we consider the following two settings for the true covariance matrix $\sg_0$: 
\begin{enumerate}
	\item (Sparse $\Omega_0 = \sg_0^{-1}$) 
	To demonstrate a situation where the true precision matrix is sparse, we randomly choose $1\%$ of entries in $\Omega_0 = (\omega_{0, ij})$ and set their value to $\omega_{0, ij} = 0.3$.
	The rest of entries in $\Omega_0$ are set to $0$. 
	When the resulting $\Omega_0$ is not positive definite, we make it positive definite by adding $\{-\lambda_{\min}(\Omega_0) + 0.1^3\} I_p$ to $\Omega_0$.
	Finally, we set $\sg_0 = \Omega_0^{-1}$.
	
	\item (Dense $\Omega_0 = \sg_0^{-1}$) 
	To demonstrate a situation where the true precision matrix is dense, we randomly choose $40\%$ of entries in $\Omega_0$ and set their value to $\omega_{0, ij} = 0.3$.
	The rest of the steps for constructing $\sg_0$ is the same as above.
\end{enumerate}

In each setting and hypothesis, 50 simulated data sets are generated.
For the proposed mxPBF-based two-sample mean test, the hyperparameter $\alpha$ is set to $\alpha = 2.01$ to satisfy condition \eqref{alpha_mean}.
As contenders, we consider the tests proposed by \cite{bai1996effect}, \cite{srivastava2008test} and \cite{cai2014two}, which will be simply denoted as BS, SD and CLX, respectively.
Here, CLX means the two-sample mean test based on the CLIME, while CLX.AT refers to the two-sample mean test based on the inverse of the adaptive thresholding estimator with the tuning parameter $\delta=2$ as a default choice.
Note that BS and SD are $L_2$-type tests, while mxPBF, CLX and CLX.AT are maximum-type tests.
It is expected that $L_2$-type tests perform better (worse) than maximum-type tests in ``many signals'' (``rare signals'') setting.
To illustrate performance of each test, receiver operating characteristic (ROC) curves are drawn.
Points of the curves are obtained by adjusting thresholds and significance levels for the mxPBF and frequentist tests, respectively.

Furthermore, we compare the performance of each test at a fixed threshold or significance level.
Note that we need to fix threshold and significance level in practice.
As default choices, threshold $C_{\rm th} =10$ and significance level $0.05$ are used.
Note that $C_{\rm th} =10$ corresponds to ``strong evidence'' for the alternative hypothesis based on the criteria suggested by \cite{jeffreys1998theory} and \cite{kass1995bayes}.

\begin{figure*}
	\centering
	\includegraphics[width=16.4cm, height=9.5cm]{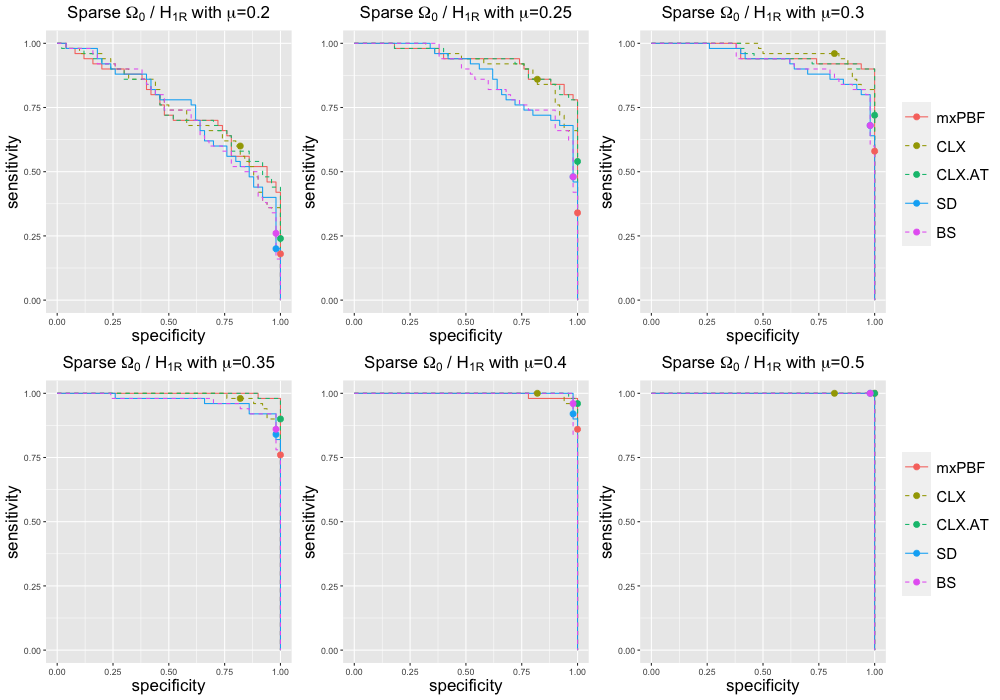}
	\includegraphics[width=16.4cm, height=9.5cm]{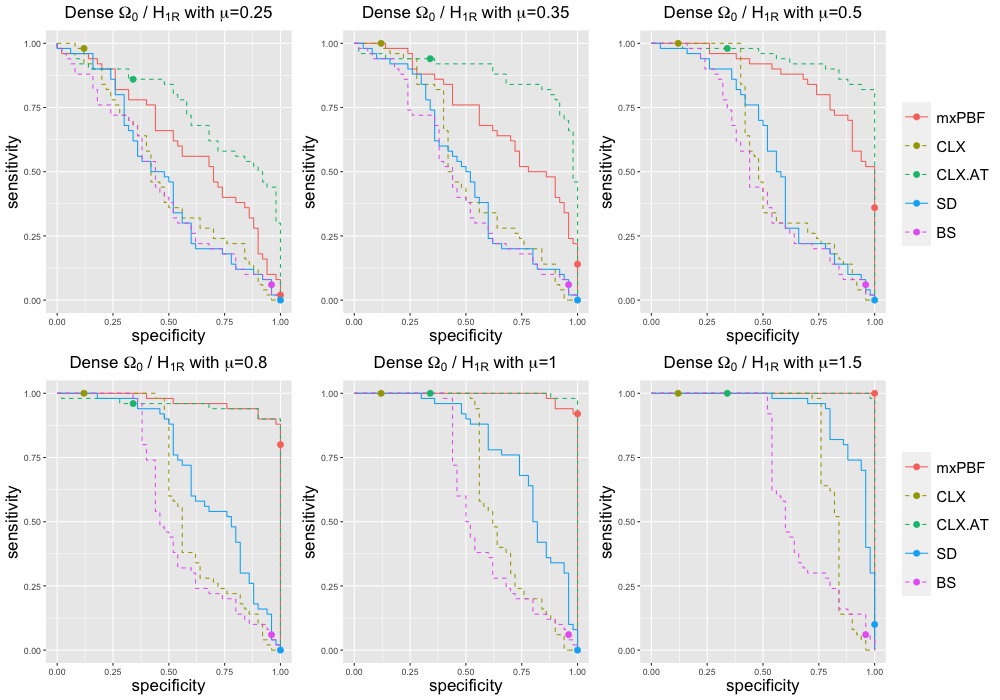}
	\vspace{-.7cm}
	\caption{
		ROC curves for the two-sample mean tests based on 50 simulated data sets for each hypothesis, $H_0$ and $H_{1R}$, with $p=100$.
		The mxPBF, SD and BS represent the test proposed in this paper, \cite{srivastava2008test} and \cite{bai1996effect}, respectively.
		The CLX and CLX.AT mean the tests proposed by \cite{cai2014two} based on the CLIME and the adaptive thresholding estimator, respectively.
		The dots show the results with $C_{\rm th}=1$ for the mxPBF and significance level $0.05$ for frequentist tests.
	}
	\label{fig:roc1}
\end{figure*}

\begin{figure*}
	\centering
	\includegraphics[width=16.4cm, height=9.5cm]{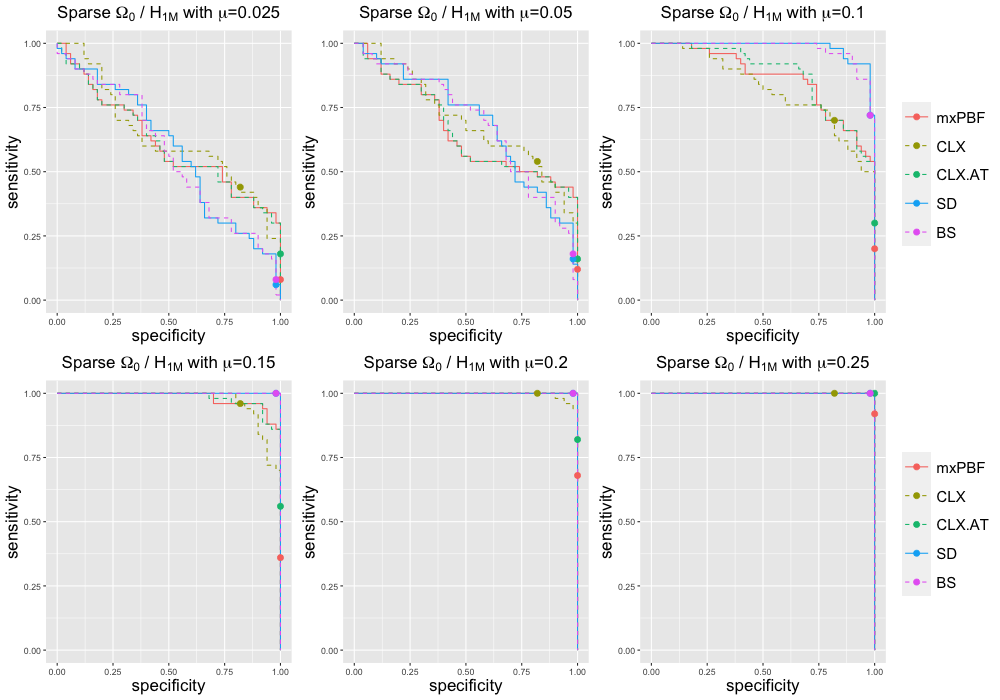}
	\includegraphics[width=16.4cm, height=9.5cm]{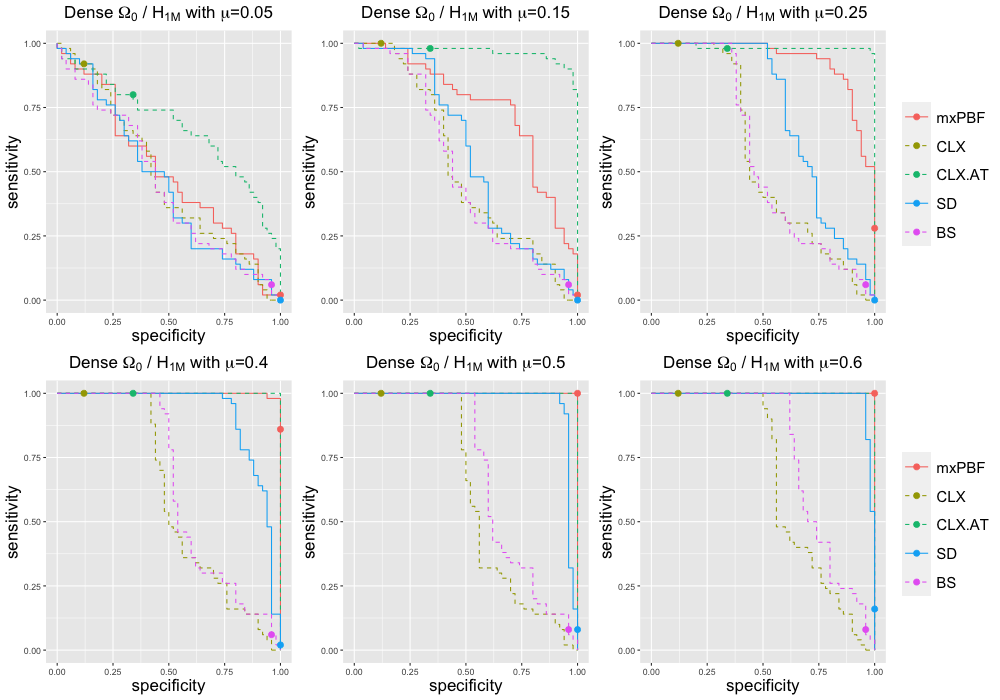}
	\vspace{-.6cm}
	\caption{
		ROC curves for the two-sample mean tests based on 50 simulated data sets for each hypothesis, $H_0$ and $H_{1M}$, with $p=100$.
	}
	\label{fig:roc2}
\end{figure*}

Figure \ref{fig:roc1} shows ROC curves based on 50 simulated data sets for each hypothesis, $H_0: \mu_{01}=\mu_{02}$ and $H_{1R}: \mu_{01}\neq \mu_{02}$, with $p=100$.
Here, $H_{1R}$ represents the ``rare signals'' scenario where $\mu_{02}  -\mu_{01} \in \bbR^p$ has only five nonzero elements with size $\mu$.
The dots in Figure \ref{fig:roc1} show the results with $C_{\rm th}=1$ for the mxPBF and significance level $0.05$ for frequentist tests.
When the true precision matrix $\Omega_0$ is sparse, the maximum-type tests overall slightly work better than the $L_2$-type tests as expected.
However, when the true precision matrix $\Omega_0$ is dense, we find that performance of CLX is not satisfactory.
We suspect this is because, as mentioned earlier, CLX relies on an estimated precision matrix by the CLIME.
In fact, we confirmed that the performance of the CLIME is worse in the dense $\Omega_0$ setting than in the sparse $\Omega_0$ setting, which supports our conjecture.
On the other hand, the mxPBF and CLX.AT outperform other tests in the  dense $\Omega_0$ setting.
When $\mu \le 0.5$, CLX.AT works better than the mxPBF in terms of the area under the curve (AUC), while when $\mu \ge 0.8$, the two tests produce quite similar ROC curves.
However, we find that CLX.AT with significance level $0.05$ tends to have low specificity in the dense $\Omega_0$ setting. 
On the other hand, the mxPBF-based Bayesian test with $C_{\rm th}=1$ performs reasonably well.
Especially in the dense $\Omega_0$ setting, when there are detectable signals $(\mu \ge 0.8)$, its specificity and sensitivity are close to $1$, while other tests suffer from low specificity or low sensitivity.
This clearly shows the relative advantage of the mxPBF-based two-sample mean test over the existing maximum-type tests, CLX and CLX.AT.

Figure \ref{fig:roc2} shows ROC curves based on 50 simulated data sets for each hypothesis, $H_0: \mu_{01}=\mu_{02}$ and $H_{1M}: \mu_{01}\neq \mu_{02}$, with $p=100$.
Here, $H_{1M}$ represents the ``many signals'' scenario where $\mu_{02}  -\mu_{01} \in \bbR^p$ has $p/2 = 50$ nonzero elements with size $\mu$.
When the true precision matrix $\Omega_0$ is sparse, overall, the $L_2$-type tests slightly work better than the maximum-type tests as expected.
However, when the true precision matrix $\Omega_0$ is dense, somewhat surprisingly, the mxPBF outperforms the $L_2$-type tests.
This observation can be partially explained by theoretical properties of the $L_2$-type tests: \cite{bai1996effect} and \cite{srivastava2008test} showed that powers of their tests decrease as the Frobenius norm of the true covariance and correlation matrices increase, respectively.
Indeed, in our simulations, we find that $\|\sg_0\|_F$ and $\|R_0\|_F$ are much larger in the dense $\Omega_0$ setting than in the sparse $\Omega_0$ setting.
We further confirmed that, when $H_{1M}$ is true, the $L_2$-type tests tend to fail to reject $H_0$ even when the size of signals is large.
Therefore, this observation suggests another advantage of the mxPBF that reasonable performance is maintained even when $\|\sg_0\|_F$ is large.

Again, the dots in Figure \ref{fig:roc2} show the results with $C_{\rm th}=1$ for the mxPBF and significance level $0.05$ for frequentist tests.
The performance of the mxPBF-based Bayesian test with $C_{\rm th}=1$ seems reasonable although it is a bit conservative in the sparse $\Omega_0$ setting.
In the dense $\Omega_0$ setting, however, the mxPBF clearly outperforms other tests when there are detectable signals $(\mu \ge 0.4)$.
Similar to $H_{1R}$ setting, the other tests suffer from low specificity or low sensitivity even when $\mu \ge 0.4$.

When $p=300$, similar phenomena are observed, thus we omit it here for reasons of space.
The results with $p=300$ including ROC curves and descriptions are deferred to the Supplementary material.

\subsection{Simulation study: two-sample covariance test}\label{subsec:sim_cov}

Now, we illustrate performance of the mxPBF for two-sample covariance test.
We generate the data as follows: $X_1,\ldots, X_{n} \overset{i.i.d.}{\sim} N_p( 0,  \sg_{01})$ and $Y_1,\ldots, Y_{n} \overset{i.i.d.}{\sim} N_p( 0 , \sg_{02})$ with $n = 100$ and $p \in \{100, 300\}$.
Under the null hypothesis, $H_0: \sg_{01} = \sg_{02}$, we set $\sg_{01} = \sg_{02} \equiv \sg_0\in \bbR^{p \times p}$.
Under the alternative hypothesis, $H_1: \sg_{01}\neq \sg_{02}$, we set $\sg_{01} \equiv \sg_0 $ and $\sg_{02} = \sg_{01} + U$ for some matrix $U \in \bbR^{p\times p}$ containing signals.
If $\sg_{01}$ or $\sg_{02}$ is not positive definite, we add a small diagonal matrix $\delta_1 I_p$ to them, where $\delta_1 = | \min  \{\lambda_{\min}(\sg_{01}), \lambda_{\min}(\sg_{02})\} | + 0.05$.
In our simulation study, the following two scenarios for alternatives are considered:
\begin{enumerate}
	\item ($H_{1R}$: Rare signals) 
	To demonstrate a situation where only few signals exist, 
	we randomly select five entries in the lower triangular part of $U$ and generate their values from ${\rm Unif}(0, \rho)$ with 
	\begin{equation*}
		\rho \in \{ 0.5, 0.8, 1.5, 3, 6, 15 \}.
	\end{equation*}
	
	\item ($H_{1M}$: Many signals) 
	To demonstrate a situation where a lot of signals exist, 
	we generate $u = (u_1,\ldots, u_p)^T$ from $u_j \overset{i.i.d.}{\sim} {\rm Unif}(0, \rho)$ for 
	\begin{equation*}
		\rho \in \{ 0.2, 0.3, 0.5, 0.7, 1, 1.5 \}.
	\end{equation*}
	Then, we set $U = u u^T$ that leads to $p(p+1)/2$ signals in $U$ (except upper triangular part).
	Note that relatively smaller signals are used compared to ``rare signals'' setting, due to the larger number of signals.
\end{enumerate}
Note that in the above, $\rho$ is the magnitude of signals.
Furthermore, we consider the following two settings for $\sg_{01}$: 
\begin{enumerate}
	\item (Sparse $\sg_{01}$) 
	To demonstrate a situation where $\sg_{01}$ is sparse, 
	we randomly choose $5\%$ of entries in $\Delta_1 = (\delta_{1,jk})$ and set their value to $\omega_{0, ij} = 0.5$.
	The rest of entries in $\Delta_1$ are set to $0$. 
	To make it positive definite, we set $\Delta = \Delta_1 + \delta I_p$, where $\delta = | \lambda_{\min}(\Delta_1)| + 0.05$.
	Finally, we set $\sg_{01} =  D^{1/2} \Delta D^{1/2}$, where $D= diag(d_j)$ and $d_j \overset{i.i.d.}{\sim} {\rm Unif}(0.5, 2.5)$.
	This setting corresponds to Model 3 in \cite{cai2013two}.
	
	\item (Dense $\sg_{01}$) 
	To demonstrate a situation where $\sg_{01}$ is dense, we set $\sg_{01}  = O \Delta O$, where $O = diag(\omega_j)$, $\omega_j \overset{i.i.d.}{\sim} {\rm Unif}(1,5)$, $\Delta = (\delta_{ij})$ and $\delta_{ij} = (-1)^{i+j} 0.4^{|i-j|^{1/10}}$.
	This setting corresponds to Model 4 in \cite{cai2013two}.
\end{enumerate}

In each setting and hypothesis, we generate 50 simulated data.
The mxPBF \eqref{B10_two_cov}  have hyperparameters $a_0, b_{0,ij} , b_{01,ij} , b_{02,ij}$ and $\alpha$.
We suggest using $a_0 =  b_{0,ij} = b_{01,ij} = b_{02,ij} = 0.01$ for all $1\le i \neq j \le p$.
Note that by the proof of Theorem \ref{thm:two_cov}, the leading terms affect the asymptotic behavior of the mxPBF \eqref{B10_two_cov} while including above hyperparameters are \eqref{leading1_hat} and \eqref{leading2_hat}.
Thus, it can be considered that the above choice leads to noninformative priors that have little effect on the mxPBF.
By Theorem \ref{thm:two_cov}, $\alpha > 6 C_2$ is required for consistency under the null.
This choice roughly means $\alpha$ is slightly larger than $12$, but we found this to overly conservative in practice.
Therefore, we suggest using $\alpha =2.01$ similar to the two-sample mean test.

For comparison, we consider the tests proposed by \cite{schott2007test}, \cite{li2012two} and \cite{cai2013two}, which will be denoted as Sch, LC and CLX, respectively.
Note that Sch and LC are $L_2$-type tests, while mxPBF and CLX are maximum-type tests.
Because the unbiased version of the test in \cite{li2012two} is computationally expensive, we use the biased version as suggested by \cite{li2012two}.
In our settings, we confirm that the biased version gives quite similar results to unbiased version.
Similar to the simulation study for two-sample mean test, ROC curves are drawn to demonstrate performance of each test.

\begin{figure*}
	\centering
	\includegraphics[width=16.4cm, height=9.5cm]{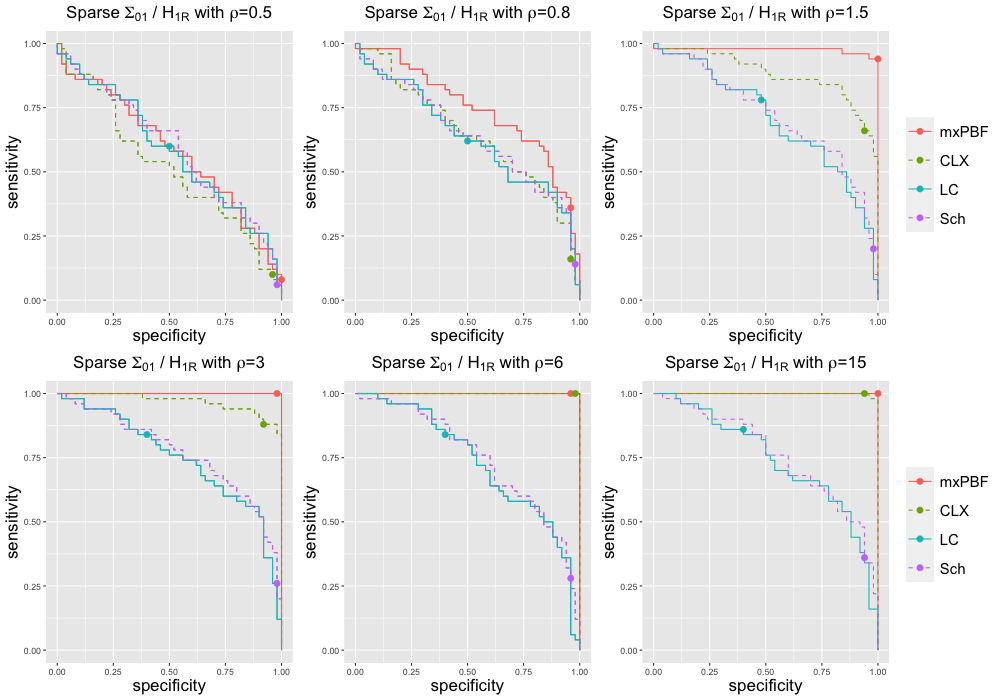}
	\includegraphics[width=16.4cm, height=9.5cm]{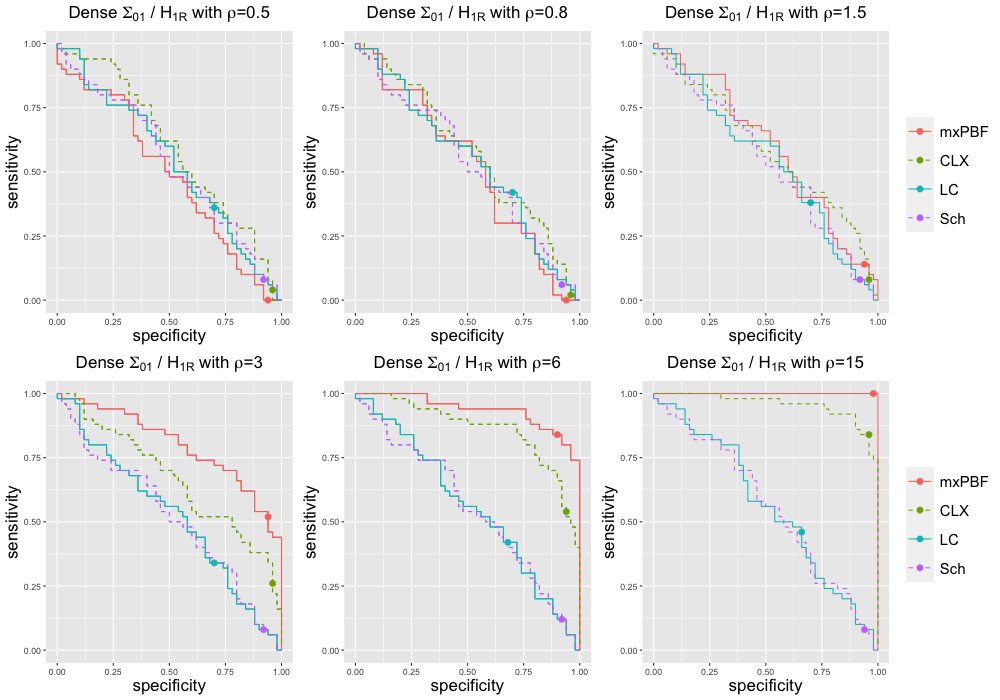}
	\vspace{-.7cm}
	\caption{
		ROC curves for the two-sample covariance tests based on 50 simulated data sets for each hypothesis, $H_0$ and $H_{1R}$, with $p=100$.
		The mxPBF, CLX, LC and Sch represent the test proposed in this paper, \cite{cai2013two}, \cite{li2012two} and \cite{schott2007test}, respectively.
		The dots show the results with $C_{\rm th}=1$ for the mxPBF and significance level $0.05$ for frequentist tests.
	}
	\label{fig:roc3}
\end{figure*}

\begin{figure*}
	\centering
	\includegraphics[width=16.4cm, height=9.5cm]{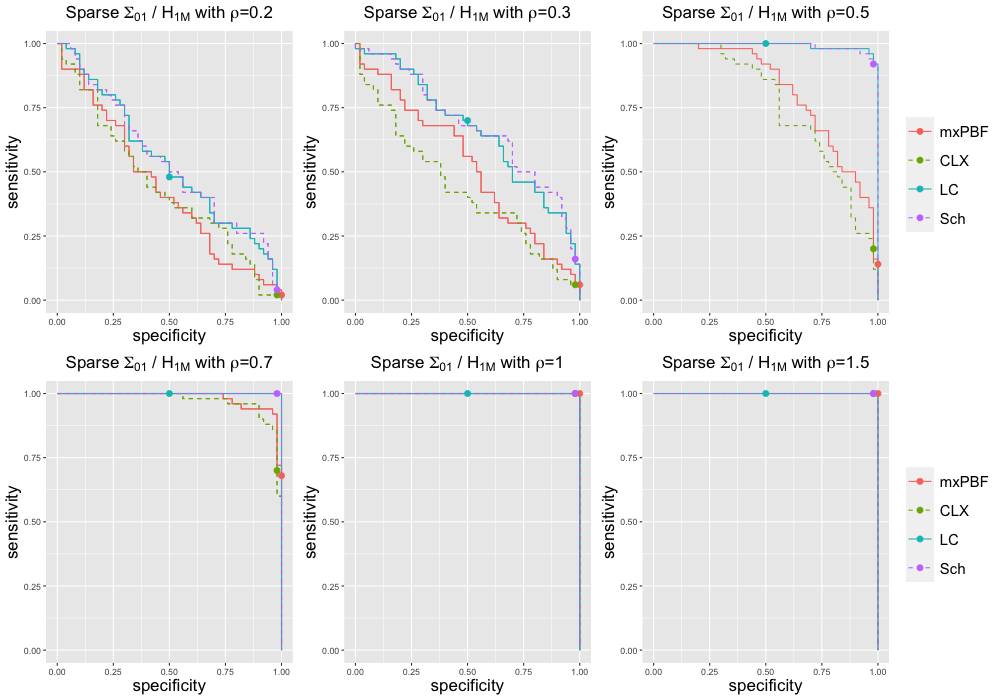}
	\includegraphics[width=16.4cm, height=9.5cm]{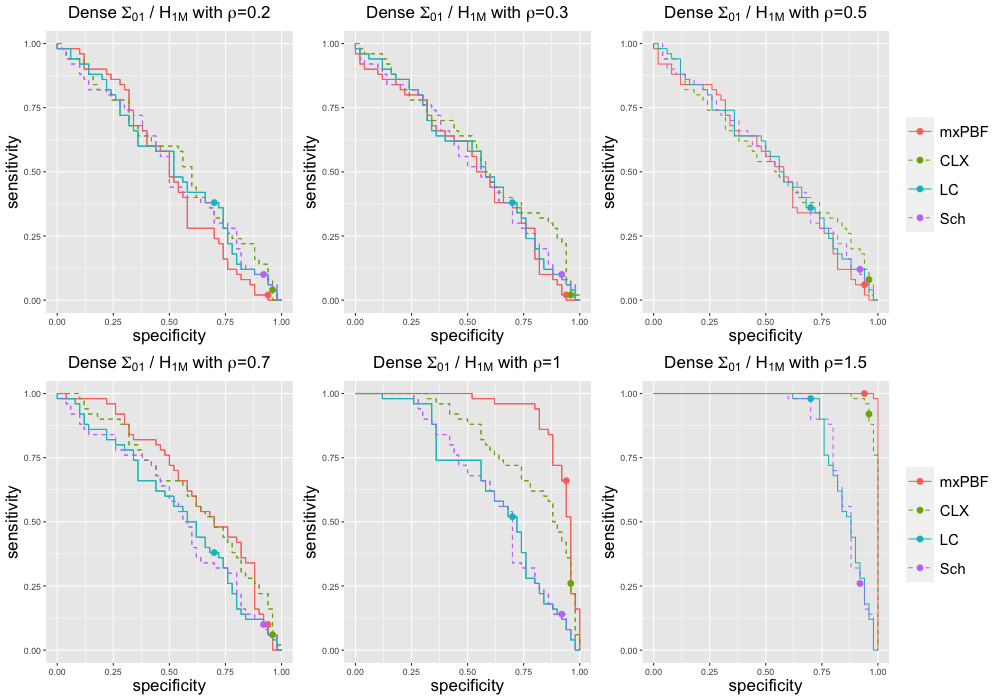}
	\vspace{-.6cm}
	\caption{
		ROC curves for the two-sample covariance tests based on 50 simulated data sets for each hypothesis, $H_0$ and $H_{1M}$, with $p=100$.
	}
	\label{fig:roc4}
\end{figure*}

Figure \ref{fig:roc3} shows ROC curves based on 50 simulated data sets for each hypothesis, $H_0: \sg_{01} = \sg_{02}$ and $H_{1R} : \sg_{01} \neq \sg_{02}$, with $p=100$.
When $\sg_{01}$ is sparse and signals are moderate $(\rho \ge 1.5)$, the maximum-type tests work better than the $L_2$-type tests as expected.
The performance of the $L_2$-type tests are slowly improved as $\rho$ gets larger.
Similar phenomena are observed in the dense $\sg_{01}$ setting, but in this case, the $L_2$-type tests do not work well even when there are large signals $(\rho =15)$.
Overall, we find that the mxPBF shows better performance than CLX.

Figure \ref{fig:roc4} shows ROC curves based on 50 simulated data sets for each hypothesis, $H_0: \sg_{01} = \sg_{02}$ and $H_{1M} : \sg_{01} \neq \sg_{02}$, with $p=100$.
As expected, the $L_2$-type tests slightly work better than the maximum-type tests when $\sg_{01}$ is sparse.
Note that the performance of the maximum-type tests are also rapidly improved as the signal $\rho$ gets larger.
Somewhat surprisingly, when $\sg_{01}$ is dense and signals are moderate $(\rho \ge 1)$, the maximum-type tests outperform the $L_2$-type tests.
We suspect that it is likely that the conditions for deriving the null distribution of Sch and LC are violated in the dense $\sg_{01}$ setting.
\cite{schott2007test} and \cite{li2012two} assumed that $\lim_{p\to\infty} {\rm tr}( \sg_0^i )/p =\gamma_i \in (0,\infty)$ for $i=1,\ldots, 8$ and ${\rm tr}(\sg_0^4) =  o \big\{  {\rm tr}(\sg_0^2)^2 \big\}$, respectively, to derive the null distribution.
In our settings, we find that ${\rm tr}( \sg_0^i )/p$ and ${\rm tr}(\sg_0^4)  / {\rm tr}(\sg_0^2)^2$ are much larger in the dense $\sg_{01}$ setting than in the sparse $\sg_{01}$ setting.
This partially supports our conjecture, although more rigorous investigation might be needed to determine the exact cause.

The dots in Figures \ref{fig:roc3} and \ref{fig:roc4} show the results with $C_{\rm th} =1$ for the mxPBF or significance level $0.05$ for frequentist tests.
The mxPBF and CLX with these default choices seem to work well if there is a reasonable amount of signals. 
On the other hand, the overall performances of LC and Sch with significance level $0.05$ are not satisfactory, especially in the sparse $\sg_{01}$ setting.

Lastly, we note that the experiment for $p=300$ showed similar phenomena whose results including ROC curves and descriptions are deferred to the Supplementary material due to lack of space.

\subsection{Real data analysis}\label{subsec:real}
In this section, we apply the proposed two-sample mean and covariance tests to two real datasets, small round blue cell tumors (SRBCT) dataset and prostate cancer dataset, respectively.
For both datasets, the sample sizes are quite small compared to the number of variables.
Thus, based on this numerical study, we would like to illustrate the practical performance of mxPBF-based tests in ``small $n$ large $p$'' situations.

We first apply two-sample mean tests to the SRBCT dataset.
The SRBCT dataset is available in the \verb|R| package \verb|plsgenomics|.
This is a gene expression data having 83 samples with 2308 genes $(p=2308)$ from the microarray experiments in \citep{khan2001classification}.
Among 83 samples, we focus on 11 cases of Burkitt lymphoma (BL) $(n_1=11)$ and 18 cases of neuroblastoma (NB) $(n_2=18)$.
Our main interest is to test equality of mean vectors of the gene expressions between BL and NB tumors.
We apply the mxPBF, CLX.AT, SD and BS to test equality of mean vectors.
Note that CLX.AT is used because the lack of prior information about the sparsity of the covariance matrix.
For this dataset, the value of the mxPBF is greater than $10^{8}$, and $p$-values of CLX.AT, SD and BS are less than $10^{-15}$.
Therefore, all the tests reject the null hypothesis, $H_0: \mu_1 =\mu_2$, if we use the default choices, threshold $C_{\rm th} =10$ an significance level $0.05$.


The prostate cancer dataset is available in the \verb|R| package \verb|SIS|.
This dataset contains $12600$ gene expressions from $52$ patients with prostate tumors $(n_1=52)$ and $50$ patients with normal prostate $(n_2=50)$.
As suggested by \cite{cai2013two}, $5000$ genes $(p=5000)$ with the largest absolute values of the $t$ statistics are selected.
Data were centered prior to analysis.
In this dataset, we would like to test equality of covariance matrices of the gene expressions between tumor and normal samples.
We apply the mxPBF, CLX, LC and Sch to test equality of covariance matrices.
For this dataset, the value of the mxPBF is greater than $10^{32}$, and $p$-values of CLX, LC and Sch are less than $0.0058$, $10^{-15}$ and $10^{-15}$, respectively.
Therefore, all the tests reject the null hypothesis, $H_0: \sg_1 =\sg_2$, if we use the default choices, threshold $C_{\rm th} =10$ an significance level $0.05$.


\section{Discussion}\label{sec:disc}

In this paper, we propose a Bayesian two-sample mean test and a Bayesian two-sample covariance test in high-dimensional settings based on the idea of the maximum pairwise Bayes factor \citep{lee2018maximum}.
These tests are not only computationally scalable but also enjoy Bayes factor consistency under relatively weak or similar conditions compared to existing tests.
The proposed methods can be applied to change point detection for mean vectors or covariance matrices, which is indeed one of our ongoing works.
Note that from the first data point, using only a subset of data within a certain window, a two-sample test can be sequentially conducted to detect change points.
Due to consistency of the proposed mxPBF-based two-sample tests, it is expected that the resulting change point detection procedures can consistently detect and estimate change points.

\newpage

\appendix

\section*{Supplementary Material}

\section{Additional numerical results when $p=300$}\label{sec:additional_sim}

\subsection{Two-sample mean test}

In this section, we provide additional numerical results for the two-sample mean tests when $p=300$.
Figures \ref{fig:add_roc1} and \ref{fig:add_roc2} show ROC curves based on 50 simulated data sets for each hypothesis where the alternative is $H_{1R}: \mu_{01} \neq \mu_{02}$ and $H_{1M}: \mu_{01} \neq \mu_{02}$, respectively.

Overall, in the rare signals setting $H_{1R}$, the mxPBF outperforms the other tests.
Note that CLX.AT produces similar ROC curves when $\Omega_0$ is sparse, but it does not seem to work well when $\Omega_0$ is dense.
When there are many signals $(H_{1M})$ and $\Omega_0$ is sparse, the $L_2$-type tests slightly work better than the maximum-type tests.
However, when $\Omega_0$ is dense, the mxPBF and CLX.AT are better than the $L_2$-type tests in terms of ROC curves.
This is consistent with what we observed when $p=100$.

\begin{figure*}
	\centering
	\includegraphics[width=16.cm,height=9.cm]{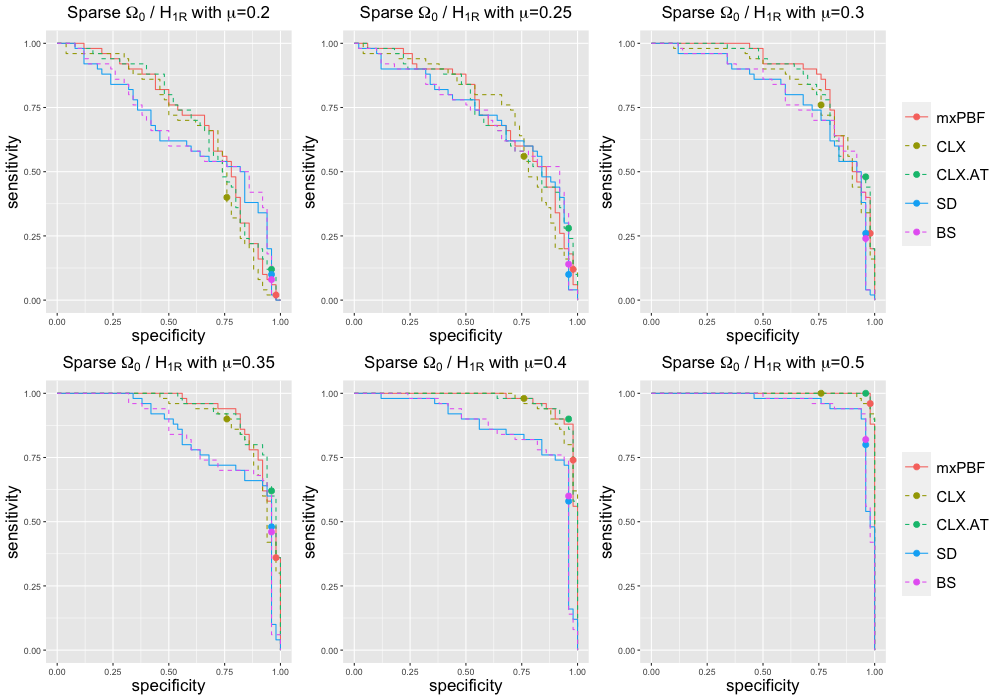}
	\includegraphics[width=16.cm,height=9.cm]{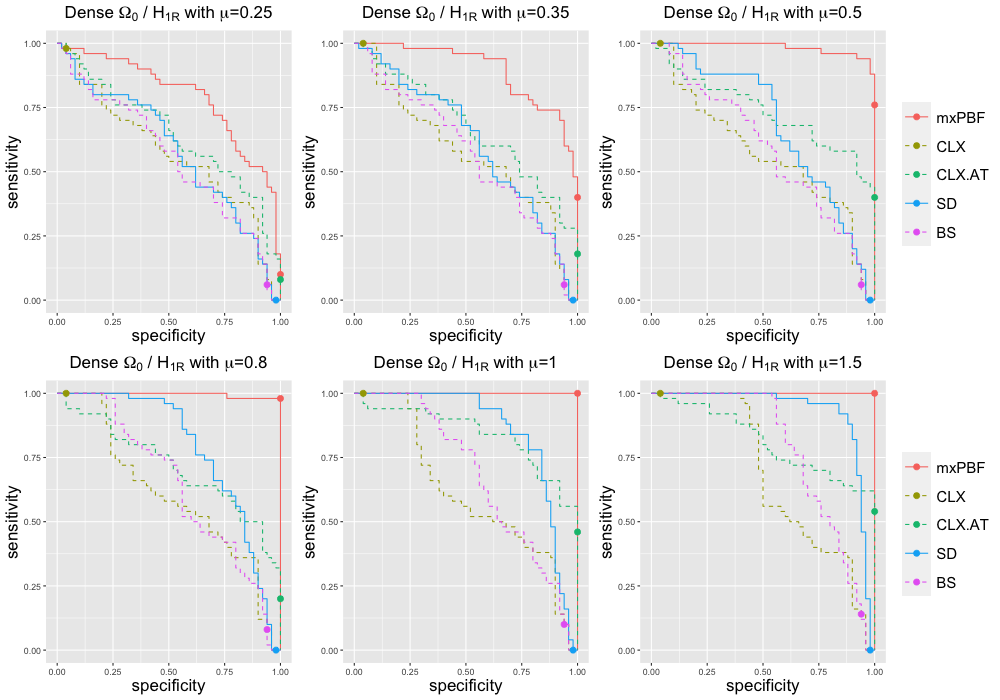}
	\caption{
		ROC curves for the two-sample mean tests based on 50 simulated data sets for each hypothesis, $H_0$ and $H_{1R}$, with $p=300$.
	}
	\label{fig:add_roc1}
\end{figure*}
\begin{figure*}
	\centering
	\includegraphics[width=16.cm,height=9.cm]{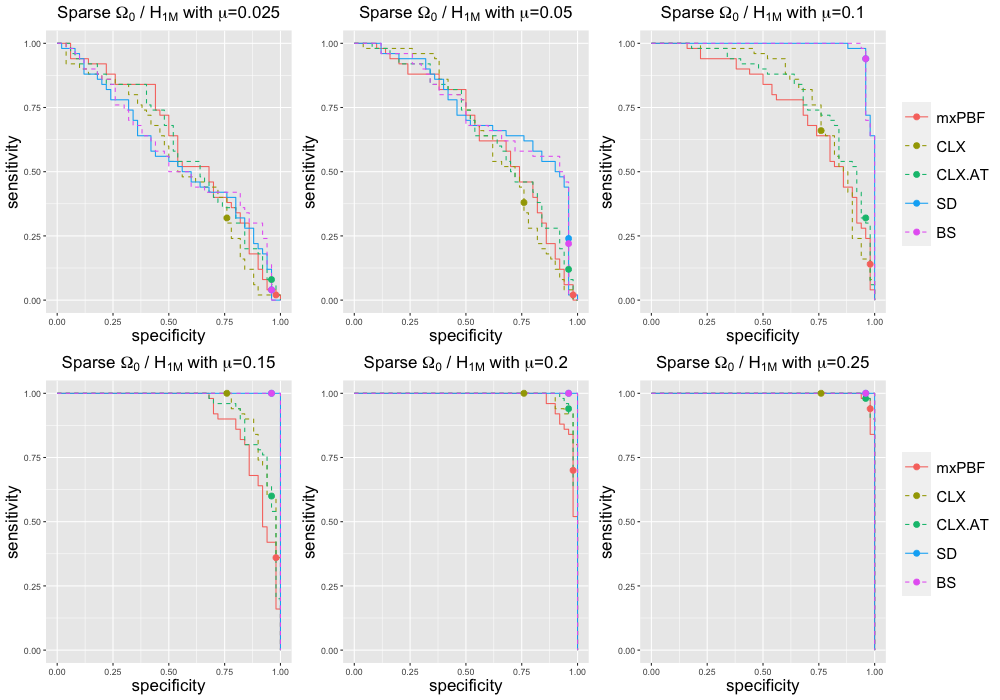}
	\includegraphics[width=16.cm,height=9.cm]{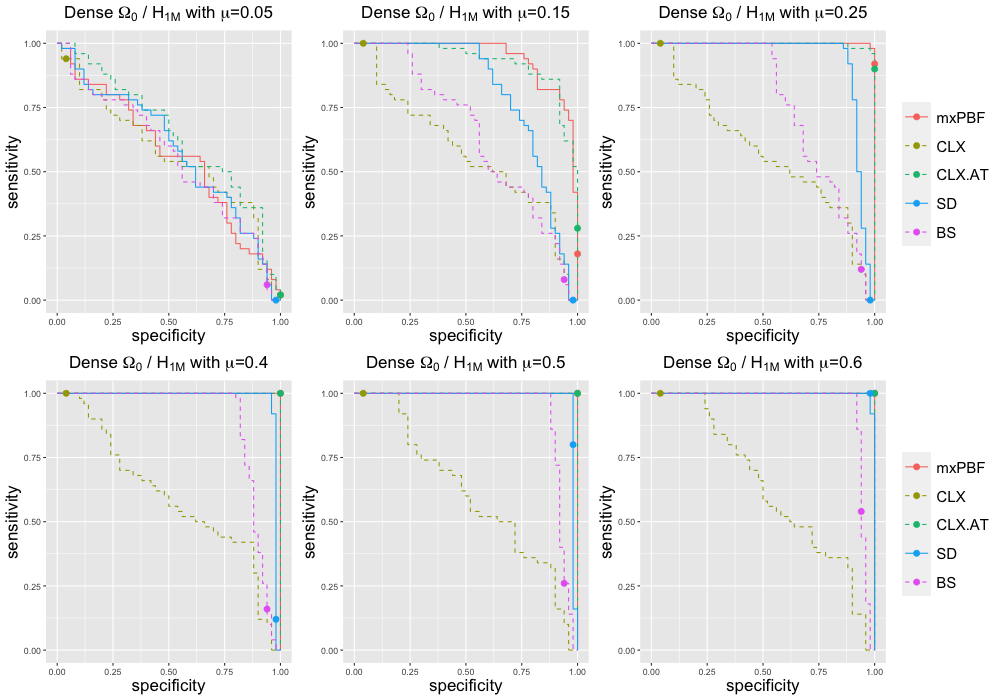}
	\caption{
		ROC curves for the two-sample mean tests based on 50 simulated data sets for each hypothesis, $H_0$ and $H_{1M}$, with $p=300$.
	}
	\label{fig:add_roc2}
\end{figure*}

\subsection{Two-sample covariance test}

In this section, we provide additional numerical results for the two-sample covariance tests when $p=300$.
Figures \ref{fig:add_roc3} and \ref{fig:add_roc4} show ROC curves based on 50 simulated data sets for each hypothesis where the alternative is $H_{1R}: \sg_{01} \neq \sg_{02}$ and $H_{1M}: \sg_{01} \neq \sg_{02}$, respectively.

In the rare signals setting $H_{1R}$, the maximum-type tests outperform the $L_2$-type tests as expected.
Especially, the mxPBF seems to work better than CLX in terms of ROC curves.
In the many signals setting $H_{1M}$, the $L_2$-type tests outperform the maximum-type tests.
However, when $\sg_{01}$ is dense and signals are moderate $(\rho \ge 0.7)$, the mxPBF and CLX outperforms the other tests.
This is consistent with our observation when $p=100$.

\begin{figure*}
	\centering
	\includegraphics[width=16.5cm,height=9.5cm]{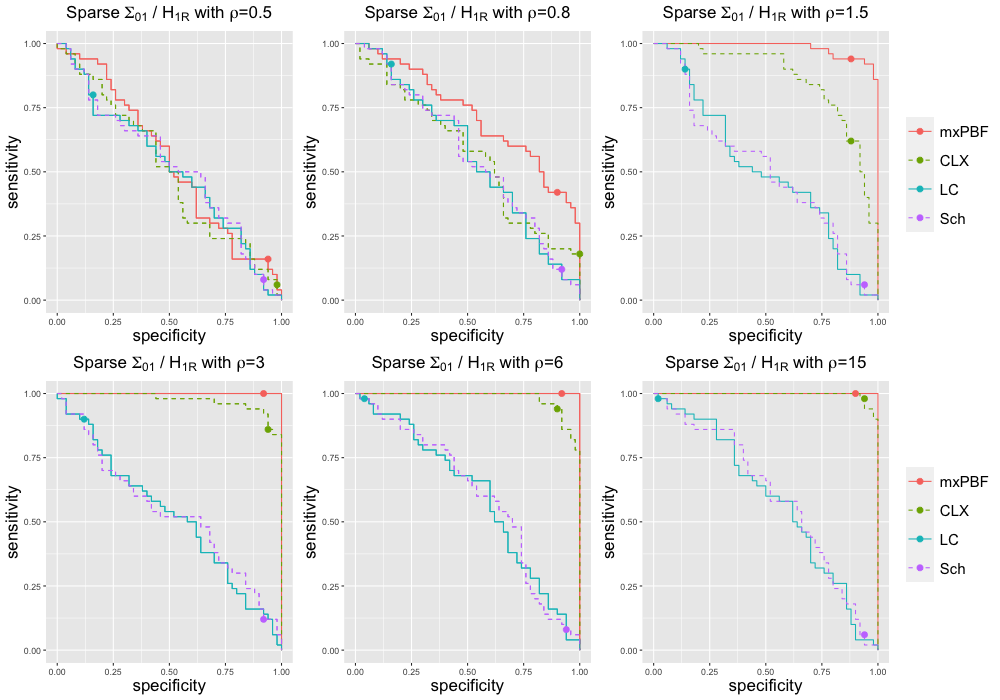}
	\includegraphics[width=16.5cm,height=9.5cm]{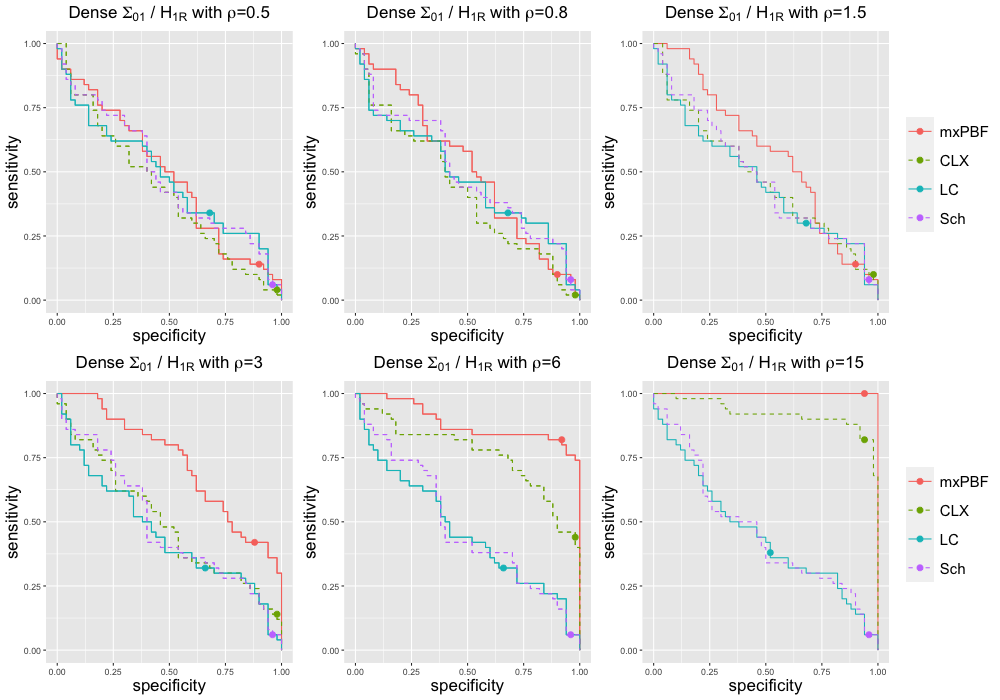}
	\caption{
		ROC curves for the two-sample covariance tests based on 50 simulated data sets for each hypothesis, $H_0$ and $H_{1R}$, with $p=300$.
	}
	\label{fig:add_roc3}
\end{figure*}
\begin{figure*}
	\centering
	\includegraphics[width=16.5cm,height=9.5cm]{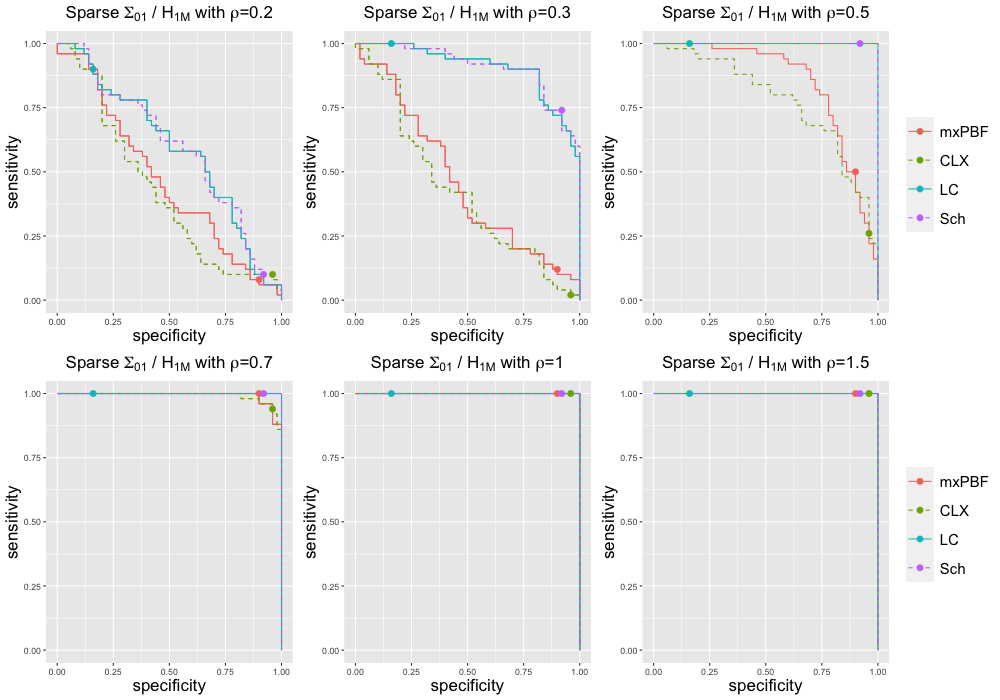}
	\includegraphics[width=16.5cm,height=9.5cm]{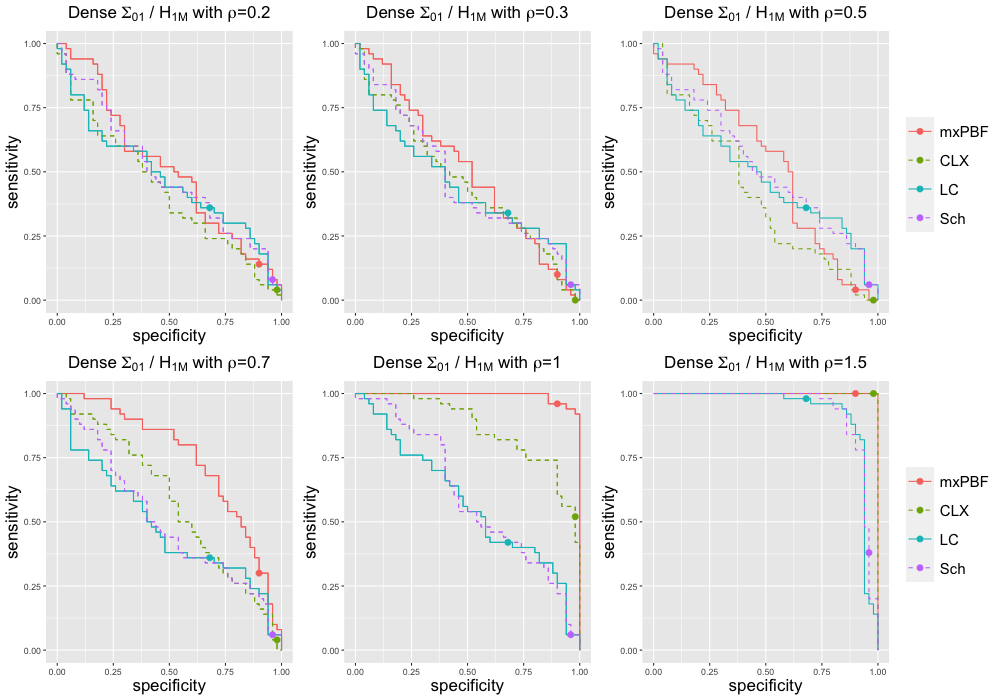}
	\caption{
		ROC curves for the two-sample covariance tests based on 50 simulated data sets for each hypothesis, $H_0$ and $H_{1M}$, with $p=300$.
	}
	\label{fig:add_roc4}
\end{figure*}

\section{Proofs}\label{sec:proof}
Throughout Section \ref{sec:proof}, we denote $\bfZ_n = (\bfX_{n_1}, \bfY_{n_2})$ for simplicity.

\subsection{Proof of Theorem \ref{thm:two_mean}}

\begin{lemma}[Lemmas 6 and 8 in \cite{kolar2012marginal}]\label{lem:chi_bounds}
	Let  $\chi_k^2$ be a chi-squared random variable with degree of freedom $k$, and $\chi_k^2(\lambda)$ be a non-central chi-squared random variable with degree of freedom $k$ and non-centrality $\lambda$.
	Then, for any $x>0$,
	\bea
	\bbP \Big( \chi_k^2 \le k+ 2\sqrt{k x} + 2x \Big) &\ge& 1 -\exp(-x), \\
	\bbP \Big( \chi_k^2  \ge k - 2\sqrt{k x} \Big) &\ge& 1 -\exp(-x), \\
	\bbP \Big( \chi_k^2(\lambda) \le k + \lambda + 2\sqrt{(k+2\lambda) x} + 2x \Big) &\ge& 1 -\exp(-x),  \\
	\bbP \Big( \chi_k^2(\lambda)  \ge k + \lambda- 2\sqrt{(k+2\lambda) x} \Big) &\ge& 1 -\exp(-x) .
	\eea
\end{lemma}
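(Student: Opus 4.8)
The plan is to obtain all four inequalities by the Cram\'er--Chernoff method applied to the moment generating function of the non-central chi-squared law, recovering the two central bounds as the special case $\lambda=0$. Writing $\chi_k^2(\lambda)=\sum_{i=1}^k (Z_i+\delta_i)^2$ with $Z_i\simiid N(0,1)$ and $\sum_i\delta_i^2=\lambda$, the key input is the closed form
\bea
\bbE\big[\exp\{s\,\chi_k^2(\lambda)\}\big] &=& (1-2s)^{-k/2}\exp\Big(\frac{s\lambda}{1-2s}\Big),\qquad 0<s<\tfrac12 .
\eea
Setting $Y:=\chi_k^2(\lambda)-(k+\lambda)$ and subtracting the mean, the centered cumulant generating function computes explicitly, for $0<s<1/2$, to
\bea
\log \bbE\big[e^{sY}\big] &=& \Big[-\tfrac{k}{2}\log(1-2s)-sk\Big] + \frac{2\lambda s^2}{1-2s},
\eea
and the analogous identity with $-s$ in place of $s$ governs the left tail.

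Next I would bound these CGFs by a sub-gamma expression using two elementary logarithmic inequalities obtained from the Taylor series of the logarithm: $-\log(1-u)-u\le \frac{u^2}{2(1-u)}$ for $0\le u<1$, and $-\log(1+u)+u\le \frac{u^2}{2}$ for $u\ge 0$. Substituting $u=2s$ into the first and multiplying by $k$ controls the bracketed central part, giving for the right tail
\bea
\log\bbE\big[e^{sY}\big] &\le& \frac{(k+2\lambda)s^2}{1-2s}, \qquad 0<s<\tfrac12 ,
\eea
while the second inequality, together with $1+2s\ge1$ in the non-central term, yields for the left tail $\log\bbE[e^{-sY}]\le (k+2\lambda)s^2$ for every $s>0$. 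Both have the sub-gamma form $\tfrac{v s^2}{2(1-cs)}$ with $v=2(k+2\lambda)$ and $c=2$ (and $c=0$, i.e.\ purely sub-Gaussian, in the left-tail case).

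Finally I would convert these CGF bounds into the stated tail probabilities via the standard sub-gamma estimate: if $\log\bbE[e^{sY}]\le \tfrac{vs^2}{2(1-cs)}$ on $0<s<1/c$, then $\bbP(Y\ge \sqrt{2vx}+cx)\le e^{-x}$, and if $\log\bbE[e^{-sY}]\le \tfrac{vs^2}{2}$ then $\bbP(Y\le -\sqrt{2vx})\le e^{-x}$; these follow by optimizing the Chernoff bound $\bbP(Y\ge t)\le \exp\{-(st-\psi(s))\}$ over $s$, equivalently by evaluating the Legendre transform of $vs^2/\{2(1-cs)\}$. Plugging in $v=2(k+2\lambda)$ and $c=2$ gives $\sqrt{2vx}=2\sqrt{(k+2\lambda)x}$ and $cx=2x$, which are exactly the two non-central bounds; specializing to $\lambda=0$ collapses $k+2\lambda$ to $k$ and reproduces the two central bounds. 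I expect the only genuinely delicate point to be the sub-gamma conversion step — arranging the Chernoff optimum to land on the clean closed form $\sqrt{2vx}+cx$ rather than an implicit expression — whereas the MGF identity and the logarithmic inequalities are entirely routine.
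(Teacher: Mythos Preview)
Your argument is correct and is essentially the standard Laurent--Massart/Birg\'e derivation: compute the non-central chi-squared MGF, bound the centered CGF by a sub-gamma envelope via the two logarithmic inequalities, and then invert by the Cram\'er--Chernoff sub-gamma tail bound. Each step checks out, including the constants $v=2(k+2\lambda)$, $c=2$, and the sub-Gaussian left tail with $c=0$.

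There is, however, nothing to compare against: the paper does not give its own proof of this lemma. It is stated as a quotation of Lemmas~6 and~8 of \cite{kolar2012marginal} and used as a black box throughout the supplement. So your write-up is strictly more than what the paper provides; if anything, it supplies the self-contained justification the paper defers to a reference. The only cosmetic suggestion is that you might cite Laurent and Massart (2000) for the central case and Birg\'e (2001) for the non-central extension, since those are the original sources of exactly the argument you outline.
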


\begin{proof}[Proof of Theorem \ref{thm:two_mean}]
	
	We first assume the true null $H_{0}: \mu_{1}=\mu_{2}=\mu_{0}$. 
	For a given $1\le j\le p$, the PBF is
	\bean
	\log B_{10} (\tilde{X}_j, \tilde{Y}_j)
	&=& \frac{1}{2}\log \Big(\frac{\gamma}{1+\gamma} \Big) \nonumber\\
	&+& \frac{n}{2}  \log \Bigg( \frac{ n\what{\sigma}^2_{Z_j} }{ n_1 \what{\sigma}^2_{X_j} + n_2 \what{\sigma}^2_{Y_j} }  \Bigg). \label{tmeanBF_main}
	\eean
	Since $\log(\gamma/(1+\gamma)) \le -\alpha \log (n\vee p)$, we focus on an upper bound for \eqref{tmeanBF_main}.
	We have
	\bea
	\log \Bigg( \frac{ n\what{\sigma}^2_{Z_j} }{ n_1 \what{\sigma}^2_{X_j} + n_2 \what{\sigma}^2_{Y_j} }  \Bigg)
	&\le& \frac{ n\what{\sigma}^2_{Z_j} - n_1 \what{\sigma}^2_{X_j} - n_2 \what{\sigma}^2_{Y_j} }{  n_1 \what{\sigma}^2_{X_j} + n_2 \what{\sigma}^2_{Y_j}  }
	\eea
	because $\log(1+x) \le x$ for any $x\in \bbR$.
	It is easy to show that
	\bea
	n_1 \what{\sigma}^2_{X_j} + n_2 \what{\sigma}^2_{Y_j}
	&\overset{d}{=}& \sigma_{0,jj}\cdot \chi_{n-2}^2,
	\eea
	under $H_{0j}: \mu_{1j}=\mu_{2j}=\mu_{0j}$.
	Furthermore,
	\bea
	&& n\what{\sigma}^2_{Z_j} - n_1 \what{\sigma}^2_{X_j} - n_2 \what{\sigma}^2_{Y_j} \\
	&=& \frac{1}{n} \Bigg( \sqrt{\frac{n_2}{n_1}}\sum_{i=1}^{n_1}X_{ij} - \sqrt{\frac{n_1}{n_2}}\sum_{i=1}^{n_2}Y_{ij} \Bigg)^2 \\
	&=& \frac{1}{n} \Bigg( \sqrt{\frac{n_2}{n_1}}\Big(\sum_{i=1}^{n_1}X_{ij}-n_1\mu_{0j} \Big) - \sqrt{\frac{n_1}{n_2}}\Big(\sum_{i=1}^{n_2}Y_{ij}-n_2\mu_{0j} \Big) \Bigg)^2 \,\,\overset{d}{=}\,\, \sigma_{0,jj} \cdot \chi_1^2 .
	\eea
	Define the events
	\bea
	A_{1j} &=& \Big\{\bfZ_n: \sigma_{0,jj}^{-1} \big( n_1 \what{\sigma}^2_{X_j} + n_2 \what{\sigma}^2_{Y_j} \big) \ge n-2 - 2 \sqrt{(n-2)C\log (n\vee p)}  \Big\},\\
	A_{2j} &=& \Big\{\bfZ_n: \sigma_{0,jj}^{-1} \big(n\what{\sigma}^2_{Z_j} - n_1 \what{\sigma}^2_{X_j} - n_2 \what{\sigma}^2_{Y_j} \big) \le 1 + 2\sqrt{C\log(n\vee p)} + 2C\log(n\vee p)   \Big\}
	\eea
	for any constant $1<C <C_1$.
	By Lemma \ref{lem:chi_bounds}, $\sum_{j=1}^p\sum_{l=1}^2\bbP_0(A_{lj}^c) \le 2 (n\vee p)^{-C+1} = o(1)$.
	On the event $A_{1j}\cap A_{2j}\cap A_{3j}$, \eqref{tmeanBF_main} is bounded above by
	\bea
	&& \frac{n}{2} \frac{1 + 2\sqrt{C\log(n\vee p)} + 2C\log(n\vee p)}{n-2 - 2 \sqrt{(n-2)C\log(n\vee p)}}  \\
	&=& \frac{1}{2}  \frac{ \{C\log(n\vee p)\}^{-1} +  2(\sqrt{C\log(n\vee p)})^{-1} + 2  }{1 - 2 n^{-1} - 2 \sqrt{(1-2 n^{-1})C n^{-1}\log(n\vee p)}} \cdot C\log(n\vee p) \\
	&=& \frac{ \{2 C\log(n\vee p)\}^{-1} +  (\sqrt{C\log(n\vee p)})^{-1} + 1  }{1 - 2 n^{-1} - 2 \sqrt{(1-2 n^{-1})C n^{-1}\log(n\vee p)}} \cdot C\log(n\vee p) \\
	&\le& \frac{1 + \epsilon_0}{1 - 3 \sqrt{C_1 \epsilon_0}} \cdot C\log(n\vee p) \\
	&=:& C_0 C \log(n\vee p) 
	\eea
	for all large $n_1 \wedge n_2$.
	Since $\alpha > 2C_0$ by condition \eqref{alpha_mean}, by taking $C$ arbitrarily close to 1, we have $C_0 C - \alpha/2 <0$.
	Thus,
	\bea
	&& \bbP_0 \Big( \log B^{\mu}_{\max,10}(\bfX_{n_1}, \bfY_{n_2}) \le (C_0 C - \frac{\alpha}{2} )\log (n\vee p) \Big) \\
	&=& 1 - \bbP_0 \Big( \max_{1\le j \le p}\log B_{10}(\tilde{X}_j, \tilde{Y}_j) > (C_0 C - \frac{\alpha}{2} )\log (n\vee p) \Big)  \\
	&\ge& 1 - \sum_{1\le j \le p} \bbP_0 \Big( \log B_{10}(\tilde{X}_j, \tilde{Y}_j) > (C_0 C - \frac{\alpha}{2} )\log (n\vee p) \Big)  \\
	&\ge& 1 - 2 (n\vee p)^{-C+1}
	\eea
	for some constant $1<C < C_1$. 
	It completes the proof under $H_0$.

	Now, for a given $1\le j\le p$, we assume $H_{1j}$ is true, i.e., $\mu_{0,1j} \neq \mu_{0,2j}$, and condition \eqref{betamin_mean} holds.
	Note that
	\bea
	\log B_{10} (\tilde{X}_j, \tilde{Y}_j)
	&\ge& \frac{1}{2}\log \Big(\frac{\gamma}{1+\gamma} \Big) +  \frac{n}{2} \,\frac{ n\what{\sigma}^2_{Z_j} - n_1 \what{\sigma}^2_{X_j} - n_2 \what{\sigma}^2_{Y_j} }{  n \what{\sigma}^2_{Z_j}  }  
	\eea
	because $-\log(1-x) \le x/(1-x)$ for any $x<1$, and
	\bea
	n\what{\sigma}^2_{Z_j}
	&=& n_1 \what{\sigma}^2_{X_j} + n_2 \what{\sigma}^2_{Y_j} + \frac{1}{n} \Big( \sqrt{\frac{n_2}{n_1}}\sum_{i=1}^{n_1}X_{ij} - \sqrt{\frac{n_1}{n_2}}\sum_{i=1}^{n_2}Y_{ij} \Big)^2.
	\eea
	Under $H_{1j}$, 
	\bea
	\frac{1}{n\sigma_{0,jj}}\Big( \sqrt{\frac{n_2}{n_1}}\sum_{i=1}^{n_1}X_{ij} - \sqrt{\frac{n_1}{n_2}}\sum_{i=1}^{n_2}Y_{ij} \Big)^2 &\overset{d}{=}& \chi_1^2\Big(\frac{n_1n_2}{n\sigma_{0,jj}} (\mu_{0,1j}-\mu_{0,2j})^2 \Big),\\
	n_1 \what{\sigma}^2_{X_j} + n_2 \what{\sigma}^2_{Y_j} &\overset{d}{=}& \sigma_{0,jj} \chi_{n-2}^2 .
	\eea
	Define the sets
	\bea
	A_{3j} &=& \Big\{ \bfZ_n : (n\sigma_{0,jj})^{-1}\Big( \sqrt{\frac{n_2}{n_1}}\sum_{i=1}^{n_1}X_{ij} - \sqrt{\frac{n_1}{n_2}}\sum_{i=1}^{n_2}Y_{ij} \Big)^2 \ge 1 + \frac{n_1n_2}{n\sigma_{0,jj}} (\mu_{0,1j}-\mu_{0,2j})^2 \\
	&& \quad\quad\quad\quad\quad\quad - 2 \sqrt{\Big(1+2\frac{n_1n_2}{n\sigma_{0,jj}} (\mu_{0,1j}-\mu_{0,2j})^2 \Big) C\log(n\vee p) }\,\, \Big\} , \\
	A_{4j} &=& \Big\{ \bfZ_n:  \sigma_{0,jj}^{-1}(n_1 \what{\sigma}^2_{X_j} + n_2 \what{\sigma}^2_{Y_j}) \le n-2 + 2\sqrt{(n-2)C\log(n\vee p)} + 2C\log(n\vee p)  \Big\} , 
	\eea
	then $\sum_{l=3}^4 \bbP_0(A_{lj}^c) \le 2(n\vee p)^{-C} = o(1)$ by Lemma \ref{lem:chi_bounds} for any constant $C$ such that $0<C<C_1$.
	On the event $A_{3j}\cap A_{4j} $, 
	\bea
	&& n\what{\sigma}^2_{Z_j} - n_1 \what{\sigma}^2_{X_j} - n_2 \what{\sigma}^2_{Y_j}  \\
	&\ge& \sigma_{0,jj}\Bigg\{ 1+ \frac{n_1n_2}{n\sigma_{0,jj}} (\mu_{0,1j}-\mu_{0,2j})^2  - 2 \sqrt{\Big(1+2\frac{n_1n_2}{n\sigma_{0,jj}} (\mu_{0,1j}-\mu_{0,2j})^2\Big) C\log(n\vee p) } \Bigg\} .
	\eea
	Let $\lambda_n := n_1 n_2 (\mu_{0,1j} - \mu_{0,2j})^2/(n\sigma_{0,jj})$.
	Since $\log B_{10}(\tilde{X}_j, \tilde{Y}_j) \to \infty$ as $n_1\wedge n_2\to\infty$ on event $A_{3j}\cap A_{4j} $ when $\lambda_n \ge \epsilon_0 n$, we only need to consider the case $\lambda_n \le \epsilon_0 n$.
	Thus, on the event $A_{3j}\cap A_{4j}$, 
	\bea
	\log B_{10}(\tilde{X}_j, \tilde{Y}_j) 
	&\ge& \frac{1}{2}\log \Big(\frac{\gamma}{1+\gamma}\Big) 
	+ \frac{\lambda_n \Big(1+ \lambda_n^{-1} - 2\lambda_n^{-1}\sqrt{(1+2\lambda_n)C\log(n\vee p)}  \Big) }{2\{1+ 2 \sqrt{C \epsilon_0} + (1+2C + 2 \sqrt{3C})\epsilon_0 \} } \\
	&=& -\frac{\alpha}{2}\log(n\vee p) + \frac{1}{2} \log(1+\gamma) 
	+ \frac{\lambda_n \Big(1+ \lambda_n^{-1} - 2\lambda_n^{-1}\sqrt{(1+2\lambda_n)C\log(n\vee p)}  \Big) }{2\{1+ 2 \sqrt{C \epsilon_0} + (1+2C + 2 \sqrt{3C})\epsilon_0 \} } 
	\eea
	for all large $n_1 \wedge n_2$.
	Let 
	\bea
	K &=& \Big( \sqrt{2C_1} + \sqrt{2C_1 + \alpha C_1 \{ 1+ (1+ 8 C_1)\epsilon_0 \} } \Big)^2
	\eea
	then $\lambda_n \ge K \log(n\vee p)$ and
	\bea
	\lambda_n^{-1}\sqrt{(1+2\lambda_n)C\log(n\vee p)} 
	&=& \sqrt{(2 + \lambda_n^{-1})C \lambda_n^{-1} \log(n\vee p)}  \\
	&\le& \sqrt{(2 + K^{-1} (\log (n\vee p))^{-1} )CK^{-1} } \\
	&\le& \sqrt{2 C_1 K^{-1}}
	\eea
	for some constants $0<C < C_1$ and all large $n_1 \wedge n_2$, and
	\bea
	\frac{\lambda_n \Big(1+ \lambda_n^{-1} - 2\lambda_n^{-1}\sqrt{(1+2\lambda_n)C\log(n\vee p)}  \Big) }{2\{1+ 2 \sqrt{C \epsilon_0} + (1+2C + 2 \sqrt{3C})\epsilon_0 \} } 
	&\ge& \frac{K(1 - 2 \sqrt{2C_1 K^{-1}}) \log (n\vee p)}{2\{1+ 2 \sqrt{C \epsilon_0} + (1+2C + 2 \sqrt{3C})\epsilon_0 \}}  \\
	&\ge& \frac{\alpha C_1}{2} \log (n\vee p)
	\eea
	because 
	\bea
	K &\ge& \Big( \sqrt{2C_1} + \sqrt{2C_1 + \alpha C_1 \{ 1+ 2 \sqrt{C \epsilon_0} + (1+2C + 2 \sqrt{3C})\epsilon_0 \} } \Big)^2 .
	\eea
	It implies 
	\bea
	&& \bbP_0 \Big( \log B^{\mu}_{\max,10}(\bfX_{n_1}, \bfY_{n_2}) \ge  \{\alpha(C_1-1)/2\} \, \log(n\vee p)  \Big)  \\
	&\ge& \max_{1\le j\le p} \bbP_0 \Big( \log B_{10}(\tilde{X}_j, \tilde{Y}_j) \ge  \{\alpha(C_1-1)/2\} \, \log(n\vee p)  \Big)  \\
	&\ge& 1 - 2(n\vee p)^{-C}
	\eea
	for some constants $0<C < C_1$.
	
\end{proof}

\subsection{Proof of Theorem \ref{thm:two_cov}}

\begin{lemma}\label{lem:Gammaf_ineq}
	For any integer $n_1$ and $n_2$ such that $n_1 \wedge n_2 > 4$ and constant $a_0>0$, 
	\bea
	&& \sum_{l=1}^2\log \Gamma \Big(\frac{n_l}{2}+a_0 \Big) - \log \Gamma \Big(\frac{n}{2}+a_0 \Big) \\
	&\le& 2\big(1 - \log 2\big) - a_0 + a_0 \log \Big( \frac{n}{2} + a_0\Big)
	+ \sum_{l=1}^2 \Big(\frac{n_l -1}{2}+a_0\Big)\log \Big(\frac{n_l +2a_0}{n+2a_0} \Big) , \\
	&& \sum_{l=1}^2\log \Gamma \Big(\frac{n_l}{2}+a_0 \Big) - \log \Gamma \Big(\frac{n}{2}+a_0 \Big) \\
	&\ge& 2\big(1 - \log 2\big) - a_0  + \Big( \frac{1}{2}+ a_0\Big)  \log \Big( \frac{n}{2} + a_0\Big) 	  \\
	&+& \sum_{l=1}^2 \left\{ \Big(\frac{n_l}{2}+a_0\Big)\log \Big(\frac{n_l +2a_0}{n+2a_0} \Big) - \frac{1}{2}\log\Big(\frac{n_l}{2}+ a_0 \Big) \right\},
	\eea
	where $n= n_1+n_2$.
\end{lemma}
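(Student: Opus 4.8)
The plan is to reduce the whole statement to the Binet form of Stirling's formula plus elementary bookkeeping. Recall that for every $x>0$
\[
\log\Gamma(x) = \tfrac12\log(2\pi) + \Big(x-\tfrac12\Big)\log x - x + \mu(x), \qquad 0 < \mu(x) < \tfrac{1}{12x}.
\]
Set $m_l = n_l/2 + a_0$ for $l=1,2$ and $m = n/2 + a_0$, so that $m_1+m_2 = m + a_0$ and $m_l/m = (n_l+2a_0)/(n+2a_0)$; note also $m_l > 2$ and $m > 4$ because $n_1\wedge n_2 > 4$. Substituting the Binet formula into $\sum_{l=1}^2\log\Gamma(m_l)-\log\Gamma(m)$, the three half-log-$2\pi$ contributions combine to a single $\tfrac12\log(2\pi)$, the linear terms give $-(m_1+m_2)+m=-a_0$, and writing $\log m_l = \log m + \log(m_l/m)$ and collecting the coefficient of $\log m$ (namely $m_1+m_2-1-(m-\tfrac12) = a_0-\tfrac12$) yields the exact identity
\[
\sum_{l=1}^2\log\Gamma(m_l) - \log\Gamma(m) = \tfrac12\log(2\pi) - a_0 + \Big(a_0 - \tfrac12\Big)\log m + \sum_{l=1}^2\Big(\tfrac{n_l-1}{2}+a_0\Big)\log\frac{n_l+2a_0}{n+2a_0} + R,
\]
where $R := \mu(m_1)+\mu(m_2)-\mu(m)$ and $\tfrac{n_l-1}{2}+a_0 = m_l-\tfrac12$.

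For the upper bound I would split $(a_0-\tfrac12)\log m = a_0\log m - \tfrac12\log m$; since $R < \tfrac1{12m_1}+\tfrac1{12m_2}$ and $-\tfrac12\log m < -\tfrac12\log 4$, the leftover constant obeys $\tfrac12\log(2\pi) - \tfrac12\log m + R < 2(1-\log 2)$ for all $n_1\wedge n_2>4$, which is a one-line numerical check and delivers the stated upper bound. For the lower bound I would instead rewrite $\sum_{l=1}^2(m_l-\tfrac12)\log(m_l/m) = \sum_{l=1}^2 m_l\log(m_l/m) - \tfrac12\sum_{l=1}^2\log m_l + \log m$ (the extra $\log m$ appearing because there are two summands), absorb that $\log m$ into $(a_0-\tfrac12)\log m$ to obtain the coefficient $a_0+\tfrac12$, and then use $R > -\mu(m) > -\tfrac1{12m}$ together with $\tfrac12\log(2\pi) > 2(1-\log 2)$ to get $\tfrac12\log(2\pi)+R > 2(1-\log 2)$; undoing the abbreviation $m_l = n_l/2+a_0$ gives exactly the claimed lower bound.

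The only place where care is needed is the bookkeeping of the $\tfrac12$-shifts and of the identity $m_1+m_2 = m+a_0$ through the two rearrangements: the coefficient of $\log m$ is $a_0-\tfrac12$ in the form matched to the upper bound but $a_0+\tfrac12$ in the form matched to the lower bound, and the sum over pairs contributes an extra $+\log m$ in the second rearrangement. I would therefore track these coefficients explicitly rather than by inspection. Everything else — the positivity and the $1/(12x)$ bound on $\mu$, and the two numerical comparisons against $2(1-\log 2)$ under $n_1\wedge n_2>4$ — is routine, so no genuine obstacle is expected.
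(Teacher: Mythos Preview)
Your proof is correct. Both your argument and the paper's rest on Stirling-type control of $\log\Gamma$, but the execution differs. The paper invokes the Ke\v{c}ki\'{c}--Vasi\'{c} inequalities
\[
(x-1)\log x + y - (y-1)\log y - x \;\le\; \log\Gamma(x)-\log\Gamma(y) \;\le\; (x-\tfrac12)\log x + y - (y-\tfrac12)\log y - x
\]
and applies them termwise (each $\log\Gamma(n_l/2+a_0)$ against $\log\Gamma(2)$, and $\log\Gamma(n/2+a_0)$ against either $\log\Gamma(2)$ or $\log\Gamma(n_1/2+a_0)$), then sums the resulting bounds. You instead write a single exact identity via Binet's form and isolate the entire error in the remainder $R=\mu(m_1)+\mu(m_2)-\mu(m)$, which you bound once using $0<\mu(x)<1/(12x)$. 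Your route is a little more self-contained and makes the bookkeeping of the $\log m$ coefficients ($a_0-\tfrac12$ versus $a_0+\tfrac12$) transparent; the paper's route trades that for citing a ready-made inequality and avoids the numerical comparisons against $2(1-\log 2)$. Neither approach has any real advantage in strength---both produce exactly the stated bounds under $n_1\wedge n_2>4$.
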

\begin{proof}
	By Theorem 1 of \cite{kevckic1971some}, 
	\bea
	(x-1)\log x +y - (y-1)\log y -x &\le& \log \Gamma(x) - \log \Gamma(y) \\
	&\le& (x- \frac{1}{2})\log x +y - (y-\frac{1}{2})\log y  - x
	\eea
	for any $x\ge y>1$.
	Thus, 
	\bea
	\log \Gamma\Big(\frac{n_l}{2} + a_0 \Big) - \log \Gamma(2)
	&\le& \Big(\frac{n_l-1}{2} + a_0 \Big) \log \Big(\frac{n_l}{2} + a_0\Big) + 2 - \frac{3}{2}\log 2 - \Big(\frac{n_l}{2} + a_0\Big), \\
	- \log \Gamma\Big(\frac{n}{2}+a_0\Big) + \log \Gamma(2) 
	&\le& - \Big(\frac{n}{2}+a_0 \Big) \log \Big(\frac{n}{2}+a_0 \Big) -2 + \log 2 + \Big(\frac{n}{2}+a_0 \Big)
	\eea
	for $l=1,2$.
	This completes the proof for the first inequality.
	By Theorem 1 of \cite{kevckic1971some}, we also obtain
	\bea
	- \log \Gamma \Big(\frac{n}{2}+a_0 \Big) +\log \Gamma \Big(\frac{n_1}{2}+a_0 \Big)
	&\ge& - \Big(\frac{n}{2}+a_0 -\frac{1}{2} \Big) \log \Big(\frac{n}{2}+a_0 \Big) - \Big(\frac{n_1}{2}+a_0\Big) \\
	&+& \Big(\frac{n_1}{2}+a_0-\frac{1}{2}\Big) \log \Big(\frac{n_1}{2}+a_0 \Big) + \Big(\frac{n}{2}+a_0\Big), \\
	\log \Gamma \Big(\frac{n_1}{2}+a_0 \Big) &\ge& \Big(\frac{n_2}{2}+a_0-1\Big) \log \Big(\frac{n_2}{2}+a_0\Big) + 2 - \log 2 - \Big(\frac{n_2}{2}+a_0 \Big),
	\eea
	which completes the proof for the second inequality.
\end{proof}

\begin{lemma}\label{lem:setAij0}
	Suppose $\sg_1= \sg_2 = \sg_0 = (\sigma_{0,ij})$ in model \eqref{two_cov}.
	Define 
	\bea
	A_{0ij} &=& \Big\{ \bfZ_n: 1 - 2\sqrt{C\log(n\vee p)} \le D_{ij} \le 1 + 2 \sqrt{C\log(n\vee p)} + 2C\log(n\vee p)  \Big\} ,
	\eea
	where 
	\bea
	D_{ij} &=& \frac{1}{\tau_{0,ij} \|\tilde{Z}_j\|_2^2} \Big( \|\tilde{Y}_j\|_2 \frac{\tilde{X}_i^T\tilde{X}_j }{\|\tilde{X}_j\|_2 } - \|\tilde{X}_j\|_2 \frac{\tilde{Y}_i^T\tilde{Y}_j }{\|\tilde{Y}_j\|_2 }  \Big)^2 .
	\eea
	Then, 
	$$\sum_{i=1}^p\sum_{j=1}^p \bbP_0 (A_{0ij}^c) \le 2 p^2 (n\vee p)^{-C}$$ 
	for any constant $C>0$.
\end{lemma}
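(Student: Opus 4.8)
The plan is to show that, under $H_0$, for every pair $(i,j)$ the statistic $D_{ij}$ is \emph{exactly} a chi-squared random variable with one degree of freedom, and then to read off the bound from the two-sided chi-squared tail inequalities of Lemma~\ref{lem:chi_bounds} together with a crude union bound over the at most $p^2$ pairs.

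First I would condition on $(\tilde{X}_j,\tilde{Y}_j)$. Since $\sg_1=\sg_2=\sg_0$ under $H_0$, the conditional representation \eqref{cond_dist_XY} holds with the common true parameters $a_{0,ij}=\sigma_{0,ij}/\sigma_{0,jj}$ and $\tau_{0,ij}=\sigma_{0,ii}(1-\rho_{0,ij}^2)$: $\tilde{X}_i\mid\tilde{X}_j\sim N_{n_1}(a_{0,ij}\tilde{X}_j,\tau_{0,ij}I_{n_1})$ and $\tilde{Y}_i\mid\tilde{Y}_j\sim N_{n_2}(a_{0,ij}\tilde{Y}_j,\tau_{0,ij}I_{n_2})$. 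Because $\bfX_{n_1}$ and $\bfY_{n_2}$ are independent, conditioning additionally on $\tilde{Y}_j$ (resp.\ $\tilde{X}_j$) leaves the law of $\tilde{X}_i$ (resp.\ $\tilde{Y}_i$) unchanged, and $\tilde{X}_i,\tilde{Y}_i$ are conditionally independent given $(\tilde{X}_j,\tilde{Y}_j)$.

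Next I would pin down the conditional law of $W_{ij}:=\|\tilde{Y}_j\|_2\,\tilde{X}_i^T\tilde{X}_j/\|\tilde{X}_j\|_2-\|\tilde{X}_j\|_2\,\tilde{Y}_i^T\tilde{Y}_j/\|\tilde{Y}_j\|_2$, so that $D_{ij}=W_{ij}^2/(\tau_{0,ij}\|\tilde{Z}_j\|_2^2)$. Given $(\tilde{X}_j,\tilde{Y}_j)$, the scalar $\tilde{X}_i^T\tilde{X}_j$ is a fixed linear functional of the Gaussian vector $\tilde{X}_i$, hence normal with mean $a_{0,ij}\|\tilde{X}_j\|_2^2$ and variance $\tau_{0,ij}\|\tilde{X}_j\|_2^2$; dividing by $\|\tilde{X}_j\|_2$ gives mean $a_{0,ij}\|\tilde{X}_j\|_2$ and variance $\tau_{0,ij}$, and symmetrically $\tilde{Y}_i^T\tilde{Y}_j/\|\tilde{Y}_j\|_2$ has mean $a_{0,ij}\|\tilde{Y}_j\|_2$ and variance $\tau_{0,ij}$, independently of the first. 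Hence $W_{ij}\mid(\tilde{X}_j,\tilde{Y}_j)$ is Gaussian; the two means cancel (this is exactly where $\sg_1=\sg_2$ enters), so it has mean $0$, and its variance is $\tau_{0,ij}(\|\tilde{X}_j\|_2^2+\|\tilde{Y}_j\|_2^2)=\tau_{0,ij}\|\tilde{Z}_j\|_2^2$ because $\tilde{Z}_j=(\tilde{X}_j^T,\tilde{Y}_j^T)^T$. Therefore $D_{ij}\mid(\tilde{X}_j,\tilde{Y}_j)\sim\chi_1^2$, and since this conditional distribution does not depend on $(\tilde{X}_j,\tilde{Y}_j)$, the unconditional law of $D_{ij}$ is $\chi_1^2$ as well. (The diagonal case $i=j$ is trivial: then $W_{jj}\equiv 0$, so $D_{jj}\equiv 0$, which lies in the stated interval once $n\vee p$ is large.)

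Finally I would apply the first two inequalities of Lemma~\ref{lem:chi_bounds} with $k=1$ and $x=C\log(n\vee p)$: these give $\bbP_0\big(D_{ij}\le 1+2\sqrt{C\log(n\vee p)}+2C\log(n\vee p)\big)\ge 1-(n\vee p)^{-C}$ and $\bbP_0\big(D_{ij}\ge 1-2\sqrt{C\log(n\vee p)}\big)\ge 1-(n\vee p)^{-C}$, hence $\bbP_0(A_{0ij}^c)\le 2(n\vee p)^{-C}$ for each $(i,j)$, and summing over the at most $p^2$ pairs yields the claim. The whole argument is essentially a computation; the only step that needs genuine care is tracking the conditional independence of the two samples and verifying the exact cancellation of the conditional means, which is precisely the feature that encodes $H_0$.
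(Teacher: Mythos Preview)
Your proposal is correct and follows essentially the same approach as the paper: condition on $(\tilde{X}_j,\tilde{Y}_j)$, use the conditional model \eqref{cond_dist_XY} to show $W_{ij}\mid\tilde{Z}_j\sim N(0,\tau_{0,ij}\|\tilde{Z}_j\|_2^2)$ so that $D_{ij}\sim\chi_1^2$, then apply Lemma~\ref{lem:chi_bounds} with $k=1$, $x=C\log(n\vee p)$ and sum over $p^2$ pairs. Your write-up is in fact slightly more careful than the paper's in spelling out the conditional independence and the diagonal case.
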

\begin{proof}
	By \eqref{cond_dist_XY}, we have
	\bea
	\tilde{X}_i^T \tilde{X}_j \mid \tilde{X}_j &\sim& N \big( a_{0,ij}\|\tilde{X}_j\|_2^2, \tau_{0,ij} \|\tilde{X}_j\|_2^2 \big), \\
	\tilde{Y}_i^T \tilde{Y}_j \mid \tilde{Y}_j &\sim& N\big( a_{0,ij}\|\tilde{Y}_j\|_2^2, \tau_{0,ij} \|\tilde{Y}_j\|_2^2 \big).
	\eea
	Then, it is easy to check that
	\bea
	\|\tilde{Y}_j\|_2 \frac{\tilde{X}_i^T\tilde{X}_j }{\|\tilde{X}_j\|_2 } - \|\tilde{X}_j\|_2 \frac{\tilde{Y}_i^T\tilde{Y}_j }{\|\tilde{Y}_j\|_2 } \mid \tilde{Z}_j
	&\sim& N\Big( 0, \tau_{0,ij} \|\tilde{Z}_j\|_2^2  \Big),
	\eea
	which implies $D_{ij} \sim \chi_1^2$.
	Thus, by Lemma \ref{lem:chi_bounds}, 
	\bea
	\sum_{i=1}^p\sum_{j=1}^p \bbP_0 (A_{0ij}^c) 
	&\le& \sum_{i=1}^p\sum_{j=1}^p  2 \exp \Big(- C \log (n\vee p) \Big) \\
	&\le& 2 p^2(n \vee p)^{-C}.
	\eea
\end{proof}

\begin{lemma}\label{lem:setAij1-8}
	Consider model \eqref{two_cov} with true covariances $\sg_{01}=(\sigma_{01,ij})$ and $\sg_{02}=(\sigma_{02,ij})$.
	Define the sets
	\bea
	A_{1ij} &=& \Big\{ \bfZ_n :  \Big(1-2 \sqrt{\frac{C \epsilon_{01}}{1-n_1^{-1}} } \, \Big) \le  \frac{\what{\tau}_{1,ij}}{\tau_{01,ij}(1- n_1^{-1})} \le \Big( 1 + 2 \sqrt{\frac{C \epsilon_{01}}{1-n_1^{-1}}}+  \frac{2C\epsilon_{01}}{1-n_1^{-1}} \Big) \,\, \Big\} , 
	\eea
	where $\tau_{0k,ij} = \sigma_{0k,ii}(1- \rho_{0k,ij}^2)$ and $\rho_{0k,ij}= \sigma_{0k, ij}/(\sigma_{0k,ii} \sigma_{0k,jj} )^{1/2}$ for $k=1,2$.
	Similarly, define the sets
	\bea
	A_{2ij} &=& \Big\{ \bfZ_n :  \Big(1-2 \sqrt{\frac{C \epsilon_{02}}{1-n_2^{-1}} } \, \Big) \le  \frac{\what{\tau}_{2,ij}}{\tau_{02,ij}(1- n_2^{-1})} \le \Big( 1 + 2 \sqrt{\frac{C \epsilon_{02}}{1-n_2^{-1}}}+  \frac{2C\epsilon_{02}}{1-n_2^{-1}} \Big) \,\, \Big\} ,
	\eea
	Then, 
	$$\sum_{l=1}^2 \bbP_0 (A_{lij}^c) \le 4(n\vee p)^{-C}$$ 
	for any constant $C>0$.
\end{lemma}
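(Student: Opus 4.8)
The plan is to pin down the exact sampling distributions of $\what\tau_{1,ij}$ and $\what\tau_{2,ij}$ and then read the bound off Lemma~\ref{lem:chi_bounds}. First I would use the conditional representation~\eqref{cond_dist_XY}: given $\tilde X_j$, the residual $\tilde X_i - a_{01,ij}\tilde X_j$ is $N_{n_1}(0,\tau_{01,ij}I_{n_1})$, while $I_{n_1}-H_{\tilde X_j}$ is an orthogonal projection of rank $n_1-1$ with $(I_{n_1}-H_{\tilde X_j})\tilde X_j=0$. Hence
\[
n_1\what\tau_{1,ij}=\tilde X_i^T(I_{n_1}-H_{\tilde X_j})\tilde X_i=(\tilde X_i-a_{01,ij}\tilde X_j)^T(I_{n_1}-H_{\tilde X_j})(\tilde X_i-a_{01,ij}\tilde X_j),
\]
which conditionally on $\tilde X_j$ follows $\tau_{01,ij}\chi^2_{n_1-1}$. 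Since this conditional law does not involve $\tilde X_j$, it is the unconditional law as well, so $\what\tau_{1,ij}/\{\tau_{01,ij}(1-n_1^{-1})\}\overset{d}{=}(n_1-1)^{-1}\chi^2_{n_1-1}$; the $Y$-equation of~\eqref{cond_dist_XY} yields $\what\tau_{2,ij}/\{\tau_{02,ij}(1-n_2^{-1})\}\overset{d}{=}(n_2-1)^{-1}\chi^2_{n_2-1}$ in the same way.

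Next I would invoke the first two tail bounds of Lemma~\ref{lem:chi_bounds} with $k=n_1-1$ and $x=C\log(n\vee p)$: each of $(n_1-1)^{-1}\chi^2_{n_1-1}\le 1+2\sqrt{x/(n_1-1)}+2x/(n_1-1)$ and $(n_1-1)^{-1}\chi^2_{n_1-1}\ge 1-2\sqrt{x/(n_1-1)}$ fails with probability at most $\exp(-x)=(n\vee p)^{-C}$. Because $\epsilon_{01}=\log(n\vee p)/n_1$ gives $x/(n_1-1)=C\epsilon_{01}/(1-n_1^{-1})$, after substituting the distributional identity of the previous step these two inequalities are precisely the defining inequalities of $A_{1ij}$, so $\bbP_0(A_{1ij}^c)\le 2(n\vee p)^{-C}$. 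Repeating the argument verbatim with $n_2$ and $\epsilon_{02}$ gives $\bbP_0(A_{2ij}^c)\le 2(n\vee p)^{-C}$, and summing the two yields the claim.

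I do not expect a genuine obstacle here, since the whole thing is bookkeeping on top of Lemma~\ref{lem:chi_bounds}; the two steps deserving a word of care are the passage from the conditional to the unconditional $\chi^2$ law (immediate, as the conditional distribution is free of $\tilde X_j$) and the algebraic matching $x/(n_k-1)=C\epsilon_{0k}/(1-n_k^{-1})$ combined with the rescaling by $1-n_k^{-1}$ that produces the stated form of $A_{kij}$. It is also implicitly used that $\tau_{0k,ij}>0$ (equivalently $\rho_{0k,ij}^2<1$) so that the normalization makes sense, and that $n_k\ge 2$.
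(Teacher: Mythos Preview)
Your proposal is correct and follows essentially the same approach as the paper: both identify $n_1\what\tau_{1,ij}/\tau_{01,ij}\sim\chi^2_{n_1-1}$ via the conditional model~\eqref{cond_dist_XY}, apply Lemma~\ref{lem:chi_bounds} with $x=C\log(n\vee p)$, and sum the two tail bounds for each $k=1,2$. Your write-up is in fact a bit more explicit than the paper's about the projection argument and the conditional-to-unconditional step, but the method is the same.
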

\begin{proof}
	Since $\tilde{X}_i\mid \tilde{X}_j \sim N_{n_1} (a_{01,ij}\tilde{X}_j , \tau_{01,ij}I_{n_1} )$, we have 
	\bea
	n_1\what{\tau}_{1,ij}/\tau_{01,ij} &=& \tilde{X}_i^T (I_{n_1} - H_{\tilde{X}_j})\tilde{X}_i /\tau_{01,ij}  \,\sim\, \chi_{n_1-1}^2 .
	\eea
	Then,
	\bea
	\frac{\what{\tau}_{1,ij}}{\tau_{01,ij}} &\le& \frac{1}{n_1} \Big\{ n_1-1 + 2 \sqrt{(n_1-1)C \log(n\vee p)} + 2C \log(n\vee p) \Big\} \\
	&\le&\big(1 - n_1^{-1} \big)\Big\{ 1 + 2 \sqrt{\frac{C \log(n\vee p)}{n_1-1}} + 2C \frac{\log(n\vee p)}{n_1-1} \Big\} \\
	&=& \big(1 - n_1^{-1} \big)\Big\{ 1 + 2 \sqrt{C\epsilon_{01}/(1-n_1^{-1})} + 2 C \epsilon_{01}/(1-n_1^{-1}) \Big\}
	\eea
	with probability at least $1 - (n\vee p)^{-C}$ and 
	\bea
	\frac{\what{\tau}_{1,ij}}{\tau_{01,ij}} &\ge& \frac{1}{n_1}\Big\{ n_1-1 -2\sqrt{(n_1-1)C\log (n\vee p)}  \Big\} \\
	&=& \big(1 - n_1^{-1} \big) \big(1 - 2\sqrt{C\epsilon_{01}/(1-n_1^{-1})} \big)
	\eea
	with probability at least $1 - (n\vee p)^{-C}$, by Lemma \ref{lem:chi_bounds}.
	Thus, 
	\bea
	\bbP_0 (A_{1ij}^c) 
	&\le& 2 (n\vee p)^{-C}.
	\eea
	By similar arguments, it is easy to see that
	\bea
	\bbP_0 (A_{2 ij}^c) 
	&\le& 2 (n\vee p)^{-C},
	\eea
	which completes the proof.
	
\end{proof}

\bigskip

\begin{proof}[Proof of Theorem \ref{thm:two_cov}]
	We prove the consistency under the null and alternative in turn.
	\paragraph{Consistency under $H_{0}: \sg_1 = \sg_2$.}
	We first assume the true null $H_{0}: \sg_1 = \sg_2$, i.e.,  $\sg_0  \equiv \sg_{01}= \sg_{02}$, where $\sg_0 = (\sigma_{0,ij})$.
	For a given pair $(i,j), i\neq j$, the PBF is
	\bea
	&& \log B_{10}(\tilde{X}_i, \tilde{Y}_i, \tilde{X}_j,\tilde{Y}_j) \\
	&=& \frac{1}{2}\log \Big(\frac{\gamma}{1+\gamma} \Big) + \log \Gamma\Big(\frac{n_1}{2}+a_0 \Big) + \log \Gamma\Big(\frac{n_2}{2}+a_0 \Big) - \log \Gamma\Big(\frac{n}{2}+a_0 \Big) + \log \Big(\frac{b_{01,ij}^{a_0} b_{02,ij}^{a_0}}{b_{0,ij}^{a_0}\Gamma(a_0)} \Big) \\
	&-& \Big(\frac{n_1}{2}+a_0 \Big) \log \big(b_{01,ij}+ \frac{n_1}{2}\what{\tau}_{1,ij} \big) - \Big(\frac{n_2}{2}+a_0 \Big) \log \big(b_{02,ij}+ \frac{n_2}{2}\what{\tau}_{2,ij} \big) \\
	&+& \Big(\frac{n}{2}+a_0 \Big) \log \big(b_{0,ij}+ \frac{n}{2}\what{\tau}_{ij} \big).
	\eea
	By Lemma \ref{lem:Gammaf_ineq},
	\bea
	&& \log B_{10}(\tilde{X}_i, \tilde{Y}_i, \tilde{X}_j,\tilde{Y}_j) \\
	&\le& \frac{1}{2}\log \Big(\frac{\gamma}{1+\gamma} \Big) + 2(1-\log 2) - a_0 - \frac{1}{2}\left\{\log\Big(\frac{n_1+2a_0}{n+2a_0}\Big)  + \log\Big(\frac{n_2+2a_0}{n+2a_0}\Big) \right\} \\
	&-& a_0 \log \Big(\frac{n\what{\tau}_{ij}+2 b_{0,ij} }{n + 2a_0} \Big) + a_0 \log \Big(\frac{n\what{\tau}_{ij}+2 b_{0,ij} }{n_1\what{\tau}_{1,ij}+2 b_{01,ij}  } \Big) + a_0\log \Big(\frac{n\what{\tau}_{ij}+2 b_{0,ij} }{n_2\what{\tau}_{2,ij}+2 b_{02,ij}  } \Big) \\
	&+& a_0 \log\Big(\frac{n_1+2a_0}{n+2a_0} \Big) + a_0 \log\Big(\frac{n_2+2a_0}{n+2a_0} \Big) + \log \Big(\frac{b_{01,ij}^{a_0} b_{02,ij}^{a_0}}{b_{0,ij}^{a_0}\Gamma(a_0)} \Big) \\
	&+& \frac{n_1}{2} \log \Big(\frac{n\what{\tau}_{ij}+2 b_{0,ij} }{n_1\what{\tau}_{1,ij}+ 2 b_{01,ij}  } \Big) +  \frac{n_2}{2}  \log \Big(\frac{n\what{\tau}_{ij}+2 b_{0,ij} }{n_2\what{\tau}_{2,ij}+2 b_{02,ij}  } \Big) \\
	&+&  \frac{n_1}{2}  \log\Big(\frac{n_1+2a_0}{n+2a_0} \Big) +  \frac{n_2}{2}  \log\Big(\frac{n_2+2a_0}{n+2a_0} \Big) .
	\eea
	We first focus on the last two lines.
	Note that
	\bea
	n\what{\tau}_{ij} &=& n_1 \what{\tau}_{1,ij} + n_2 \what{\tau}_{2,ij} + \frac{1}{\| \tilde{Z}_j\|_2^2 } \Big( \|\tilde{Y}_j\|_2 \frac{\tilde{X}_i^T\tilde{X}_j }{\|\tilde{X}_j\|_2 } - \|\tilde{X}_j\|_2 \frac{\tilde{Y}_i^T\tilde{Y}_j }{\|\tilde{Y}_j\|_2 }  \Big)^2 \\
	&\equiv& n_1 \what{\tau}_{1,ij} + n_2 \what{\tau}_{2,ij} + \tau_{0,ij} D_{ij},
	\eea
	where $D_{ij} \sim \chi_1^2$ under $H_0$.
	It implies 
	\bean
	&&  \frac{n_1}{2} \log \Big(\frac{n\what{\tau}_{ij}+2 b_{0,ij} }{n_1\what{\tau}_{1,ij}+2 b_{01,ij}  } \Big) +  \frac{n_1}{2} \log\Big(\frac{n_1+2a_0}{n+2a_0} \Big) +  \frac{n_2}{2}  \log \Big(\frac{n\what{\tau}_{ij}+2 b_{0,ij} }{n_2\what{\tau}_{2,ij}+2 b_{02,ij}  } \Big) +  \frac{n_2}{2}  \log\Big(\frac{n_2+2a_0}{n+2a_0} \Big) \nonumber  \\
	&\le&  \frac{n_1}{2}  \left[ \log \Big( \frac{ n^{-1}n_1\what{\tau}_{1,ij} + n^{-1}n_2 \what{\tau}_{2,ij} + n^{-1}\tau_{0,ij}D_{ij} + 2 b_{0,ij} n^{-1}  }{ \what{\tau}_{1,ij} + 2 b_{01,ij} n_1^{-1} } \Big) + \log \Big(\frac{n(n_1+2a_0)}{n_1(n+2a_0)} \Big)  \right] \nonumber \\
	&&+\,\, \frac{n_2}{2} \left[ \log \Big( \frac{ n^{-1}n_1\what{\tau}_{1,ij} + n^{-1}n_2 \what{\tau}_{2,ij} + n^{-1}\tau_{0,ij}D_{ij} + 2  b_{0,ij} n^{-1}  }{ \what{\tau}_{2,ij} + 2 b_{02,ij} n_2^{-1} } \Big) + \log \Big(\frac{n(n_2+2a_0)}{n_2(n+2a_0)} \Big)  \right] \nonumber\\
	&\le& \frac{n}{2}\left[ \log \Big( \frac{n_1}{n}\what{\tau}_{1,ij} + \frac{n_2}{n}\what{\tau}_{2,ij} + \frac{\tau_{0,ij}}{n}D_{ij} + \frac{2 b_{0,ij}}{n} \Big) - \frac{n_1}{n}\log (\what{\tau}_{1,ij} ) - \frac{n_2}{n}\log (\what{\tau}_{2,ij} )    \right] \nonumber\\
	&&+\,\, \frac{n_1}{2} \log \Big(\frac{n(n_1+2a_0)}{n_1(n+2a_0)} \Big) + \frac{n_2}{2} \log \Big(\frac{n(n_2+2a_0)}{n_2(n+2a_0)} \Big)  \nonumber\\
	&=&  \frac{n}{2}\left[ \log \Big( \frac{n_1}{n}\frac{\what{\tau}_{1,ij}}{\what{\tau}_{2,ij}} + \frac{n_2}{n} + \frac{\tau_{0,ij}}{n \what{\tau}_{2,ij}}D_{ij} + \frac{2 b_{0,ij}}{n \what{\tau}_{2,ij}} \Big) - \frac{n_1}{n}\log \Big(\frac{\what{\tau}_{1,ij}}{\what{\tau}_{2,ij}} \Big)  \right] \label{part1_BF10_2} \\
	&& +\,\,  \frac{n_1}{2} \log \Big(\frac{n(n_1+2a_0)}{n_1(n+2a_0)} \Big) + \frac{n_2}{2} \log \Big(\frac{n(n_2+2a_0)}{n_2(n+2a_0)} \Big). \label{part1_BF10}
	\eean
	It is easy to see that \eqref{part1_BF10} is bounded by a constant because
	\bea
	\frac{n_1}{2} \log \Big(\frac{n(n_1+2a_0)}{n_1(n+2a_0)} \Big) + \frac{n_2}{2} \log \Big(\frac{n(n_2+2a_0)}{n_2(n+2a_0)} \Big) &\le& a_0,
	\eea
	so we only need to focus on \eqref{part1_BF10_2}.

	
	Let $\what{\tau}_{1,ij}/\what{\tau}_{2,ij} \equiv 1 + R_{ij}$ and 
	\bea
	R_{ij}' &=& R_{ij} + \frac{\tau_{0,ij}}{n_1 \what{\tau}_{2,ij} } D_{ij} + \frac{2 b_{0,ij}}{n_1 \what{\tau}_{2,ij}} .
	\eea
	Then, we can rewrite \eqref{part1_BF10_2} as
	\bean
	&& \frac{n}{2}\Big[ \log \big( 1 + \frac{n_1}{n} R_{ij}' \big) - \frac{n_1}{n} \log (1+ R_{ij}) \Big] \nonumber \\
	&=& \frac{n}{2}\Big[ - \Big\{ \frac{n_1}{n} R_{ij}'  - \log \big(1+ \frac{n_1}{n}R_{ij}' \big) \Big\} + \frac{n_1}{n}\Big\{ R_{ij} - \log (1+R_{ij}) \Big\} + \frac{n_1}{n} \big(R_{ij}' - R_{ij} \big) \Big] . \label{log_expan}
	\eean
	Define a set $A_{ij}:= \cap_{l=0}^2 A_{lij}$ with some constant $C>2$, where $A_{0ij},A_{1ij}$ and $A_{2ij}$ are defined at Lemmas \ref{lem:setAij0} and \ref{lem:setAij1-8}.
	By Lemmas \ref{lem:setAij0} and \ref{lem:setAij1-8}, it suffices to focus on the event $\cap_{i=1}^p \cap_{j=1}^p A_{ij}$ to prove Theorem \ref{thm:two_cov} because 
	\bea
	\sum_{i=1}^p \sum_{j=1}^p \bbP_0(A_{ij}^c) &\le& 6 (n\vee p)^{-C+2} \,\,=\,\, o(1)
	\eea
	for any constant $C>2$.
	On the event $A_{ij}$, we have the following bounds:
	\bea
	\frac{\what{\tau}_{1,ij}}{\what{\tau}_{2,ij}} 
	&\le& \frac{1 + 2 \sqrt{C \epsilon_{01}}+ 2 C\epsilon_{01}}{(1- n_2^{-1}) \big(1- 2 \sqrt{C \epsilon_{02} /(1-n_2^{-1}) \,} \big) }  \\
	&\le& (1-n_2^{-1})^{-1} \Big( 1 + 2 \sqrt{C \epsilon_{01}}+ 2 \sqrt{C\epsilon_{02}/(1-n_2^{-1})} + O(\epsilon_{01} \vee \epsilon_{02})\Big) \\
	&\le&  ( 1+ 3 n_2^{-1} ) \Big( 1 + 2 \sqrt{C\epsilon_{01}} + 2 \sqrt{C\epsilon_{02}(1+ 3 n_2^{-1})} + O(\epsilon_{01} \vee \epsilon_{02})\Big)   \\
	&\le&  ( 1+ 3 n_2^{-1} ) \Big( 1 + 2 \sqrt{C\epsilon_{01}}+ 2 \sqrt{C\epsilon_{02}}(1+2 n_2^{-1}) + O(\epsilon_{01} \vee \epsilon_{02})\Big) \\
	&\le&  1 + 4 \sqrt{C (\epsilon_{01}\vee \epsilon_{02})} + O(\epsilon_{01} \vee \epsilon_{02}) \\
	\frac{\what{\tau}_{1,ij}}{\what{\tau}_{2,ij}} &\ge& \frac{(1-n_1^{-1}) \big( 1- 2 \sqrt{C\epsilon_{01} /(1-n_1^{-1}) \,} \big) }{1+ 2 \sqrt{C \epsilon_{01}}+ 2 C\epsilon_{01}}  \\
	&\ge& (1-n_1^{-1}) \Big( 1 - 2 \sqrt{C\epsilon_{01} /(1-n_1^{-1}) \,}  - 2 \sqrt{C \epsilon_{02}} - 2 C \epsilon_{01}  \Big) \\
	&\ge&  1 - 4 \sqrt{C(\epsilon_{01}\vee \epsilon_{02})}   - O(\epsilon_{01} \vee \epsilon_{02})
	\eea
	for all large $n_1\wedge n_2$.
	Thus, on the event $A_{ij}$, 
	\bea
	-4\sqrt{C(\epsilon_{01} \vee \epsilon_{02})}  - O(\epsilon_{01} \vee \epsilon_{02}) \le  R_{ij} \le 4 \sqrt{C (\epsilon_{01}\vee \epsilon_{02})} + O(\epsilon_{01} \vee \epsilon_{02}) 
	\eea
	and
	\bea
	R_{ij}' - R_{ij} &=& \frac{\tau_{0,ij}}{n_1 \what{\tau}_{2,ij} } D_{ij} + \frac{2 b_{0,ij}}{n_1 \what{\tau}_{2,ij} } \\ &\le& \frac{1 + 2\sqrt{C \log(n\vee p)}+ 2C \log(n\vee p)}{n_1(1-n_2^{-1})\big(1 - 2\sqrt{C \epsilon_{02}(1-n_2^{-1})^{-1}}\big) } \\
	&&\,\,+\,\, \frac{2 b_{0,ij}}{\tau_{0,ij}n_1(1-n_2^{-1})\big(1 - 2\sqrt{C \epsilon_{02}(1-n_2^{-1})^{-1}}\big)} \\
	&\le& 2 C  \epsilon_{01} + o(\epsilon_{01}\vee \epsilon_{02})
	\eea
	for all sufficiently large $n_1\wedge n_2$ because  $\tau_{0,ij} \gg (\log (n\vee p))^{-1}$ by condition (A2).
	Thus $|R_{ij}|$ and $|R_{ij}'|$ can be regarded as small values close to $0$ on $A_{ij}$.
	
	Note that $x-\log(1+x) \le x^2/2 + |x|^3/(3(1+x))$ and $x-\log (1+x)\ge x^2/2 - |x|^3/3$ for any small $|x|>0$.
	Then, on the event $A_{ij}$, the upper bound of \eqref{log_expan} can be derived as
	\bea
	&& \frac{n}{2}\Big[ -\Big\{ \frac{1}{2}\Big(\frac{n_1}{n} R_{ij}' \Big)^2 - \frac{1}{3} \Big(\frac{n_1}{n} |R_{ij}'|\Big)^3 \Big\} + \frac{n_1}{2n} R_{ij}^2 + \frac{n_1}{3n}\frac{|R_{ij}^3|}{1 + R_{ij}}  + \frac{n_1}{n} \big(R_{ij}' - R_{ij} \big) \Big] \\
	&=& \frac{n_1}{4} R_{ij}^2 \Big[ 1 - \frac{n_1}{n}\Big(\frac{R_{ij}'}{R_{ij}} \Big)^2  + \frac{2}{3} \Big(\frac{n_1}{n} \Big)^2 |R_{ij}|\Big(\frac{R_{ij}'}{R_{ij}} \Big)^2 + \frac{2|R_{ij}|}{3(1 + R_{ij})}  \Big] + \frac{n_1}{2} \big(R_{ij}' - R_{ij} \big) \\
	&=& \frac{n_1}{4} R_{ij}^2 \Big[ 1 - \frac{n_1}{n}  + \frac{2}{3} \Big(\frac{n_1}{n} \Big)^2 |R_{ij}| + \frac{2}{3}\frac{|R_{ij}|}{1 + R_{ij}}  \Big] + \frac{n_1}{2} \big(R_{ij}' - R_{ij} \big)     \\
	&& -\,\, \frac{n_1^3}{6n^2} |R_{ij}|^3   \Big\{ 1- \Big(\frac{R_{ij}'}{R_{ij}} \Big)^2  \Big\} +   \frac{n_1^2}{4n} R_{ij}^2 \Big\{ 1- \Big(\frac{R_{ij}'}{R_{ij}} \Big)^2  \Big\}  \\
	&\le& 4n_1 C_2(\epsilon_{01}\vee \epsilon_{02}) \left[1 - \frac{n_1}{n}  \right] + \frac{n_1}{2} \, 2C_2\epsilon_{01} + o \big( n_1(\epsilon_{01} \vee \epsilon_{02}) \big) \\
	&\le& \frac{n_1}{2}(\epsilon_{01}\vee \epsilon_{02}) \big( 8 C_2 (1- \frac{n_1}{n}) + 2C_2 \big) + o \big( n_1(\epsilon_{01} \vee \epsilon_{02}) \big) \\
	&=& \frac{n_1}{2}(\epsilon_{01}\vee \epsilon_{02}) \big( 6C_2 + o(1) \big) 
	\eea
	for some constant $C_2$ such that $C_2>C>2$.
	Also note that on the event $A_{ij}$, the rest part of $\log B_{10}(\tilde{X}_i, \tilde{Y}_i, \tilde{X}_j, \tilde{Y}_j)$, except $\log (\gamma/(1+\gamma)) /2$, is bounded above by $\log (\log (n\vee p))$ due to the assumption $\tau_{0,ij} \gg ( \log(n\vee p) )^{-1}$.
	Since $\log (\gamma/(1+\gamma)) /2 \le  - \alpha \log(n\vee p)/2$, we have
	\bea
	\log B_{10}(\tilde{X}_i, \tilde{Y}_i, \tilde{X}_j, \tilde{Y}_j) &\le& - \frac{1}{2}\Big(\alpha - 6 C_2  + o(1)\Big) \log(n\vee p)
	\eea
	on the event $A_{ij}$.
	Thus,
	\bea
	&&\bbP_0 \left( \log B^{\sg}_{\max,10}(\bfX_{n_1}, \bfY_{n_2}) < - \frac{1}{2}\big(\alpha - 6 C_2  + o(1)\big) \log(n\vee p)  \right)  \\
	&=& 1 -\bbP_0 \left( \log B^{\sg}_{\max,10}(\bfX_{n_1}, \bfY_{n_2}) > - \frac{1}{2}\big(\alpha - 6 C_2  + o(1)\big) \log(n\vee p)  \right)  \\
	&\ge& 1- \sum_{(i,j): i\neq j} \bbP_0 \left( \log B_{10}(\tilde{X}_i, \tilde{Y}_i, \tilde{X}_j, \tilde{Y}_j) > - \frac{1}{2}\big(\alpha - 6 C_2 + o(1)\big) \log(n\vee p)  \right)  \\
	&\ge& 1- \sum_{(i,j): i\neq j} \bbP_0 (A_{ij}^c)  \\
	&\ge& 1 - 6 (n \vee p)^{-C+2} \,\, = \,\, 1 - o(1).
	\eea
	It completes the proof under the true null $H_0: \sg_1=\sg_2$ because we assume $\alpha > 6C_2 $.
	
	\paragraph{Consistency under $H_{1}: \sg_1 \neq \sg_2$.}
	Specifically, assume that $H_{1,ij}: a_{1,ij}\neq a_{2,ij}$ or $\tau_{1,ij}\neq \tau_{2,ij}$ is true for some pair $(i,j)$.
	First, we focus on the case $\tau_{01,ij}\neq \tau_{02,ij}$ and $a_{01,ij}= a_{02,ij} \equiv a_{0,ij}$, and suppose condition (A3) holds.
	By Lemma \ref{lem:Gammaf_ineq},
	\bean
	\log B_{10}(\tilde{X}_i, \tilde{Y}_i, \tilde{X}_j, \tilde{Y}_j) 
	&\ge& \frac{1}{2}\log \Big( \frac{\gamma}{1+\gamma}\Big) + 2 - \log 2 - a_0  + \log \Big(\frac{b_{01,ij}^{a_0} b_{02,ij}^{a_0}}{b_{0,ij}^{a_0}\Gamma(a_0)} \Big) \nonumber\\
	&-& \frac{1}{2}\log \Big(\frac{n_1}{2}+a_0 \Big) -  \log \Big(\frac{n_2}{2}+a_0 \Big) + \frac{1}{2}\log \Big(\frac{n}{2}+a_0 \Big) \nonumber\\
	&+& \Big\{  -a_0 \Big[\log \Big( \frac{n}{2}\what{\tau}_{ij}+ 	b_{0,ij}  \Big) - \log \Big(\frac{n}{2}+a_0 \Big) \Big] \label{BF10_H1_0} \\
	&&+\,\, \Big(\frac{n_1}{2}+ a_0\Big) \Big[  \log \Big( \frac{n \what{\tau}_{ij} + 2 b_{0,ij} }{n_1 \what{\tau}_{1,ij} + 2 b_{01,ij}} \Big) + \log \Big(\frac{n_1+2a_0}{n+2a_0}\Big) \Big] \label{BF10_H1_1} \\
	&&+\,\, \Big(\frac{n_2}{2}+ a_0\Big) \Big[  \log \Big( \frac{n \what{\tau}_{ij} + 2 b_{0,ij} }{n_2 \what{\tau}_{2,ij} + 2 b_{02,ij}} \Big) + \log \Big(\frac{n_2+2a_0}{n+2a_0}\Big) \Big]  \Big\}. \label{BF10_H1_2}
	\eean
	Since the sum of three terms \eqref{BF10_H1_0}, \eqref{BF10_H1_1} and \eqref{BF10_H1_2} is increasing in $\what{\tau}_{ij}$ and 
	\bea
	n\what{\tau}_{ij} &=& n_1 \what{\tau}_{1,ij} + n_2 \what{\tau}_{2,ij} + \frac{1}{\|\tilde{Z}_j\|_2^2}\Big(\|\tilde{Y}_j\|_2 \frac{\tilde{X}_i^T\tilde{X}_j}{\|\tilde{X}_j\|_2} - \|\tilde{X}_j\|_2 \frac{\tilde{Y}_i^T\tilde{Y}_j}{\|\tilde{Y}_j\|_2}\Big)^2\\
	&\ge& n_1 \what{\tau}_{1,ij} + n_2 \what{\tau}_{2,ij} ,
	\eea
	a lower bound for the sum of three terms \eqref{BF10_H1_0}, \eqref{BF10_H1_1} and \eqref{BF10_H1_2} is given by
	\bean
	&& 
	-a_0 \Big[ \log \Big(\frac{1}{2}(n_1 \what{\tau}_{1,ij} + n_2 \what{\tau}_{2,ij}) + b_{0,ij}  \Big) - \log \Big(\frac{n}{2}+a_0 \Big)\Big] \label{BF10_H1_00}\\
	&+& \frac{n}{2}\Big[ \log\Big( \frac{n_1}{n} \what{\tau}_{1,ij} + \frac{n_2}{n} \what{\tau}_{2,ij} + \frac{2 b_{0,ij}}{n}\Big) - \frac{n_1}{n}\log\Big( \what{\tau}_{1,ij} + \frac{2 b_{01,ij}}{n_1}\Big) - \frac{n_2}{n}\log\Big( \what{\tau}_{2,ij} + \frac{2 b_{02,ij}}{n_2}\Big)  \Big]  \label{BF10_H1_3} \\
	&+& a_0 \log \Big(\frac{n_1+2a_0}{n+2a_0}\Big) + a_0 \log \Big(\frac{n_2+2a_0}{n+2a_0}\Big).   \label{BF10_H1_4}
	\eean
	Term \eqref{BF10_H1_4} is bounded below by a constant because we assume $n_1/n \to 1/2$ as $n_1 \wedge n_2\to\infty$.
	
	We will first calculate a lower bound for \eqref{BF10_H1_3}, and then calculate a lower bound for \eqref{BF10_H1_00}.
	Since we are considering the case $\tau_{01,ij} \neq \tau_{02,ij}$, without loss of generality, we assume that $\tau_{01,ij} > \tau_{02,ij}$.
	Note that \eqref{BF10_H1_3} is bounded below by
	\bean
	&&\frac{n}{2}\Big[ \log\Big( \frac{n_1}{n} \frac{\what{\tau}_{1,ij}}{\what{\tau}_{2,ij}} + \frac{n_2}{n}  + \frac{2 b_{0,ij}}{n\what{\tau}_{2,ij}}\Big) - \frac{n_1}{n}\log\Big( \frac{\what{\tau}_{1,ij}}{\what{\tau}_{2,ij}} + \frac{2 b_{01,ij}}{n_1 \what{\tau}_{2,ij}}\Big) - \frac{n_2}{n}\log\Big( 1+ \frac{2 b_{02,ij}}{n_2 \what{\tau}_{2,ij}}\Big)  \Big] \nonumber \\
	&\ge& \frac{n}{2} \Big[ \log\Big( \frac{n_1}{n} \frac{\what{\tau}_{1,ij}}{\what{\tau}_{2,ij}} + \frac{n_2}{n}  \Big) - \frac{n_1}{n}\log\Big( \frac{\what{\tau}_{1,ij}}{\what{\tau}_{2,ij}} \Big) - \frac{n_1}{n}\log\Big( 1+ \frac{2 b_{01,ij}}{n_1 \what{\tau}_{1,ij}}\Big) - \frac{n_2}{n}\log\Big( 1+ \frac{2 b_{02,ij}}{n_2 \what{\tau}_{2,ij}}\Big)  \Big] \nonumber \\
	&\ge& \frac{n}{2} \Big[ \log\Big( \frac{n_1}{n} \frac{\what{\tau}_{1,ij}}{\what{\tau}_{2,ij}} + \frac{n_2}{n}  \Big) - \frac{n_1}{n}\log\Big( \frac{\what{\tau}_{1,ij}}{\what{\tau}_{2,ij}} \Big) - \frac{2 (b_{01,ij} \vee b_{02,ij})}{n} \Big( \what{\tau}_{1,ij}^{-1} + \what{\tau}_{2,ij}^{-1}  \Big) \Big]  \nonumber \\
	&\equiv& \frac{n}{2} \Big[ \log\Big( 1+  \frac{n_1}{n} R_{ij}   \Big) - \frac{n_1}{n}\log\Big( 1 + R_{ij} \Big)   - \frac{2 (b_{01,ij} \vee b_{02,ij})}{n} \Big( \what{\tau}_{1,ij}^{-1} + \what{\tau}_{2,ij}^{-1}  \Big) \Big]  ,  \label{BF10_H1_5}
	\eean
	by the definition of $R_{ij}$.
	Consider the  sets $\{A_{1ij}, A_{2ij} \}$, which are defined at Lemma \ref{lem:setAij1-8}, with some constant $0<C < C_1$.
	Then by the similar arguments, on the event $A_{-0,ij}= \cap_{l=1}^2 A_{lij}$, 
	\bea
	1 + R_{ij} \,\,=\,\, \frac{\what{\tau}_{1,ij}}{\what{\tau}_{2,ij}} 
	&\ge&  \frac{\tau_{01,ij}}{\tau_{02,ij}} \big( 1 -  4\sqrt{C_1 (\epsilon_{01} \vee \epsilon_{02})}  - O(\epsilon_{01}\vee \epsilon_{02}) \big) \\
	&\ge& 1 + C_{\rm bm} \sqrt{\epsilon_{01}} + O(\epsilon_{01}\vee \epsilon_{02}) \,\,=:\,\, 1 + R_{\min}
	\eea
	for all sufficiently large $n_1\wedge n_2$.
	The last inequality follows from condition (A3).
	It implies that $R_{ij}>0$ on the event $A_{-0,ij}$.
	Note that $\log(1+x n_1/n) - n_1 \log(1+x)/n$ is increasing in $x>0$.
	Thus, \eqref{BF10_H1_5} is bounded below by
	\bea
	&& \frac{n}{2}\Big[ \log \Big(1+ \frac{n_1}{n}R_{\min}\Big) - \frac{n_1}{n}\log\Big(1+R_{\min}\Big)   - \frac{2 (b_{01,ij}\vee b_{02,ij})}{n} \Big( \what{\tau}_{1,ij}^{-1} + \what{\tau}_{2,ij}^{-1}  \Big)  \Big] \\
	&\ge& \frac{n}{2} \Big[ \frac{ R_{\min}^2 n_1}{2n} - \frac{ R_{\min}^3 n_1}{3n} - \frac{R_{\min}^2 n_1^2}{2n^2} - \frac{n_1^3}{3n^3}\frac{R_{\min}^3}{1+ \frac{n_1}{n}R_{\min}}  - \frac{2 (b_{01,ij}\vee b_{02,ij})}{n} \Big( \what{\tau}_{1,ij}^{-1} + \what{\tau}_{2,ij}^{-1}  \Big) \Big] \\
	&\ge&  \frac{n_1 R_{\min}^2 }{4} \Big[ 1 - \frac{n_1}{n} - \frac{2}{3}R_{\min} - \frac{2}{3}\Big(\frac{n_1}{n}\Big)^2 R_{\min}  \Big] - (b_{01,ij}\vee b_{02,ij}) \Big( \what{\tau}_{1,ij}^{-1} + \what{\tau}_{2,ij}^{-1}  \Big) \\
	&\ge&  C_{\rm bm}^2 \frac{\log(n\vee p)}{4}\Big[1 -\frac{n_1}{n} - \frac{2}{3}\big(1 + \big(\frac{n_1}{n}\big)^2 \big)R_{\min}  \Big]  + o\big( \log(n\vee p) \big) \\
	&\ge&  C_{\rm bm}^2    \frac{\log(n\vee p)}{8} + o\big( \log(n\vee p) \big)  
	\eea
	for all sufficiently large $n_1 \wedge n_2$ because we assume that $\tau_{01,ij}\wedge \tau_{02,ij} \gg ( \log(n\vee p) )^{-1}$ by condition (A3).
	Note that, on the event $A_{-0,ij}$, \eqref{BF10_H1_00} is negligible compared with $\log (n\vee p)$ because we assume that $\tau_{01,ij}\vee \tau_{02,ij} \ll (n\vee p)$ by condition (A3).
	Thus, the leading terms in the lower bound of $\log B_{10}(\tilde{X}_i, \tilde{Y}_i, \tilde{X}_j, \tilde{Y}_j) $ is
	\bea
	&&\frac{C_{\rm bm}^2}{8} \log(n\vee p) + \frac{1}{2}\log \Big( \frac{\gamma}{1+\gamma}\Big) - \frac{1}{2}\log \Big(\frac{n_1}{2}+a_0 \Big) -  \log \Big(\frac{n_2}{2}+a_0 \Big) + \frac{1}{2}\log \Big(\frac{n}{2}+a_0 \Big) \\
	&\ge& \Big(  \frac{C_{\rm bm}^2}{8} - \alpha - 1 + o(1) \Big) \log(n\vee p) 
	\eea
	on the event $A_{-0,ij}$, which implies 
	\bea
	&&\bbP_0 \left( \log B^{\sg}_{\max,10}(\bfX_{n_1}, \bfY_{n_2}) \ge \big(  C_{{\rm bm}}^2/8 - \alpha - 1 + o(1)\big) \log(n\vee p) \right)  \\
	&\ge& \max_{(i,j): i\neq j} \bbP_0 \left( \log B_{10}(\tilde{X}_i, \tilde{Y}_i, \tilde{X}_j, \tilde{Y}_j) \ge \big(  C_{{\rm bm}}^2/8 - \alpha - 1 + o(1) \big) \log(n\vee p)  \right)  \\
	&\ge& \max_{(i,j): i\neq j} \bbP_0 (A_{-0,ij}) \\
	&\ge& 1 - 4 (n \vee p)^{-C} \,\, = \,\, 1- o(1)
	\eea
	for some constant $0<C<C_1$.
	It completes the proof because we assume that $C_{\rm bm}^2 > 8 (\alpha+1)$.
	If $\tau_{01,ij} < \tau_{02,ij}$, the same arguments hold by condition (A3).

	Now, we consider the case $a_{01,ij} \neq a_{02,ij}$, and suppose condition (A3$^\star$) holds.
	Define the sets
	\bea
	A_{3ij} &=& \Big\{ \bfZ_n: \|\tilde{X}_i\|_2^2 \le \sigma_{01,ii} n_1(1 + 2\sqrt{C \epsilon_{01}} + 2 C\epsilon_{01} )   \Big\} , \\
	A_{4ij} &=& \Big\{ \bfZ_n: \|\tilde{Y}_i\|_2^2 \le \sigma_{02,ii} n_2(1 + 2\sqrt{C \epsilon_{02}} + 2 C\epsilon_{02} )   \Big\}  , \\
	A_{5ij} &=& \Big\{ \bfZ_n: n_1(1 - 2 \sqrt{C \epsilon_{01}})  \le \sigma_{01,jj}^{-1}\|\tilde{X}_j\|_2^2 \le n_1(1 + 2\sqrt{C \epsilon_{01}} + 2 C\epsilon_{01} )   \Big\} , \\
	A_{6ij} &=& \Big\{ \bfZ_n: n_2(1 - 2 \sqrt{C \epsilon_{02}})  \le\sigma_{02,jj}^{-1} \|\tilde{Y}_j\|_2^2 \le  n_2(1 + 2\sqrt{C \epsilon_{02}} + 2 C\epsilon_{02} )   \Big\} 
	\eea
	for some constant $0<C <C_1$, and let $A_{ij}' = \cap_{l=1}^{6} A_{lij}$.
	Then, $\bbP_0((A_{ij}')^c ) = o(1)$ by Lemma \ref{lem:chi_bounds}, so we can focus on the set $A_{ij}'$.
	On the set $A_{ij}'$, \eqref{BF10_H1_0} is bounded below by
	\bea
	&& -a_0 \Big[ \log \Big( \|\tilde{X}_i\|_2^2 + \|\tilde{Y}_i\|_2^2 + 2 b_{0,ij}  \Big)  - \log \Big(n+ 2a_0 \Big)  \Big] \nonumber \\
	&\ge& - a_0 \Big[ \log \Big( \sigma_{01,ii} n_1 \big\{ 1 + 2\sqrt{C \epsilon_{01}} + 2 C\epsilon_{01} \big\} + \sigma_{02,ii} n_2 \big\{ 1 + 2\sqrt{C \epsilon_{02}} + 2 C\epsilon_{02}\big\} + 4 b_{0,ij}  \Big)  - \log \Big(n+ 2a_0 \Big) \Big] \nonumber \\
	&\ge& - a_0 \Big[ \log \Big( \sigma_{01,ii} n_1(1 + 4\sqrt{C \epsilon_{01}} ) + \sigma_{02,ii} n_2(1 + 4\sqrt{C \epsilon_{02}} ) + 4 b_{0,ij}  \Big)  - \log \Big(n+ 2a_0 \Big) \Big] \nonumber \\
	&\ge& - a_0   \log \Big( \sigma_{01,ii} (1 + 4\sqrt{C \epsilon_{01}} ) + \sigma_{02,ii} (1 + 4\sqrt{C \epsilon_{02}} ) + 4 b_{0,ij} n^{-1}  \Big) , \nonumber
	\eea
	which is smaller than $- a_0 \log (n\vee p)$ because $\sigma_{01,ii} \vee \sigma_{02,ii} \ll (n\vee p)$ by condition (A3$^\star$).

	Note that \eqref{BF10_H1_1} is bounded below by
	\bea
	&&\Big( \frac{n_1}{2}+a_0\Big) \log \Big( 1 - \frac{ \frac{n_1\what{\tau}_{1,ij}+2 b_{01,ij} }{n_1+2a_0} - \frac{n\what{\tau}_{ij}+2 b_{0,ij} }{n+2a_0} }{ \frac{n_1\what{\tau}_{1,ij}+2 b_{01,ij} }{n_1+2a_0} }  \Big) \\
	&\ge& \Big( \frac{n_1}{2}+a_0\Big) \,\frac{ \frac{n\what{\tau}_{ij}+2 b_{0,ij} }{n+2a_0} - \frac{n_1\what{\tau}_{1,ij}+2 b_{01,ij} }{n_1+2a_0}  }{ \frac{n\what{\tau}_{ij}+2 b_{0,ij} }{n+2a_0} }  \\
	&=& \Big( \frac{n_1}{2}+a_0\Big) \Bigg\{ \frac{\what{\tau}_{ij} -\what{\tau}_{1,ij}  }{\what{\tau}_{ij} + \frac{2 b_{0,ij}}{n} } - \frac{2}{\what{\tau}_{ij}+ \frac{2 b_{0,ij}}{n} } \Big(\frac{ b_{01,ij}(1 + \frac{2a_0}{n}) - b_{0,ij} (\frac{n_1}{n} + \frac{2a_0}{n})  }{n_1+2a_0}  - \what{\tau}_{1,ij}\frac{a_0(1-\frac{n_1}{n})}{n_1+2a_0} \Big)   \Bigg\}
	\eea
	where the inequality follows from $\log(1-x) \ge -x/(1-x)$ for any $x<1$.
	A lower bound for \eqref{BF10_H1_2} can be derived similarly.
	Thus, the sum of \eqref{BF10_H1_1} and \eqref{BF10_H1_2} is bounded below by
	\bean
	&& \frac{1}{2} \frac{n\what{\tau}_{ij} - n_1\what{\tau}_{1,ij} - n_2\what{\tau}_{2,ij} }{\what{\tau}_{ij} + \frac{2 b_{0,ij}}{n} }  \nonumber   \\
	&+&   \frac{1}{\what{\tau}_{ij} + \frac{2 b_{0,ij}}{n}}\left[  a_0\Big( 2\what{\tau}_{ij} - \what{\tau}_{1,ij} - \what{\tau}_{2,ij} \Big) -  2 b_{0,ij}  + \what{\tau}_{1,ij} a_0 \Big(1- \frac{n_1}{n}\Big) + \what{\tau}_{2,ij} a_0 \Big(1- \frac{n_2}{n}\Big)  \right]  \nonumber \\
	&\ge& \frac{1}{\what{\tau}_{ij} + \frac{2 b_{0,ij}}{n}} \left[  \Big(\frac{1}{2} + \frac{a_0}{n}  \Big)\big( n\what{\tau}_{ij} - n_1\what{\tau}_{1,ij} - n_2\what{\tau}_{2,ij}\big)  - 2 b_{0,ij}    \right]  \label{delta1} 
	\eean
	for all large $n_1 \wedge n_2$.
	Define
	\bea
	\lambda &:=& \frac{(a_{01,ij}- a_{02,ij})^2 \|\tilde{X}_j\|_2^2 \|\tilde{Y}_j\|_2^2 }{\tau_{01,ij}\|\tilde{Y}_j\|_2^2 + \tau_{02,ij}\|\tilde{X}_j\|_2^2 }.
	\eea
	It is easy to see that
	\bea
	n\what{\tau}_{ij} - n_1\what{\tau}_{1,ij} - n_2\what{\tau}_{2,ij} &=&
	\frac{1}{\|\tilde{Z}_j\|_2^2}\Big(\|\tilde{Y}_j\|_2 \frac{\tilde{X}_i^T\tilde{X}_j}{\|\tilde{X}_j\|_2} - \|\tilde{X}_j\|_2 \frac{\tilde{Y}_i^T\tilde{Y}_j}{\|\tilde{Y}_j\|_2}\Big)^2 \\
	&\overset{d}{\equiv}& \frac{ \tau_{01,ij}\|\tilde{Y}_j\|_2^2 + \tau_{02,ij}\|\tilde{X}_j\|_2^2}{\|\tilde{Z}_j\|_2^2} \cdot \chi_1^2 \left( \lambda \right)
	\eea
	given $\tilde{Z}_j$ because  
	\bea
	\frac{1}{\|\tilde{Z}_j  \|_2^2} \Big(\|\tilde{Y}_j\|_2 \frac{\tilde{X}_i^T\tilde{X}_j}{\|\tilde{X}_j\|_2} - \|\tilde{X}_j\|_2 \frac{\tilde{Y}_i^T\tilde{Y}_j}{\|\tilde{Y}_j\|_2}\Big) \,\,|\, \tilde{Z}_j &\sim&  N\Big( \frac{(a_{01,ij}-a_{02,ij})\|\tilde{X}_j\|_2 \|\tilde{Y}_j\|_2 }{\|\tilde{Z}_j  \|_2}  ,\, \frac{\tau_{01,ij}\|\tilde{Y}_j\|_2^2 + \tau_{02,ij}\|\tilde{X}_j\|_2^2 }{\|\tilde{Z}_j\|_2^2} \Big).
	\eea
	Then, on the set $A_{ij}'$,
	\bean
	\lambda &=& \frac{(a_{01,ij}- a_{02,ij})^2}{\tau_{01,ij}\|\tilde{X}_j\|_2^{-2} + \tau_{02,ij}\|\tilde{Y}_j\|_2^{-2} } \nonumber\\
	&\ge& \frac{(a_{01,ij}- a_{02,ij})^2}{\tau_{01,ij}(\sigma_{01,jj}n_1(1-2\sqrt{C\epsilon_{01}}) )^{-2} + \tau_{02,ij}(\sigma_{02,jj}n_2(1-2\sqrt{C\epsilon_{02}}))^{-2} }\nonumber \\
	&\ge& \frac{25}{2} C_1 \log (n\vee p)  \label{lam_lb}
	\eean
	by condition (A3$^\star$).
	By Lemma \ref{lem:chi_bounds}, with probability at least $1-(n\vee p)^{-C}$ for the constant $0<C<C_1$ used to define the set $A_{ij}'$,
	\bea
	n\what{\tau}_{ij} - n_1\what{\tau}_{1,ij} - n_2\what{\tau}_{2,ij}
	&\ge& \frac{ \tau_{01,ij}\|\tilde{Y}_j\|_2^2 + \tau_{02,ij}\|\tilde{X}_j\|_2^2}{\|\tilde{Z}_j\|_2^2} \Big\{  1 + \lambda - 2 \sqrt{(1+2\lambda)C\log(n\vee p)}  \Big\}\\
	&\ge& \frac{ \tau_{01,ij}\|\tilde{Y}_j\|_2^2 + \tau_{02,ij}\|\tilde{X}_j\|_2^2}{\|\tilde{Z}_j\|_2^2} \Big\{ 1 + \frac{\lambda}{5}  \Big\} \\
	&\ge& \frac{(a_{01,ij}- a_{02,ij})^2 }{5} \frac{\|\tilde{X}_j\|_2^2 \|\tilde{Y}_j\|_2^2 }{\|\tilde{Z}_j\|_2^2}\\
	&\ge& \frac{(a_{01,ij}- a_{02,ij})^2 }{5} \Big\{ \sum_{k=1}^2\big(\sigma_{0k,jj}n_k(1-2\sqrt{C_1 \epsilon_{0k}}) \big)^{-1}  \Big\}^{-1}
	\eea
	where the second inequality holds by \eqref{lam_lb}.
	Furthermore, on the set $A_{ij}'$,
	\bea
	\what{\tau}_{ij,\gamma} 
	&\le& \sigma_{01,ii} (1+ 4\sqrt{C\epsilon_{01}}) + \sigma_{02,ii} (1+ 4\sqrt{C\epsilon_{02}}).
	\eea
	Then, \eqref{delta1} is bounded below by
	\bea
	&& \frac{1}{\sigma_{01,ii} (1+ 4\sqrt{C\epsilon_{01}}) + \sigma_{02,ii} (1+ 4\sqrt{C\epsilon_{02}}) + \frac{2 b_{0,ij}}{n} } \\
	&\times& \Big\{ \Big(\frac{1}{2}+ \frac{a_0}{n}\Big) \frac{(a_{01,ij}- a_{02,ij})^2 }{5} \Big\{ \sum_{k=1}^2\big(\sigma_{0k,jj}n_k(1-2\sqrt{C_1 \epsilon_{0k}}) \big)^{-1}  \Big\}^{-1} - 2 b_{0,ij}   \Big\} \\
	&\ge& C_{{\rm bm}, a} \log (n\vee p)
	\eea
	by condition (A3$^\star$).
	Thus, the leading terms in the lower bound of $\log B_{10}(\tilde{X}_i, \tilde{Y}_i, \tilde{X}_j, \tilde{Y}_j) $ is
	\bea
	&& \big( C_{{\rm bm}, a} - a_0\big) \log (n \vee p) + \frac{1}{2}\log \Big( \frac{\gamma}{1+\gamma}\Big) - \frac{1}{2}\log \Big(\frac{n_1}{2}+a_0 \Big) -  \log \Big(\frac{n_2}{2}+a_0 \Big) + \frac{1}{2}\log \Big(\frac{n}{2}+a_0 \Big) \\
	&\ge& \Big(  C_{{\rm bm}, a} - a_0- \alpha - 1 \Big) \log(n\vee p) 
	\eea
	on the set $A_{ij}'$, which implies
	\bea
	&&\bbP_0 \left( \log B^{\sg}_{\max,10}(\bfX_{n_1}, \bfY_{n_2}) \ge \big(  C_{{\rm bm}, a} - a_0- \alpha - 1 \big) \log(n\vee p) \right)  \\
	&\ge& \max_{(i,j): i\neq j} \bbP_0 \left( \log B_{10}(\tilde{X}_i, \tilde{Y}_i, \tilde{X}_j, \tilde{Y}_j) \ge \big(  C_{{\rm bm}, a} - a_0- \alpha - 1 \big) \log(n\vee p)  \right)  \\
	&\ge& \max_{(i,j): i\neq j} \bbP_0 (A_{ij}')  \\
	&\ge& 1 - 14 (n \vee p)^{-C} \,\, = \,\, 1- o(1)
	\eea
	for some constant $0<C<C_1$.
	It completes the proof because we assume that $C_{{\rm bm}, a} > a_0+\alpha +1$.
	
\end{proof}

\subsection{Proof of Theorem \ref{thm:two_cov_lowerbound}}
\begin{proof}
	Let 
	\bea
	\calU(C) &:=& \Big\{ (\sg_1,\sg_2): \max_{1\le i\le j\le p}\frac{ (\sigma_{1,ij} - \sigma_{1,ij})^2 }{ \theta_{1,ij}/n_1 + \theta_{2,ij}/n_2 } \ge C  \, \log p   \Big\} ,
	\eea 
	where $\theta_{1,ij} = \sigma_{1,ii}\sigma_{1,jj} + \sigma_{1,ij}^2$ and $\theta_{2,ij}=\sigma_{2,ii}\sigma_{2,jj} + \sigma_{2,ij}^2$, then $\widetilde{H}_1(C_\star,c_0) \subset \calU(C)$ for some $C_\star, c_0$ and $C>0$.
	By Theorem 3, the proof of Theorem 4 in \cite{cai2013two} and arguments in \cite{baraud2002non} (p. 595), we have
	$$\inf_{(\sg_{01},\sg_{02}) \in \calU(C)} \sup_{\phi_{\alpha_0} \in \calT_{\alpha_0}} \bbE_{\sg_{01},\sg_{02}} \phi_{\alpha_0} \le \alpha + o(1)$$ 
	for some constant $C>0$, where $\calT_{\alpha_0}$ is the set of $\alpha_0$-level tests over the multivariate normal distributions with $\alpha_0>0$, that is, $\bbE_0 \phi_{\alpha_0} \le \alpha_0$.
	In the proof of Theorem 4 in \cite{cai2013two}, the infimum of $(\sg_{01},\sg_{02})$ is essentially taken over a subset of $\widetilde{H}_1(C_\star,c_0)$ for some small $C_\star>0$ and any $c_0>0$.
	Hence, because $p \ge n^c$ for some constant $c>0$,  we have
	\bea
	\inf_{(\sg_{01},\sg_{02}) \in \widetilde{H}_1(C_\star,c_0)} \sup_{\phi_{\alpha_0} \in \calT_{\alpha_0}} \bbE_{\sg_{01},\sg_{02}} \phi_{\alpha_0} &\le& {\alpha_0} + o(1)
	\eea
	for some small $C_\star>0$.
	Note that $\widetilde{H}_1(C_\star ,c_0) \subset {H}_1(C_{\rm bm}, C_{{\rm bm},a})$ for some small constants $C_{\rm bm}$ and $C_{{\rm bm},a} >0$.
	Furthermore, we have $\calT \subset \calT_{\alpha_0}$ for any $\alpha_0 >0$.
	Therefore, for any $\alpha_0 >0$,
	\bea
	\inf_{(\sg_{01},\sg_{02}) \in {H}_1(C_{\rm bm}, C_{{\rm bm},a})} \sup_{\phi \in \calT} \bbE_{\sg_{01},\sg_{02}} \phi &\le& {\alpha_0} + o(1).
	\eea
\end{proof}

\newpage

\section{Auxiliary results}\label{sec:auxil}
\begin{lemma}\label{lem:testable_two_cov}
	Consider two different $p\times p$ covariances $\sg_{01} = (\sigma_{01,ij})$ and $\sg_{02} = (\sigma_{02,ij})$ such that 
	\bea
	\begin{split}
		\max_{1\le k \le 2}\max_{1\le i\neq j \le p}\rho_{0k,ij}^2 &\le 1 - c_0 \\
		\{\log(n\vee p) \}^{-1} \ll \min_{1\le k \le 2}\min_{1\le i \le p} \sigma_{0k,ii} &\le \max_{1\le k \le 2}\max_{1\le i \le p} \sigma_{0k,ii} \ll (n\vee p)
	\end{split}
	\eea
	for some small constant $c_0>0$.
	If 
	\bea
	\max_{1\le i\le j\le p}\frac{ (\sigma_{01,ij} - \sigma_{01,ij})^2 }{ \sigma_{01,ii}\sigma_{01,jj} + \sigma_{02,ii}\sigma_{02,jj} } &\ge& C_\star  \, \frac{\log (n\vee p)}{n}
	\eea
	for some large constant $C_\star>0$, then condition (A3) (or (A3$^\star$)) holds with $\alpha>1$ and $n_1\asymp n_2$.
\end{lemma}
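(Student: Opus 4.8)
The plan is to reduce the statement to the single pair $(i,j)$ realizing the maximum in \eqref{two_cov_testable} and then run a short case analysis deciding whether that pair forces the ``slope'' condition (A3$^\star$) or the ``residual‑variance'' condition (A3). First I would observe that the range parts of (A3) and (A3$^\star$) are automatic: since $n_1\asymp n_2$ we have $\epsilon_{01}\asymp\epsilon_{02}\asymp\epsilon_0:=\log(n\vee p)/n$, and $\max_{k,i\neq j}\rho_{0k,ij}^2\le 1-c_0$ gives $c_0\,\sigma_{0k,ii}\le\tau_{0k,ij}=\sigma_{0k,ii}(1-\rho_{0k,ij}^2)\le\sigma_{0k,ii}$, so the two‑sided bounds on $\sigma_{0k,ii}$ in \eqref{cond_delta} immediately yield $\{\log(n\vee p)\}^{-1}\ll\tau_{01,ij}\wedge\tau_{02,ij}\le\tau_{01,ij}\vee\tau_{02,ij}\ll(n\vee p)$ and $\sigma_{01,ii}\vee\sigma_{02,ii}\ll(n\vee p)$ for every pair. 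Hence only a magnitude statement remains.

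Pick $(i,j)$ with $i\le j$ attaining the maximum, so $(\sigma_{01,ij}-\sigma_{02,ij})^2\ge C_\star\epsilon_0(\sigma_{01,ii}\sigma_{01,jj}+\sigma_{02,ii}\sigma_{02,jj})$. If $i=j$ the diagonal variances differ by a relative factor $\gtrsim\sqrt{C_\star\epsilon_0}$, and I would then move to a well‑chosen off‑diagonal pair $(\ell,i)$ and use $\tau_{0k,\ell i}=\sigma_{0k,ii}-\sigma_{0k,i\ell}^2/\sigma_{0k,\ell\ell}$, in which the subtracted term is $\le(1-c_0)\sigma_{0k,ii}$: for a suitable $\ell$ (and trivially when row $i$ of the covariance is negligible, so $\rho_{0k,\ell i}\approx0$ and $\tau_{0k,\ell i}\approx\sigma_{0k,ii}$) either $\tau_{01,\ell i}/\tau_{02,\ell i}$ beats the (A3) threshold or $(a_{01,\ell i}-a_{02,\ell i})^2$ meets (A3$^\star$). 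So assume $i<j$.

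For $i<j$, write $\sigma_{0k,ij}=a_{0k,ij}\sigma_{0k,jj}$ and $\sigma_{0k,ii}=\tau_{0k,ij}+a_{0k,ij}^2\sigma_{0k,jj}$ (and the same with $i,j$ swapped). If (A3$^\star$) holds at $(i,j)$ or at $(j,i)$ we are done, so suppose it fails at both; then $(a_{01,ij}-a_{02,ij})^2$ and $(a_{01,ji}-a_{02,ji})^2$ each violate one of \eqref{diff_aij1}--\eqref{diff_aij2}, and under \eqref{cond_delta} the right‑hand sides there have the common order $\epsilon_0\,(\sigma_{01,ii}\vee\sigma_{02,ii})/(\sigma_{01,jj}\wedge\sigma_{02,jj})$ up to constants depending on $C_1,C_{{\rm bm},a},a_0$, so both slope differences are $O(\epsilon_0)$ in the corresponding relative sense. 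Taking (up to swapping the two populations) $\sigma_{01,jj}\ge\sigma_{02,jj}$ and inserting the slope bound into the identity $\sigma_{02,ij}-\sigma_{01,ij}=(a_{02,ij}-a_{01,ij})\sigma_{02,jj}+a_{01,ij}(\sigma_{02,jj}-\sigma_{01,jj})$, with $a_{0k,ij}^2\le(1-c_0)\sigma_{0k,ii}/\sigma_{0k,jj}$, shows that the large value of $(\sigma_{01,ij}-\sigma_{02,ij})^2$ cannot be carried by the first term once $C_\star$ is large, forcing $\sigma_{01,jj}/\sigma_{02,jj}\ge 1+c'\sqrt{C_\star\epsilon_0}$ for some $c'=c'(c_0)>0$; the analogue at $(j,i)$ gives the same kind of gap for $\sigma_{ii}$, while the two slope approximations pin $\rho_{01,ij}^2/\rho_{02,ij}^2=1+O(\sqrt{\epsilon_0})$. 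Feeding these into $\tau_{0k,ij}=\sigma_{0k,ii}(1-\rho_{0k,ij}^2)$, where $1-\rho_{0k,ij}^2\ge c_0$, yields $\tau_{01,ij}/\tau_{02,ij}\ge 1+\frac{c'}{2}\sqrt{C_\star\epsilon_0}$, which beats the threshold $(1+C_{\rm bm}\sqrt{\epsilon_{0k}})/(1-4\sqrt{C_1(\epsilon_{01}\vee\epsilon_{02})})$ in (A3) once $C_\star$ is large relative to $C_{\rm bm},C_1,C_{{\rm bm},a},a_0,c_0^{-1}$.

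The main obstacle is exactly this last case: turning ``both slope differences are small'' into ``the residual‑variance ratio is bounded away from $1$ by $\gtrsim\sqrt{C_\star\epsilon_0}$'', while tracking the $(n\vee p)^{\pm1}$‑type scales so that every ``$\gtrsim$'' is genuinely scale invariant and a single explicit $C_\star$ works across all sub‑cases (the direction of each variance gap, the signs of the $\rho_{0k,ij}$, and the diagonal case). The remaining steps are routine $\log(1+x)$ and Taylor estimates of the kind already used in the proof of Theorem \ref{thm:two_cov}.
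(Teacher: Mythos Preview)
Your plan is essentially the paper's argument: reduce to the maximizing pair, split into the diagonal and off–diagonal cases, and pit the slope condition (A3$^\star$) against the residual–variance condition (A3) via the identity $\tau_{0k,ij}=\sigma_{0k,ii}(1-\rho_{0k,ij}^2)=\sigma_{0k,ii}(1-a_{0k,ij}a_{0k,ji})$. Two points where the paper is a little cleaner than your sketch are worth flagging.

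First, the paper opens with the reduction you leave implicit: if $\sigma_{01,jj}/\sigma_{02,jj}\vee\sigma_{02,jj}/\sigma_{01,jj}\to\infty$ for some $j$, then (A3) is immediate at $(j,\ell)$ for any $\ell$ because $1-\rho_{0k,j\ell}^2\in[c_0,1]$; otherwise all diagonal ratios are uniformly bounded, and only then does the ``common order'' you assert for the right–hand sides of \eqref{diff_aij1}--\eqref{diff_aij2} hold. Also, in your diagonal case the formula $\tau_{0k,\ell i}=\sigma_{0k,ii}-\sigma_{0k,i\ell}^2/\sigma_{0k,\ell\ell}$ is actually $\tau_{0k,i\ell}$, not $\tau_{0k,\ell i}$: you want the pair $(i,\ell)$ with $i$ in the first slot, so that the $\sigma_{0k,ii}$ gap feeds into the $\tau$–ratio. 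The paper avoids searching for a ``suitable $\ell$'' altogether: it assumes (A3$^\star$) fails at \emph{every} $(i_0,j)$ and $(j,i_0)$, and then for an \emph{arbitrary} $j$ writes
\[
\frac{\tau_{01,i_0j}}{\tau_{02,i_0j}}=\frac{\sigma_{01,i_0i_0}}{\sigma_{02,i_0i_0}}\Big\{1-\frac{\Delta_1 a_{02,ji_0}+\Delta_2 a_{02,i_0j}+\Delta_1\Delta_2}{1-a_{02,i_0j}a_{02,ji_0}}\Big\},
\]
bounds each of $|\Delta_1 a_{02,ji_0}|,|\Delta_2 a_{02,i_0j}|,|\Delta_1\Delta_2|$ by $O(\sqrt{\epsilon_0})$ using the (A3$^\star$)–failure bounds together with $|a_{0k,ij}a_{0k,ji}|\le 1-c_0$ and the bounded diagonal ratios, and concludes (A3) from $\sigma_{01,i_0i_0}/\sigma_{02,i_0i_0}\ge 1+\sqrt{C_\star}\sqrt{\epsilon_0}$.

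Second, in the off–diagonal case the paper's organization differs from your contrapositive. Rather than assuming (A3$^\star$) fails at both $(i_0,j_0)$ and $(j_0,i_0)$ and then extracting \emph{both} diagonal gaps, the paper simply cases on the size of $\sigma_{01,i_0i_0}/\sigma_{02,i_0i_0}$. If this ratio exceeds $1+C\sqrt{\epsilon_0}$ for a large enough $C$, the diagonal–case computation above already delivers (A3). If instead $\sigma_{01,i_0i_0}/\sigma_{02,i_0i_0}=1+\Delta_3$ with $0<\Delta_3\le C\sqrt{\epsilon_0}$, the paper verifies (A3$^\star$) at $(j_0,i_0)$ \emph{directly}: expanding
\[
(a_{01,j_0i_0}-a_{02,j_0i_0})^2=\sigma_{01,i_0i_0}^{-2}\big(\sigma_{01,i_0j_0}-\sigma_{02,i_0j_0}-\sigma_{02,i_0j_0}\Delta_3\big)^2
\]
and using $|\sigma_{02,i_0j_0}|\le\sqrt{\sigma_{02,i_0i_0}\sigma_{02,j_0j_0}}$ together with the lower bound on $|\sigma_{01,i_0j_0}-\sigma_{02,i_0j_0}|$ shows this exceeds $C_\star'(\sigma_{01,j_0j_0}/\sigma_{01,i_0i_0}+\sigma_{02,j_0j_0}/\sigma_{02,i_0i_0})\epsilon_0$, which is exactly what \eqref{diff_aij1}--\eqref{diff_aij2} need. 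This route uses only the $\sigma_{i_0i_0}$ ratio and never needs your step of pinning $\rho_{01,ij}^2/\rho_{02,ij}^2$; it is a bit shorter and sidesteps the scale–tracking you identify as the main obstacle.
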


\begin{proof}
	Throughout the proof, suppose $\alpha>1$ and $n_1\asymp n_2$.
	Note that if $\frac{\sigma_{01,jj}}{\sigma_{02,jj}} \vee \frac{\sigma_{02,jj}}{\sigma_{01,jj}} \lra \infty$ for some $j$ as $n\to\infty$, condition (A3) is met for some pair of indices because $\max_{1\le k \le 2}\max_{1\le i\neq j \le p}\rho_{0k,ij}^2 \le 1- c_0$ for some constant $c_0>0$.
	Since it trivially completes the proof, suppose $\frac{\sigma_{01,jj}}{\sigma_{02,jj}} \vee \frac{\sigma_{02,jj}}{\sigma_{01,jj}}  < C$ for all $j$ and some constant $C>0$.
	Let
	\bean\label{cond_i0j0}
	\max_{1\le i\le j\le p}\frac{ (\sigma_{01,ij} - \sigma_{01,ij})^2 }{ \sigma_{01,ii}\sigma_{01,jj} + \sigma_{02,ii}\sigma_{02,jj} } \equiv \frac{ (\sigma_{01,i_0 j_0} - \sigma_{01,i_0 j_0})^2 }{ \sigma_{01,i_0 i_0}\sigma_{01,j_0 j_0} + \sigma_{02,i_0 i_0}\sigma_{02,j_0 j_0} }
	\ge C_\star  \, \frac{\log (n\vee p)}{n}
	\eean
	for some $(i_0,j_0)$.

	First, suppose that $i_0 = j_0$.
	Without loss of generality, assume that $\sigma_{01,i_0 i_0} > \sigma_{02,i_0 i_0}$.
	If condition (A3$^\star$) is satisfied for some $(i_0,j)$ or $(j,i_0)$, the proof is completed.
	Thus, now we assume that condition (A3$^\star$) does not hold for $(i_0,j)$ and $(j,i_0)$ for all $j\neq i_0$.
	Since $\sigma_{01,i_0 i_0} \ll (n\vee p)$ always holds, at least one of inequalities \eqref{diff_aij1} and \eqref{diff_aij2} for $(i_0,j)$ and $(j,i_0)$ does not hold.
	Note that 
	\bea
	\frac{\tau_{01,i_0 j}}{\tau_{02,i_0 j}}
	&=& \frac{\sigma_{01,i_0 i_0}}{\sigma_{02,i_0 i_0}} \frac{1 - a_{01,i_0 j} a_{01,j i_0}}{1 - a_{02,i_0 j} a_{02,j i_0}} \\
	&\equiv& \frac{\sigma_{01,i_0 i_0}}{\sigma_{02,i_0 i_0}} \frac{1 - (a_{02,i_0 j}+ \Delta_1) (a_{02,j i_0} + \Delta_2)}{1 - a_{02,i_0 j} a_{02,j i_0}} \\
	&=& \frac{\sigma_{01,i_0 i_0}}{\sigma_{02,i_0 i_0}}  \Big\{ 1 - \frac{\Delta_1 a_{02,j i_0} + \Delta_2 a_{02,i_0 j} + \Delta_1 \Delta_2 }{1 - a_{02,i_0 j} a_{02,j i_0}} \Big\} .
	\eea
	Since we assume that at least one of inequalities \eqref{diff_aij1} and \eqref{diff_aij2} for $(i_0,j)$ and $(j,i_0)$ does not hold,
	\bea
	|\Delta_1| &\lesssim& \left\{ \Big( \frac{\sigma_{01,i_0 i_0}}{\sigma_{01,jj}} + \frac{\sigma_{02,i_0 i_0}}{\sigma_{02,jj}} \Big)\frac{\log(n\vee p)}{n} \right\}^{\frac{1}{2}} \vee \left\{ \Big( \frac{1}{\sigma_{01,jj}} + \frac{1}{\sigma_{02,jj}} \Big)\big(\sigma_{01,i_0 i_0} + \sigma_{02,i_0 i_0} \big)\frac{\log(n\vee p)}{n} \right\}^{\frac{1}{2}}, \\
	|\Delta_2| &\lesssim& \left\{ \Big( \frac{\sigma_{01,jj}}{\sigma_{01,i_0 i_0}} + \frac{\sigma_{02,jj}}{\sigma_{02,i_0 i_0}} \Big)\frac{\log(n\vee p)}{n} \right\}^{\frac{1}{2}} \vee \left\{ \Big( \frac{1}{\sigma_{01,i_0 i_0}} + \frac{1}{\sigma_{02,i_0 i_0}} \Big)\big(\sigma_{01,jj} + \sigma_{02,jj} \big)\frac{\log(n\vee p)}{n} \right\}^{\frac{1}{2}}.
	\eea
	Because we assume that $\frac{\sigma_{01,jj}}{\sigma_{02,jj}} \vee \frac{\sigma_{02,jj}}{\sigma_{01,jj}}  < C$ for all $j$, we have $|\Delta_1 \Delta_2| \lesssim \log(n\vee p)/n$.
	Furthermore,	
	\bea
	&& |\Delta_1 a_{02, ji_0}| \\
	&\le&  \frac{|\sigma_{02,j i_0}|}{\sigma_{02,i_0 i_0}} |\Delta_1|  \\
	&\le& \Big( \frac{\sigma_{02,j j}}{\sigma_{02,i_0 i_0}} \Big)^{\frac{1}{2}} |\Delta_1|  \\
	&\lesssim& \Big( \frac{\sigma_{02,j j}}{\sigma_{02,i_0 i_0}} \Big)^{\frac{1}{2}} \left\{  \Big[ \Big( \frac{\tau_{01,i_0 j}}{\sigma_{01,jj}} + \frac{\tau_{02,i_0 j}}{\sigma_{02,jj}} \Big) \frac{\log(n\vee p)}{n} \Big]^{\frac{1}{2}} \vee \Big[ \Big( \frac{1}{\sigma_{01,jj}} +\frac{1}{\sigma_{02,jj}} \Big)(\sigma_{01,i_0 i_0} + \sigma_{02,i_0 i_0})\frac{\log(n\vee p)}{n} \Big]^{\frac{1}{2}} \right\} \\
	&\le&  \Big[ \Big( \frac{\sigma_{02,j j}}{\sigma_{02,i_0 i_0}} \frac{\sigma_{01,i_0 i_0}}{\sigma_{01,jj}} + 1 \Big) \frac{\log(n\vee p)}{n} \Big]^{\frac{1}{2}} \vee \Big[ \Big( \frac{\sigma_{02,jj}}{\sigma_{01,jj}} +1\Big)\Big(\frac{\sigma_{01,i_0 i_0}}{\sigma_{02,i_0 i_0}} + 1\Big)\frac{\log(n\vee p)}{n} \Big]^{\frac{1}{2}} \\
	&\lesssim& \sqrt{\frac{\log(n\vee p)}{n}}.
	\eea
	Similarly, we have $|\Delta_2 a_{02, i_0 j}| \lesssim \sqrt{\log(n\vee p)/n}$. 	
	Then,
	\bea
	\Big|\frac{\Delta_1 a_{02,j i_0} + \Delta_2 a_{02,i_0 j} + \Delta_1 \Delta_2 }{1 - a_{02,i_0 j} a_{02,j i_0}} \Big|
	&\lesssim& \sqrt{\frac{\log(n\vee p)}{n}} + |\Delta_1 \Delta_2| \,\,\lesssim\,\, \sqrt{\frac{\log(n\vee p)}{n}} .
	\eea	
	Since \eqref{cond_i0j0} implies 
	\bea
	\frac{\sigma_{01,i_0 i_0}}{\sigma_{02,i_0 i_0}} 
	&\ge& 1 + \sqrt{C_{\star}}\frac{\sigma_{01,i_0 i_0}}{\sigma_{02,i_0 i_0}}  \sqrt{\frac{\log(n\vee p)}{n}} \,>\, 1 + \sqrt{C_{\star}}  \sqrt{\frac{\log(n\vee p)}{n}},
	\eea
	it implies that condition (A3) holds for $(i_0,j)$ for some large $C_\star>0$.
	
	Now, suppose that $i_0 \neq j_0$.
	If $\sigma_{01,i_0 i_0} = \sigma_{02,i_0 i_0}$, we have
	\bea
	(a_{01,j_0 i_0} - a_{02,j_0 i_0})^2 &=& \frac{1}{\sigma_{01,i_0 i_0}^2} ( \sigma_{01, i_0 j_0} - \sigma_{02, i_0 j_0} )^2 \\
	&\ge& \frac{\sigma_{01,i_0 i_0}\sigma_{01,j_0j_0} + \sigma_{01,i_0 i_0}\sigma_{02,j_0j_0} }{\sigma_{01,i_0 i_0}^2} \, C_\star  \frac{\log(n\vee p)}{n} \\
	&=& \frac{\sigma_{01,j_0j_0} + \sigma_{02,j_0j_0} }{\sigma_{01,i_0 i_0}} \, C_\star  \frac{\log(n\vee p)}{n} .
	\eea
	Then, condition (A3$^\star$) holds for $(j_0, i_0)$ and some large $C_\star>0$.
	Similarly, if $\sigma_{01,j_0 j_0} = \sigma_{02,j_0 j_0}$, then condition (A3$^\star$) holds for $(i_0, j_0)$ and some large $C_\star>0$.
	Thus, we only need to consider the case $\sigma_{01,i_0 i_0} \neq \sigma_{02,i_0 i_0}$ and $\sigma_{01,j_0 j_0} \neq \sigma_{02,j_0 j_0}$.
	Without loss of generality, suppose $\sigma_{01,i_0 i_0} > \sigma_{02,i_0 i_0}$.
	Note that if $\sigma_{01,i_0 i_0}/\sigma_{02,i_0 i_0} > 1 + C \sqrt{\log(n\vee p)/n}$ for some large constant $C>0$,  condition (A3) is met as shown in the previous paragraph.
	If $1 < \sigma_{01,i_0 i_0}/\sigma_{02,i_0 i_0} \le 1 + C \sqrt{\log(n\vee p)/n}$ and $\sigma_{01,i_0 i_0}/\sigma_{02,i_0 i_0} \equiv 1 + \Delta_3$ with $\Delta_3>0$, we have
	\bea
	( a_{01,j_0 i_0} - a_{02,j_0 i_0} )^2
	&=& \Big( \frac{\sigma_{01,i_0 j_0}}{\sigma_{01,i_0 i_0}} - \frac{\sigma_{02,i_0 j_0}}{\sigma_{02,i_0 i_0}}  \Big)^2 \\
	&=& \Big( \frac{\sigma_{01,i_0 j_0}}{\sigma_{01,i_0 i_0}} - \frac{\sigma_{02,i_0 j_0}}{\sigma_{01,i_0 i_0}}\frac{\sigma_{01,i_0 i_0}}{\sigma_{02,i_0 i_0}}  \Big)^2 \\
	&=& \frac{1}{\sigma_{01,i_0 i_0}^2}\Big( \sigma_{01,i_0 j_0} - \sigma_{02,i_0 j_0} - \sigma_{02,i_0 j_0} \Delta_3   \Big)^2 \\
	&\ge& \frac{1}{\sigma_{01,i_0 i_0}^2}\Big( |\sigma_{01,i_0 j_0} - \sigma_{02,i_0 j_0}| - |\sigma_{02,i_0 j_0} \Delta_3|   \Big)^2 \\
	&\ge& \frac{1}{\sigma_{01,i_0 i_0}^2}\Big( |\sigma_{01,i_0 j_0} - \sigma_{02,i_0 j_0}| - |\sqrt{\sigma_{02,i_0 i_0} \sigma_{02,j_0 j_0}} \Delta_3|   \Big)^2 \\
	&\gtrsim& C_\star \big(\frac{\sigma_{01,j_0j_0}}{\sigma_{01,i_0 i_0}} + \frac{\sigma_{02,i_0 i_0}\sigma_{02,j_0j_0}}{\sigma_{01,i_0 i_0}^2} \big) \frac{\log(n\vee p)}{n}  \\
	&\ge& C_\star \big(\frac{\sigma_{01,j_0j_0}}{\sigma_{01,i_0 i_0}} + C'\frac{\sigma_{02,j_0j_0}}{\sigma_{02,i_0 i_0}} \big) \frac{\log(n\vee p)}{n}
	\eea
	for some large constant $C_\star>0$ and constant $C'>0$.
	Thus, condition (A3$^\star$) holds for $(j_0,i_0)$ and some large constant $C_\star>0$, and it completes the proof.
	
\end{proof}

\bibliographystyle{dcu}
\bibliography{two-sample-tests}

@article{cao2018two,
	title={Two-sample tests of high-dimensional means for compositional data},
	author={Cao, Yuanpei and Lin, Wei and Li, Hongzhe},
	journal={Biometrika},
	volume={105},
	number={1},
	pages={115--132},
	year={2018},
	publisher={Oxford University Press}
}

@article{gregory2015two,
	title={A two-sample test for equality of means in high dimension},
	author={Gregory, Karl Bruce and Carroll, Raymond J and Baladandayuthapani, Veerabhadran and Lahiri, Soumendra N},
	journal={Journal of the American Statistical Association},
	volume={110},
	number={510},
	pages={837--849},
	year={2015},
	publisher={Taylor \& Francis}
}

@article{xu2016adaptive,
	title={An adaptive two-sample test for high-dimensional means},
	author={Xu, Gongjun and Lin, Lifeng and Wei, Peng and Pan, Wei},
	journal={Biometrika},
	volume={103},
	number={3},
	pages={609--624},
	year={2016},
	publisher={Oxford University Press}
}

@article{khan2001classification,
	title={Classification and diagnostic prediction of cancers using gene expression profiling and artificial neural networks},
	author={Khan, Javed and Wei, Jun S and Ringner, Markus and Saal, Lao H and Ladanyi, Marc and Westermann, Frank and Berthold, Frank and Schwab, Manfred and Antonescu, Cristina R and Peterson, Carsten and others},
	journal={Nature medicine},
	volume={7},
	number={6},
	pages={673--679},
	year={2001},
	publisher={Nature Publishing Group}
}

@book{jeffreys1998theory,
	title={The theory of probability},
	author={Jeffreys, Harold},
	year={1998},
	publisher={OUP Oxford}
}

@article{he2018high,
	title={High-dimensional two-sample covariance matrix testing via super-diagonals},
	author={He, Jing and Chen, Song Xi},
	journal = {Statistica Sinica},
	year = {2018},
	volume={28},
	pages={2671--2696}
}

@article{zheng2017testing,
	title={Testing homogeneity of high-dimensional covariance matrices},
	author={Zheng, Shurong and Lin, Ruitao and Guo, Jianhua and Yin, Guosheng},
	journal = {Statistica Sinica},
	year = {2017},
	note = {Accepted}
}

@article{srivastava2010testing,
	title={Testing the equality of several covariance matrices with fewer observations than the dimension},
	author={Srivastava, Muni S and Yanagihara, Hirokazu},
	journal={Journal of Multivariate Analysis},
	volume={101},
	number={6},
	pages={1319--1329},
	year={2010},
	publisher={Elsevier}
}

@article{wang2019robust,
	title={Robust two-sample test of high-dimensional mean vectors under dependence},
	author={Wang, Wei and Lin, Nan and Tang, Xiang},
	journal={Journal of Multivariate Analysis},
	volume={169},
	pages={312--329},
	year={2019},
	publisher={Elsevier}
}

@article{srivastava2008test,
	title={A test for the mean vector with fewer observations than the dimension},
	author={Srivastava, Muni S and Du, Meng},
	journal={Journal of Multivariate Analysis},
	volume={99},
	number={3},
	pages={386--402},
	year={2008},
	publisher={Elsevier}
}

@article{bai1996effect,
	title={Effect of high dimension: by an example of a two sample problem},
	author={Bai, Zhidong and Saranadasa, Hewa},
	journal={Statistica Sinica},
	pages={311--329},
	year={1996},
	publisher={JSTOR}
}

@article{li2012two,
	title={Two sample tests for high-dimensional covariance matrices},
	author={Li, Jun and Chen, Song Xi},
	journal={The Annals of Statistics},
	volume={40},
	number={2},
	pages={908--940},
	year={2012},
	publisher={Institute of Mathematical Statistics}
}

@article{schott2007test,
	title={A test for the equality of covariance matrices when the dimension is large relative to the sample sizes},
	author={Schott, James R},
	journal={Computational Statistics \& Data Analysis},
	volume={51},
	number={12},
	pages={6535--6542},
	year={2007},
	publisher={Elsevier}
}

@article{cai2014two,
	title={Two-sample test of high dimensional means under dependence},
	author={Cai, T Tony and Liu, Weidong and Xia, Yin},
	journal={Journal of the Royal Statistical Society: Series B (Statistical Methodology)},
	volume={76},
	number={2},
	pages={349--372},
	year={2014},
	publisher={Wiley Online Library}
}

@article{zoh2018powerful,
	title={A Powerful Bayesian Test for Equality of Means in High Dimensions},
	author={Zoh, Roger S and Sarkar, Abhra and Carroll, Raymond J and Mallick, Bani K},
	journal={Journal of the American Statistical Association},
	volume={113},
	number={524},
	pages={1733--1741},
	year={2018},
	publisher={Taylor and Francis Ltd}
}

@article{shen2011shrinkage,
	title={Shrinkage-based regularization tests for high-dimensional data with application to gene set analysis},
	author={Shen, Yanfeng and Lin, Zhengyan and Zhu, Jun},
	journal={Computational Statistics \& Data Analysis},
	volume={55},
	number={7},
	pages={2221--2233},
	year={2011},
	publisher={Elsevier}
}

@article{tsai2009multivariate,
	title={Multivariate analysis of variance test for gene set analysis},
	author={Tsai, Chen-An and Chen, James J},
	journal={Bioinformatics},
	volume={25},
	number={7},
	pages={897--903},
	year={2009},
	publisher={Oxford University Press}
}

@article{baraud2002non,
	title={Non-asymptotic minimax rates of testing in signal detection},
	author={Baraud, Yannick and others},
	journal={Bernoulli},
	volume={8},
	number={5},
	pages={577--606},
	year={2002},
	publisher={Bernoulli Society for Mathematical Statistics and Probability}
}

@article{lee2018maximum,
author = {Kyoungjae Lee and Lizhen Lin and David Dunson},
title = {{Maximum pairwise Bayes factors for covariance structure testing}},
volume = {15},
journal = {Electronic Journal of Statistics},
number = {2},
publisher = {Institute of Mathematical Statistics and Bernoulli Society},
pages = {4384 -- 4419},
keywords = {Bayesian hypothesis test, covariance structure testing, modularization},
year = {2021},
doi = {10.1214/21-EJS1900},
}

@article{cai2011adaptive,
	title={Adaptive thresholding for sparse covariance matrix estimation},
	author={Cai, Tony and Liu, Weidong},
	journal={Journal of the American Statistical Association},
	volume={106},
	number={494},
	pages={672--684},
	year={2011},
	publisher={Taylor \& Francis}
}

@article{cai2011constrained,
	title={A constrained $\ell_1$ minimization approach to sparse precision matrix estimation},
	author={Cai, Tony and Liu, Weidong and Luo, Xi},
	journal={Journal of the American Statistical Association},
	volume={106},
	number={494},
	pages={594--607},
	year={2011},
	publisher={Taylor \& Francis}
}

@inproceedings{kolar2012marginal,
	title={Marginal Regression For Multitask Learning.},
	author={Kolar, Mladen and Liu, Han},
	booktitle={AISTATS},
	pages={647--655},
	year={2012}
}

@article{kevckic1971some,
	title={Some inequalities for the gamma function},
	author={Ke{\v{c}}ki{\'c}, Jovan D and Vasi{\'c}, Petar M},
	journal={Publications de l'Institut Math{\'e}matique},
	volume={11},
	number={31},
	pages={107--114},
	year={1971},
	publisher={Matemati{\v{c}}ki institut SANU}
}

@article{cai2013two,
	title={Two-sample covariance matrix testing and support recovery in high-dimensional and sparse settings},
	author={Cai, Tony and Liu, Weidong and Xia, Yin},
	journal={Journal of the American Statistical Association},
	volume={108},
	number={501},
	pages={265--277},
	year={2013},
	publisher={Taylor \& Francis}
}

@article{kass1995bayes,
	title={Bayes factors},
	author={Kass, Robert E and Raftery, Adrian E},
	journal={Journal of the American Statistical Association},
	volume={90},
	number={430},
	pages={773--795},
	year={1995},
	publisher={Taylor \& Francis Group}
}

\end{document}